\definecolor{grey}{rgb}{.7,.7,.7}
\newcommand{\grey}[1]{{\color{grey}#1}}
\newcommand{\hide}[1]{}
\newtheorem{proposition}{Proposition}
\newtheorem{definition}[proposition]{Definition}
\newtheorem{theorem}[proposition]{Theorem}
\newtheorem{lemma}[proposition]{Lemma}
\newtheorem{corollary}[proposition]{Corollary}
\newtheorem*{theorem*}{Theorem}
\theoremstyle{remark}
\newtheorem{example}[proposition]{Example}
\newcommand{\inter}{\circledast}
\newcommand{\lin}{\multimap}
\newcommand{\id}{\mathrm{id}}
\newcommand{\conf}[1]{\mathscr{C}(#1)}
\newcommand{\pconf}[1]{\mathscr{C}^+(#1)}
\newcommand{\imc}{\rightarrowtriangle}
\newcommand{\mconflict}{\!\!\xymatrix@C=10pt{~\!\!\ar@{~}[r]&~\!\!}\!\!}
\def\profto{\mathrel{\mkern3mu  \vcenter{\hbox{$\scriptscriptstyle+$}}%
                    \mkern-12mu{\to}}}
\newcommand{\iso}{\cong}
\newcommand{\sym}{\mathrel{\cong}}
\newcommand{\w}{\mathsf{w}}
\newcommand{\wit}{\mathsf{wit}}
\newcommand{\x}{\mathsf{x}}
\newcommand{\y}{\mathsf{y}}
\newcommand{\z}{\mathsf{z}}
\newcommand{\rep}[1]{\underline{#1}}
\def\pb#1{\save[]+<16 pt,0 pt>:a(#1)\ar@{pb{}}[]\restore}
\newcommand{\canon}{\mathsf{canon}}
\newcommand{\coll}[1]{\mathop{\smallint} #1}
\newcommand{\Rel}{\mathsf{Rel}}
\newcommand{\sconf}[1]{\mathscr{C}_{\!\sym}(#1)}
\newcommand{\spconf}[1]{\mathscr{C}^+_{\!\sym}(#1)}
\newcommand{\mset}[1]{\mathscr{M}_f(#1)}
\newcommand{\lbl}{\mathrm{lbl}}
\newcommand{\ind}{\mathrm{ind}}
\newcommand{\nsym}[1]{\mathscr{S}_{\!-}({#1})}
\newcommand{\ssym}[1]{\mathscr{S}({#1})}
\newcommand{\psym}[1]{\mathscr{S}_{\!+}({#1})}
\newcommand{\swit}{\text{$\sim$-$\wit$}}
\newcommand{\pswit}{\text{$\sim^{\!\!+}\!\!$-$\wit$}}
\newcommand{\wint}{\mathsf{int}}
\newcommand{\pswint}{\text{$\sim^{\!\!+}\!\!$-$\wint$}}
\def\acts{\curvearrowright}
\newcommand{\done}{\checkmark}
\newcommand{\swap}{\mathsf{sw}}
\newcommand{\qu}{\mathsf{q}}
\def\profto{\!\!\!\xymatrix@C-.75pc{\ar[r]|-{\! +\!} &}\!\!\! }
  \newcommand{\spanpl}[5]{{\xymatrix{
    & {#3}\ar@{_{(}->}[dl]_{#2}\ar[dr]^-{#4} &\\
      {#1} && {#5} 
  }}}
\def\mxxth{\mathsurround=0pt}
\def\openup{\afterassignment\xxpenup\dimenxx=}
\def\xxpenup{\advance\ltoeskip\dimenxx
  \advance\baselineskip\dimenxx \advance\ltoeskiplimit\dimenxx}
\newif\ifdtxxp
\def\displxxy{\global\dtxxptrue \openup1\jot \mxxth
  \everycr{\noalign{\ifdtxxp \global\dtxxpfalse
      \vskip-\ltoeskiplimit \vskip\normallineskiplimit
      \else \penalty\interdisplaylinepenalty \fi}}}
\def\displaylines#1{\displxxy
  \halign{\hbox to\displaywidth{$\hfil\displaystyle##\hfil$}\crcr
      #1\crcr}}
\newskip\mycntring \mycntring=0pt plus 1000pt minus 1000pt
\def\leqalignno#1{\displxxy \tabskip=\mycntring
  \halign to\displaywidth{\hfil$\displaystyle{##}$\tabskip=0pt
      &$\displaystyle{{}##}$\hfil\tabskip=\mycntring
      &\kern-\displaywidth\rlap{$##$}\tabskip=\displaywidth\crcr
      #1\crcr}}
\newcommand{\card}[1]{|#1|}
\def\pb#1{\save[]+<16 pt,0 pt>:a(#1)\ar@{pb{}}[]\restore}
\begin{document}
\title{Learning to Count up to Symmetry}
\author{Pierre Clairambault\\
Univ Lyon, EnsL, UCBL, CNRS,  LIP, F-69342, LYON Cedex 07, France}
\date{}
\maketitle
\begin{abstract}
In this paper we develop the theory of how to \emph{count}, in thin concurrent games,
the configurations of a strategy witnessing that it reaches a certain configuration
of the game. This plays a central role in many recent developments in concurrent
games, whenever one aims to relate concurrent strategies with weighted relational
models.

The difficulty, of course, is symmetry: in the presence of symmetry many
configurations of the strategy are, morally, different instances of the \emph{same},
only differing on the inessential choice of copy indices. How do we know which ones to
count? The purpose of the paper is to clarify that, uncovering many strange phenomena
and fascinating pathological examples along the way.

To illustrate the results, we show that a collapse operation to a simple weighted
relational model simply counting witnesses is preserved under composition,
provided the strategies involved do not deadlock.
\end{abstract}

\section{Introduction}

\emph{Thin concurrent games} \cite{cg2} are a complex but powerful setting for truly
concurrent game semantics; one of the latest iterations of a long line of work
\cite{AbramskyM99,MelliesM07,lics11} on game semantics
questioning the premise that a play should be a total chronological ordering.
They are very expressive, able to express various languages
both pure \cite{lics15} and stateful \cite{cg2}; including with various quantitative
aspects \cite{lics18,popl20}. One strength of concurrent games in general is the
clean link they offer with relational-like semantics: a strategy may (slightly
naively) be seen as a collection of points of the \emph{web} (in the sense of relational
semantics) enriched with causal information. This enables a clean connection with the
relational model, which served as basis \emph{e.g.} for Melli\`es' fully complete
model of linear logic \cite{ag4} (see also \cite{mall}).

Now, relational semantics as well can be enriched with quantitative information; this
is the basis for \emph{probabilistic coherence spaces} \cite{DanosE11}. Probabilistic
coherence spaces are obtained via a biorthogonality construction on top of the
relational model \emph{weighted} by elements of $\overline{\mathbb{R}_+}$, the
completion of non-negative reals $\mathbb{R}_+$ with a point at infinity. Instead of
merely relations, morphisms from set $A$ to set $B$ are then matrices
\[
(\alpha_{a, b})_{(a, b) \in A\times B} \in \overline{\mathbb{R}_+}^{A\times B}
\]
composed via the potentially infinite matrix multiplication formula
\begin{eqnarray}
(\beta \circ \alpha)_{a,c} &=& \sum_{b \in B} \alpha_{a, b} \cdot
\beta_{b,c}\,.\label{eq0}
\end{eqnarray}

Beyond real scalars, more generally one can construct a \emph{weighted relational
model} parametrized by certain semirings
\cite{DBLP:conf/lics/LairdMMP13}. Adding typing information one goes beyond
semirings, for instance the adequate model for the quantum $\lambda$-calculus of
\cite{PaganiSV14} uses weights from the category of finite dimensional Hilbert spaces
and completely positive maps. 

Above, we mentioned a \emph{collapse} from concurrent games to the relational model.
Does it hold with quantitative information? Such results appear in the
literature \cite{lics18, popl20} -- though we shall see in this paper that the
definition of this collapse in \cite{lics18} is not quite right. This seemingly
simple question holds some surprises. This is the question that this paper solves;
detailing the basis for part of \cite{popl20}, and identifying and correcting the
mistake in \cite{lics18}.

As a matter of fact, the difficulty is not in handling the \emph{weights}, but in
\emph{listing the right witnesses}: if \eqref{eq0} originates in a bijection between
witnesses, then provided this bijection preserves the weights (and it will be
generated in such a way that it does), it follows that adding weights is relatively
painless. On the other hand, coming up with the right notion of witnesses is really
hard. Indeed, in the presence of replication of resources, configurations in
strategies are countably duplicated, so it is meaningless to sum over all of those as
one does without symmetry. What are, then, the right witnesses? Symmetry classes of
configurations? Something else? In this paper we give the answer, and illustrate it
with a proof of a formula like \eqref{eq0} for a simple weighted relational model
simply counting witnesses.  

We shall see that the appealingly simple idea of \cite{lics18} to use
symmetry classes of configurations as witnesses is, in general, wrong. We give a more
refined notion of witnesses, taking advantage of the split of the symmetry into
\emph{positive} and \emph{negative} reindexings offered by thin concurrent games
\cite{cg2}. This lets us solve the problem, but with the cost of adding a new
condition to thin concurrent games called \emph{representability}, which states the
existence, for every symmetry class, of a \emph{canonical} representative on which
the symmetry decomposes neatly into a positive and negative parts.

\paragraph{Outline.}
The structure of the paper is as follows. In Section \ref{sec:prelim} we fix the
notations for thin concurrent games used in this paper and recall a few notions. In
Section \ref{sec:quant_coll} we give a technical explanation of the problem and its
difficulties. In Section \ref{sec:canon_repr} we introduce the new notions of
\emph{canonicity} and \emph{representability}. In Section \ref{sec:quant_coll_final}
we give the central contribution of the paper, the proof of \eqref{eq0}.
Finally, in Section \ref{sec:epilogue} we give a few ending remarks. 

\section{Preliminaries}
\label{sec:prelim}

\subsection{Notations and terminology}

In this paper, we assume some familiarity with concurrent games, and more precisely
with \emph{thin concurrent games} \cite{cg2}. Let us fix a few
conventions for notations and terminology.

By \emph{strategy} we will always mean $\sim$-strategy in the sense of \cite{cg2}. We
will sometimes refer to \emph{pre-$\sim$-strategies}, which must be understood as in
\cite{cg2}. If $\sigma : S \to A^\perp \parallel B$ is a strategy from $A$ to $B$, we
write $\sigma : A \stackrel{S}{\to} B$. We often use $x^S, y^S, \dots$ to range over
configurations of $S$, with $S$ as a superscript. If $x^S \in \conf{S}$, we take the
convention that 
\[
\sigma x^S = x^S_A \parallel x^S_B\,,
\]
in the paper we will use $x^S_A \in \conf{A}$ and $x^S_B \in \conf{B}$ without further
introduction.

If $A$ is a tcg, we write $\sym_A$ for its symmetry, and $\theta : x \sym_A y$ if
the bijection $\theta : x \simeq y$ is in $\sym_A$ -- in which case we say that
$\theta$ is \emph{a symmetry}. For $x, y \in \conf{A}$, we write $x \sym_A y$ for the
induced equivalence relation. We use similar notations for the positive and negative
sub-symmetries, with $\sym_A^+$ for the positive and $\sym_A^-$ for the negative.
We use for symmetries on strategies similar notations as for configurations. For
$\sigma : A \stackrel{S}{\to} B$, we often tag symmetries in $S$ with $S$, as in
$\varphi^S : x^S \sym_S y^S$. Then, we write $\varphi^S_A : x^S_A \sym_A y^S_A$ and
$\varphi^S_B : x^S_B \sym_B y^S_B$.

In diagrams, dotted lines signify immediate causal links
in the game, whereas $\imc$ means immediate causality in the strategy. If the
direction of causal links is unspecified (\emph{e.g.} with dotted lines with no arrow
head), then it must be read from top to bottom.

\subsection{Interaction and composition}
\label{subsec:prelim_inter}

Consider two strategies $\sigma : A \stackrel{S}{\to} B$ and $\tau : B
\stackrel{T}{\to} C$. 

Recall that their \emph{interaction}
\[
\tau \inter \sigma : T \inter S \to A \parallel B \parallel C
\]
has set $\conf{T \inter S}$ isomorphic to pairs $(x^S, x^T) \in \conf{S}
\times \conf{T}$ such that $x^S_B = x^T_B = x_B$, and which are \emph{causally
compatible}, in the sense that the induced bijection 
\[
x^S \parallel x^T_C \simeq x^S_A \parallel x_B \parallel x^T_C \simeq x^S_A \parallel
x^T
\]
is secured \cite{cg2}. We write $x^T \inter x^S \in \conf{T\inter S}$ for the
corresponding configuration; then:
\[
(\tau \inter \sigma)(x^T \inter x^S) = x^S_A \parallel x_B \parallel x^T_C\,.
\]

The composition
\[
\tau \odot \sigma : A \stackrel{T\odot S}{\to} C
\]
is obtained from the interaction through a hiding operation \cite{cg2}. We recall:

\begin{proposition}
The set $\conf{T\odot S}$ is isomorphic to the set of pairs $(x^S, x^T) \in \conf{S}
\times \conf{T}$ such that $x^S_B = x^T_B = x_B$, which are causally compatible and
\emph{minimal}, in the sense that if $y^S \subseteq x^S$ and $y^T \subseteq x^T$ are
matching and causally compatible, and 
\[
x^S_A \parallel x^T_C = y^S_A \parallel y^T_C\,,
\]
then $x^S = y^S$ and $x^T = y^T$.
If $x^S$ and $x^T$ are matching, causally compatible, and minimal, we write $x^T
\odot x^S \in \conf{T\odot S}$ for the corresponding configuration. We then have
\[
(\tau \odot \sigma)(x^T \odot x^S) = x^S_A \parallel x^T_C\,.
\]
\end{proposition}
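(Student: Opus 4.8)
The plan is to unfold the hiding operation and reduce to the standard description of the configurations of a projected event structure. Write $P = T \inter S$ for the interaction, equipped with $\tau \inter \sigma : P \to A \parallel B \parallel C$, and let $V \subseteq P$ be the set of \emph{visible} events, \ie those sent by $\tau \inter \sigma$ into $A \parallel C$; the complementary events, sent into $B$, are the synchronised ones. By definition of hiding, $T \odot S$ is the projection $P \downarrow V$: its events are $V$ with causality and consistency inherited from $P$, and its labelling is the restriction of $\tau \inter \sigma$ to $V$.

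I would then invoke the generic fact about projections. For any event structure $E$ and subset $V$ of its events, $x \mapsto x \cap V$ is a surjection from $\conf{E}$ onto $\conf{E \downarrow V}$, and each of its fibres has a least element: given $z \in \conf{E\downarrow V}$, this is the down-closure $[z]_E$ of $z$ computed inside $E$, for which one checks $[z]_E \in \conf{E}$ and $[z]_E \cap V = z$. Hence $x \mapsto x \cap V$ restricts to a bijection between $\conf{E\downarrow V}$ and the configurations of $E$ that are least in their fibre; and because the fibre through $x$ has a least element, being least in it is the same as being \emph{minimal} among configurations of $E$ with visible part $x \cap V$, and also the same as $x = [x \cap V]_E$.

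It remains to transport this along the recalled bijection $\conf{P} \cong \{(x^S, x^T) \text{ matching and causally compatible}\}$. Two bookkeeping facts are needed, both immediate from the definitions of interaction and securedness recalled above: first, inclusion of configurations of $P$ is componentwise, so a configuration of $P$ below $x^T \inter x^S$ is exactly $y^T \inter y^S$ for a matching causally compatible pair with $y^S \subseteq x^S$ and $y^T \subseteq x^T$ (a sub-pair of a secured bijection stays secured); second, the visible part of $x^T \inter x^S$ is $x^S_A \parallel x^T_C$, since $(\tau \inter \sigma)(x^T \inter x^S) = x^S_A \parallel x_B \parallel x^T_C$. With these in hand, ``minimal among configurations of $P$ sharing a fixed visible part'' is verbatim the minimality condition of the statement, so the generic bijection cuts down to one between $\conf{T \odot S}$ and the minimal pairs: a configuration $z$ of $T \odot S$ is sent to the matching pair underlying $[z]_P \in \conf{P}$, and a minimal pair $(x^S, x^T)$ back to $(x^T \inter x^S) \cap V$, which we denote $x^T \odot x^S$. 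The labelling formula then follows, since the labelling of $P \downarrow V$ is the restriction of $\tau \inter \sigma$: $(\tau \odot \sigma)(x^T \odot x^S) = x^S_A \parallel x^T_C$.

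There is no genuinely deep step; the only points demanding care are the two bookkeeping facts above --- that the configuration order on the interaction is componentwise and that securedness of a bijection is inherited by its restrictions --- together with the verification that the down-closure $[z]_P$ is again a configuration of $P$. These are routine consequences of the definitions, so I expect the argument to be short.
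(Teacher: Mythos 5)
Your proof is correct and follows essentially the same route as the paper's: unfold hiding as a projection of the interaction $P = T \inter S$ onto its visible events, and identify which configurations of $P$ are in bijection with configurations of $T \odot S$. The only cosmetic difference is the characterisation used: you phrase it as ``least element of the fibre over a fixed visible part'' (equivalently $x = [x \cap V]_P$), while the paper phrases it as ``all maximal events of $x^T \inter x^S$ are visible''. These are readily seen to coincide — a hidden maximal event can be dropped without changing the visible part, and conversely any strict sub-configuration with the same visible part must be missing some event whose $\leq_x$-maximal element in the complement is hidden and maximal in $x$ — so the two arguments amount to the same thing, with yours spelling out the projection facts the paper leaves implicit.
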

\begin{proof}
Direct from the definition. If a pair $(x^S, x^T)$ is matching and causally
compatible, then it is minimal iff $x^T \inter x^S$ has all its maximal events
visible (\emph{i.e.} in $A$ or $C$); and those are in one-to-one correspondence with
configurations of $T\odot S$.
\end{proof}

Interaction behaves like a cartesian product (restricted to the matching causally
compatible configurations), while composition has this additional minimality
assumption. We wish to get rid of minimality, since we wish to link to weighted
relational models, where (intuitively) a witness of the composition is a pair of
witnesses. This can be achieved:

\begin{definition}
Let $\sigma : S \to A$ be a strategy. 

A configuration $x \in \conf{S}$ is \emph{$+$-covered} iff all its maximal events
have positive polarity. We write $\pconf{S}$ for the set of $+$-covered
configurations of $\sigma$.
\end{definition}

By extension, we say that $x^T \inter x^S \in \conf{T\inter S}$ is $+$-covered iff
its maximal events are positive and write $x^T \inter x^S \in \pconf{T\inter S}$.
This notion is useful, because we have:

\begin{lemma}\label{lem:main_pcov}
Consider $\sigma : A \stackrel{S}{\to} B$ and $\tau : B \stackrel{T}{\to} C$ two
strategies. Then, there is a bijection
\[
\begin{array}{rcrcl}
\phi &:& \pconf{T\inter S} &\simeq& \pconf{T \odot S}\\
&& x^T \inter x^S &\mapsto& x^T \odot x^S
\end{array}
\]
such that if $(\tau \inter \sigma)(x^T \inter x^S) = x_A \parallel x_B \parallel x_C$,
then $(\tau \odot \sigma)(\phi(x^T \inter x^S)) = x_A \parallel x_C$.
\end{lemma}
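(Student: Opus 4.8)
The plan is to observe that $\phi$ is essentially the identity on underlying pairs $(x^S, x^T)$, so the content of the lemma lies entirely in matching up the $+$-covered configurations on the two sides. The key preliminary is that a $+$-covered configuration of the interaction is automatically \emph{minimal} in the sense of the Proposition. Indeed, in $\tau \inter \sigma : T \inter S \to A \parallel B \parallel C$ the events sent to $B$ are neutral, so if every maximal event of $x^T \inter x^S$ is positive then in particular none of them is neutral, i.e.\ all maximal events are visible; by the characterisation recalled in the proof of the Proposition, this is exactly minimality of $(x^S, x^T)$. Hence $x^T \odot x^S \in \conf{T\odot S}$ is well-defined, and we may set $\phi(x^T \inter x^S) = x^T \odot x^S$.

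The heart of the argument is then the following claim: for a minimal matching causally compatible pair $(x^S, x^T)$, the maximal events of $x^T \inter x^S$ and of $x^T \odot x^S$ coincide. Recall that $T \odot S$ is obtained from $T \inter S$ by hiding the neutral events, so that its events are the visible events of $T \inter S$ and its causal order is the restriction of the one on $T \inter S$. A maximal event of $x^T \inter x^S$ is visible by minimality, and it remains maximal once the order is restricted, so it is maximal in $x^T \odot x^S$. Conversely, suppose $e$ is maximal in $x^T \odot x^S$ but $e < f$ in $x^T \inter x^S$ for some $f$; taking (configurations being finite) a maximal event $g \geq f$ of $x^T \inter x^S$, minimality gives that $g$ is visible, so $e < g$ holds in $x^T \inter x^S$ with $e, g$ both visible, hence $e < g$ in $x^T \odot x^S$, contradicting maximality of $e$. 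Consequently $x^T \inter x^S$ is $+$-covered if and only if $x^T \odot x^S$ is.

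From this the lemma follows quickly. The assignment $\phi$ lands in $\pconf{T \odot S}$ by the equivalence just proved; it is injective because both $x^T \inter x^S$ and $x^T \odot x^S$ are determined by the pair $(x^S, x^T)$; and it is surjective since, given $y^T \odot y^S \in \pconf{T\odot S}$, the pair $(y^S, y^T)$ is minimal (being the data of a configuration of $T\odot S$), so $y^T \inter y^S$ is defined and $+$-covered by the equivalence, with $\phi(y^T \inter y^S) = y^T \odot y^S$. Finally, the displaying property is read off the Proposition: if $(\tau \inter \sigma)(x^T \inter x^S) = x_A \parallel x_B \parallel x_C$ then $x_A = x^S_A$ and $x_C = x^T_C$, and the Proposition gives $(\tau \odot \sigma)(x^T \odot x^S) = x^S_A \parallel x^T_C = x_A \parallel x_C$.

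The step I expect to be the main obstacle is the converse inclusion of maximal events in the claim above: in general an event that has become maximal after hiding need not have been maximal before, since it may lie below a purely internal event, and ruling this out is precisely where minimality of the pair is used. Everything else is bookkeeping on top of the Proposition and the definition of $+$-coveredness.
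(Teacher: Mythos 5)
Your proof is correct and follows the same essential route as the paper's, whose own argument only establishes that $+$-coveredness of the interaction forces minimality of the pair and then asserts the bijection. Your version carefully fills in what the paper leaves implicit --- most usefully the observation that for a minimal pair the maximal events of $x^T \inter x^S$ and $x^T \odot x^S$ coincide, which simultaneously gives well-definedness into $\pconf{T\odot S}$ and surjectivity.
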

\begin{proof}
If $x^T \inter x^S \in \pconf{T\inter S}$, then the pair $(x^S, x^T)$ is
automatically minimal: if not, then one can remove an event in $B$. But it must be
negative for either $\sigma$ or $\tau$, contradiction. So we may simply set $\phi(x^T
\inter x^S) = x^T \odot x^S \in \pconf{T\odot S}$.
\end{proof}

We have one last ingredient to introduce. One crucial difference between strategy
composition and composition in weighted relational models, is that strategies may
deadlock. This question is fairly well-explored; in particular in settings where we
have performed such a collapse \cite{lics18,popl20,mall}, we have done so under the
assumption that strategies satisfied a condition called \emph{visibility}, which
prevents deadlocks \cite{DBLP:phd/hal/Castellan17}. Describing visibility is beyond
the scope of this paper, but many of the results given here will be under the
assumption that certain strategies do not deadlock. Accordingly, we define: 

\begin{definition}\label{def:nodeadlock}
Strategies $\sigma : A \stackrel{S}{\to} B$ and $\tau : B \stackrel{T}{\to} C$ 
\emph{do not deadlock} iff for all 
$x^S \in \conf{S}$, $x^T \in \conf{T}$ and $\theta_B : x^S_B \sym_B x^T_B$, the
composite bijection
\[
x^S \parallel x^T_C 
\quad
\stackrel{\sigma \parallel x^T_C}{\simeq} 
\quad
x^S_A \parallel x^S_B \parallel x^T_C 
\quad
\stackrel{x^S_A \parallel \theta_B \parallel x^T_C}{\simeq}
\quad
x^S_A \parallel x^T_B \parallel x^T_C
\quad
\stackrel{x^S_A \parallel \tau^{-1}}{\simeq} 
\quad
x^S_A \parallel x^T
\]
is secured.
\end{definition}

This is, in particular, always the case when $\sigma$ and $\tau$ are visible. If
$\sigma$ and $\tau$ do not deadlock then we may forget the causal compatibility
condition in their interaction: configurations of the interaction correspond to
arbitrary matching pairs.

We do \emph{not} assume that all strategies considered do not deadlock. Throughout
the paper, we make it explicit when we consider this hypothesis.

\section{Towards a Quantitative Collapse}
\label{sec:quant_coll}

\subsection{Relational collapse and symmetry}

A game $A$ has a natural associated notion of \emph{position}, given
by the set of \emph{configurations} $\conf{A}$. Configurations inform the
relationship with relational-like semantics: if $A$ is a game arising from a type in
a \emph{linear type system}, then the \emph{web} (a set) interpreting this type in
relational semantics may be identified with a subset of
$\conf{A}$\footnote{Typically, in the presence of
Question/Answer labeling, those are the \emph{complete} configurations where every
question is answered -- but details do not matter for this paper.}. Likewise, a strategy 
\[
\sigma : A \stackrel{S}{\to} B
\]
induces a relation
$\coll \sigma = \{(x_A, x_B) \mid \exists x^S \in \conf{S}, \sigma x^S = x_A \parallel
x_B\} \in \Rel(\conf{A}, \conf{B})$.

With this definition, for any $\sigma : A \stackrel{S}{\to} B$ and $\tau
: B \stackrel{T}{\to} C$ we automatically have that
\[
\coll (\tau \odot \sigma) \subseteq (\coll \tau) \circ (\coll \sigma)
\]
and the other inclusion holds if $\sigma$ and $\tau$ do not deadlock.

This picture above is of course much simplified thanks to linearity. Without the
linearity assumption, the games considered need to carry a \emph{symmetry}. If $A$
arises from a type, then the corresponding web is no longer (a subset of) $\conf{A}$,
but (a subset of) $\sconf{A}$, the set of \emph{equivalence classes of configurations
under symmetry}. In particular, we have

\begin{lemma}
Consider $N$ a negative tcg. Then,
\[
\sconf{\oc N} \iso \mset{\sconf{N}}\,.
\]
where $\mset{X}$ is the set of \emph{finite multisets} of elements of set $X$.
\end{lemma}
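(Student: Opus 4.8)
The plan is to unwind the definition of the exponential $\oc N$ in thin concurrent games and exhibit the isomorphism $\sconf{\oc N} \iso \mset{\sconf{N}}$ explicitly. Recall that $\oc N$ is built so that its events are pairs $(i, e)$ where $i \in \mathbb{N}$ is a \emph{copy index} and $e$ an event of $N$, the causal order is inherited componentwise (events with distinct copy indices being incomparable), and the symmetry $\sym_{\oc N}$ consists of those bijections that are allowed to permute copy indices arbitrarily while acting as a symmetry of $N$ within each copy. Thus a configuration $x \in \conf{\oc N}$ decomposes canonically as a family $(x_i)_{i \in I}$ indexed by a finite set $I \subseteq \mathbb{N}$ of "active" copy indices, where each $x_i \in \conf{N}$ is nonempty (we only record indices that actually occur). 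This already gives a bijection between $\conf{\oc N}$ and the set of \emph{finite partial functions} $\mathbb{N} \pto \conf{N}$ with nonempty values, equivalently finite families of nonempty configurations of $N$.

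First I would make precise this "fibered" decomposition of configurations of $\oc N$, and similarly describe symmetries $\theta : x \sym_{\oc N} y$ as the data of a bijection $\pi$ between the index sets of $x$ and $y$ together with, for each $i$, a symmetry $\theta_i : x_i \sym_N y_{\pi(i)}$. The key point is then that two configurations $x = (x_i)_{i \in I}$ and $y = (y_j)_{j \in J}$ are symmetric in $\oc N$ if and only if there is such a reindexing bijection $\pi : I \simeq J$ matching them up to symmetry of $N$ copy-by-copy — which is exactly the statement that the multisets $\{\!\{\,\class{x_i} : i \in I\,\}\!\}$ and $\{\!\{\,\class{y_j} : j \in J\,\}\!\}$ of symmetry classes of $N$ coincide. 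So the map sending (the class of) $x$ to this multiset of classes is well-defined and injective; surjectivity is immediate since any finite multiset of classes of $N$ can be realized by choosing representatives and placing them at distinct copy indices. One should also check this bijection is natural/compatible with the relevant structure if that is needed downstream, but for the bare statement as phrased, the underlying-set bijection suffices.

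The main obstacle — really the only subtlety — is handling the nonemptiness/empty-configuration bookkeeping and making sure the symmetry of $\oc N$ is exactly "permute copies + symmetry within copies" with no extra identifications or constraints. Concretely: a configuration of $\oc N$ might formally assign the empty configuration to some copy indices, and these must be quotiented away before taking the multiset (otherwise one would get spurious multiplicities of the empty class); this is harmless because empty fibers carry no events. More delicately, one must confirm that the positive/negative split of the symmetry, and the fact that $N$ is negative, do not interfere: since $N$ is negative, the minimal events of each copy of $\oc N$ are negative, and the reindexing of copies is part of the negative sub-symmetry — but for $\sconf{-}$ we quotient by the \emph{full} symmetry, so this plays no role here and only becomes relevant later in the paper. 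I would also double-check the edge case $N$ with no configurations other than $\emptyset$, where both sides reduce to a point.

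Finally, I would remark that this lemma is the semantic incarnation of the standard fact $\oc$ on webs is the finite-multiset comonad, and that its proof is "morally routine" given the combinatorial description of $\oc N$ from \cite{cg2}; the only reason to spell it out is that it fixes the template — decompose a configuration of an exponential into a finite family over copy indices, then quotient by simultaneous reindexing — that recurs throughout the paper when we analyze witnesses in the presence of symmetry.
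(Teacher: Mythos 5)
Your proof is correct and takes the approach the paper clearly has in mind: the paper's own proof is the single word ``Straightforward,'' and your argument is the routine unwinding — decompose $x \in \conf{\oc N}$ as a finite family $(x_i)_{i\in I}$ of nonempty configurations of $N$ indexed by active copy indices, observe that a symmetry of $\oc N$ is exactly a permutation of copy indices together with per-copy symmetries of $N$ (this is confirmed by the way the paper later manipulates $\oc N$ in the representability proof), and conclude that quotienting by $\sym_{\oc N}$ yields precisely finite multisets of $\sym_N$-classes. Your remarks on empty-fiber bookkeeping and on negativity being needed only for $\oc N$ to be a well-defined tcg (not for the counting itself, since one quotients by the full symmetry) are apt and consistent with the paper.
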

\begin{proof}
Straightforward.
\end{proof}

We use $\x, \y,\dots $ as metavariables ranging over symmetry classes.

Above, $\oc$ stands for the AJM-style exponential described in Section 3.3.4 in
\cite{cg2}. Likewise, the reader familiar with relational semantics will recognize in
$\mset{X}$ the familiar exponential modality. This traces the path to
extend the links between game and relational semantics beyond the linear case: simply
correct the definition of $\coll \sigma$ by setting:
\[
\coll \sigma = \{(\x_A, \x_B) \in \sconf{A} \times \sconf{B} \mid \exists x^S \in
\conf{S}, x^S_A \in \x_A ~\&~x^S_B \in \x_B\}\,,
\]
where $\sigma x^S = x^S_A \parallel x^S_B$, a naming convention that we
shall adopt. If $x^S \in \conf{S}$ is such that $x^S_A \in
\x_A$ and $x^S_B \in \x_B$, we say that $x^S$ is a \textbf{witness} for $(\x_A,
\x_B)$ in $\coll \sigma$.

With this definition, it is immediate by
definition of composition of strategies that we retain $\coll(\tau \odot \sigma)
\subseteq \coll(\tau) \circ \coll(\sigma)$ for any strategies $\sigma : A
\stackrel{S}{\to} B$ and $\tau : B \stackrel{T}{\to} C$. 

\subsection{Synchronization up to symmetry}

More interesting is the reverse inclusion. Of course, the deadlock issue mentioned
above still applies. But something else is also going on: consider $\sigma : A
\stackrel{S}{\to} B$ and $\tau : B \stackrel{T}{\to} C$, and
\[
(\x_A, \x_B) \in \coll \sigma
\qquad
\qquad
(\x_B, \x_C) \in \coll \tau\,.
\] 

By definition, this means that there are $x^S \in \conf{S}$ and $x^T \in \conf{T}$
such that
\[
x^S_A \in \x_A,
\qquad
x^S_B \in \x_B\,,
\qquad
x^T_B \in \x_B\,,
\qquad
x^T_C \in \x_C\,.
\]

In particular, since we have $x^S_B \in \x_B$ and $x^T_B \in \x_B$ it follows that
there is a (non-unique)
\[
\theta : x^S_B \sym_B x^T_B\,,
\]
a symmetry on $B$. So the witnesses $x^S \in \conf{S}$ and $x^T \in \conf{T}$ might
not quite reach the same configuration of the game: typically, they might involve
completely distinct copy indices, and $\theta$ carries a reindexing from one to the
other. Independently of the deadlocks, if we wish to provide a witness $y \in
\conf{T\odot S}$ for $(\x_A, \x_C)$ in $\tau \odot \sigma$, we must in particular
find some $y^S \in \conf{S}$ and $y^T \in \conf{T}$ such that 
\[
y^S_A \in \x_A\,,
\qquad
y^S_B = y^T_B\,,
\qquad
y^T_C \in \x_C\,,
\]
matching on $B$ \emph{on the nose}. So starting from $x^S \in \conf{S}$ and $x^T \in
\conf{T}$, we must \emph{reindex} them until they match on $B$ on the nose.
Of course, this issue already arises in the process of constructing a game
semantics based on copy incides, to show that equivalence of (uniform)
strategies up to the choice of copy indices is stable under composition.

In thin concurrent games, the main tool to deal with it is the \emph{weak bipullback
property}:

\begin{lemma}[Weak bipullback property]\label{lem:weak_bipullback}
Let $\sigma : S \to A$ and $\tau : T \to A^\perp$ be pre-$\sim$-strategies. Let $x^S
\in \conf{S}$ and $x^T \in \conf{T}$ and $\theta : \sigma x^S \sym_A \tau x^T$, such
that the composite bijection
\[
x^S \stackrel{\sigma}{\simeq} \sigma x^S \stackrel{\theta}{\sym_A} \tau x^T
\stackrel{\tau}{\simeq} x^T
\]
is secured. Then, there are $y^S \in \conf{S}$ and $y^T \in \conf{T}$ causally
compatible, $\theta^S : x^S \sym_S y^S$ and $\theta^T : y^T \sym_T x^T$,
such that $\tau \theta^T \circ \sigma \theta^S = \theta$. Moreover, $y^S, y^T$ are
unique up to symmetry.
\end{lemma}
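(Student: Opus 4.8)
The plan is to reduce the weak bipullback property for $\sim$-strategies to the corresponding statement for the underlying event structures with polarity, which is essentially the pullback-up-to-symmetry construction at the heart of thin concurrent games \cite{cg2}. First I would recall that $\sigma : S \to A$ and $\tau : T \to A^\perp$ being pre-$\sim$-strategies means in particular that $S$ and $T$ are event structures with symmetry, and $\sigma, \tau$ preserve symmetry in the appropriate (lax/strong) sense. Given $x^S$, $x^T$ and $\theta : \sigma x^S \sym_A \tau x^T$ with the composite bijection $x^S \simeq x^T$ secured, the idea is to form, inside the interaction $S \inter T$ (or the pullback of $\sigma$ against $\tau$ along $\theta$), a configuration realizing this matching. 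Concretely, securedness of the composite bijection is exactly the condition that lets us build a configuration of the pullback event structure $S \wedge_\theta T$ whose projections are $x^S$ and $x^T$; I would invoke the standard fact that secured bijections between configurations correspond to configurations of the interaction.

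The core of the argument is then the following. The bijection $\theta$ need not make $x^S$ and $x^T$ agree on the nose on $A$, so one cannot directly take their interaction; instead one transports $x^T$ (or $x^S$) along $\theta$. Here I would use that $T$ carries a symmetry: because $\tau$ is a pre-$\sim$-strategy, any symmetry on $\tau x^T \subseteq A^\perp$ — in particular the one induced by $\theta$ — can be pulled back to a symmetry $\theta^T : y^T \sym_T x^T$ on $T$ for a suitable $y^T \in \conf{T}$, with $\tau \theta^T$ equal to (the relevant part of) $\theta$. This is the key structural input and is exactly the kind of "fibration of symmetries" property that pre-$\sim$-strategies are required to satisfy. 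After this reindexing, $y^T$ and $x^S$ agree on $A$ on the nose, and the securedness hypothesis (transported along the symmetry, which preserves securedness since symmetries are order-isomorphisms on configurations) guarantees that the pair $(x^S, y^T)$ is causally compatible; so we may take $y^S = x^S$, $\theta^S = \mathrm{id}$, and we are done with existence, with $\tau \theta^T \circ \sigma \theta^S = \theta$ by construction.

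For uniqueness up to symmetry, suppose $(y^S_1, y^T_1)$ and $(y^S_2, y^T_2)$ with symmetries $\theta^S_i : x^S \sym_S y^S_i$ and $\theta^T_i : y^T_i \sym_T x^T$ both witness the conclusion, i.e. $\tau \theta^T_i \circ \sigma \theta^S_i = \theta$ for $i = 1, 2$. Then $\theta^S_2 \circ (\theta^S_1)^{-1} : y^S_1 \sym_S y^S_2$ and $(\theta^T_2)^{-1} \circ \theta^T_1 : y^T_1 \sym_T y^T_2$ are symmetries, and one checks they are compatible over $A$, hence assemble into a symmetry of the interaction $y^T_1 \inter y^S_1 \sym y^T_2 \inter y^S_2$; this is precisely the statement that the witnessing pair is unique up to symmetry. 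The verification that the two component symmetries agree on $A$ is a direct diagram chase using $\tau \theta^T_i \circ \sigma \theta^S_i = \theta$ and the fact that $\sigma, \tau$ are maps of event structures with symmetry.

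The main obstacle I anticipate is the bookkeeping around which direction the symmetries go (note the asymmetry in the statement: $\theta^S : x^S \sym_S y^S$ but $\theta^T : y^T \sym_T x^T$) and making sure the reindexing step genuinely uses only the properties guaranteed for \emph{pre}-$\sim$-strategies — in particular that we never need $\theta$ itself to be a symmetry of $A$ restricted compatibly, only that its transport along the symmetry fibration of $T$ exists. A secondary subtlety is checking that securedness is preserved under the transport along $\theta^T$; this follows because a symmetry induces an order-isomorphism between the finite orders of causal dependency on the two configurations, so the acyclicity condition defining securedness is visibly stable. Once these points are handled carefully, everything else is routine.
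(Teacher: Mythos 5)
Your existence argument has a genuine gap at its central step. You propose to keep $y^S = x^S$, $\theta^S = \mathrm{id}$, and to obtain $y^T$ and $\theta^T$ by ``pulling back'' the symmetry $\theta$ along $\tau$, invoking a ``fibration of symmetries'' property of pre-$\sim$-strategies. No such property holds. A map of event structures with symmetry \emph{pushes forward} symmetries of $T$ to symmetries of $A^\perp$; it does not lift arbitrary symmetries of $A^\perp$ to $T$. The only lifting available for a pre-$\sim$-strategy is the one recorded in Lemma~\ref{lem:b4} (Lemma~B.4 of \cite{cg2}): a \emph{negative} symmetry $\theta_-$ on the game lifts, and even then it only lifts up to a positive correction, $\sigma\varphi = \theta_+ \circ \theta_-$, not on the nose. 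This restriction is exactly what thinness forces: positive (Player-side) copy indices are determined by the strategy, so one cannot reindex them from the outside and expect to stay inside $\conf{T}$. The given $\theta : \sigma x^S \sym_A \tau x^T$ is a general symmetry of $A$, in no way negative for $A^\perp$, so the lift you want does not exist in general, and the one-sided reindexing strategy cannot be made to work.

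The correct argument — and the one the paper gestures at when it says ``$\sigma$ and $\tau$ play against each other, each replacing Player copy indices with one they are prepared to play'' — is necessarily \emph{two-sided}. One constructs $y^S$ and $y^T$ simultaneously by induction along the causal order given by the securedness assumption. At each step, the next event to add is Opponent for exactly one of $\sigma$, $\tau$; $\sim$-receptivity is applied to that one to extend the matching, while the other side dictates the copy index. This is why the conclusion involves both a $\theta^S : x^S \sym_S y^S$ \emph{and} a $\theta^T : y^T \sym_T x^T$, with $\tau\theta^T \circ \sigma\theta^S = \theta$: neither side can absorb all of $\theta$ on its own. Taking $\theta^S = \mathrm{id}$ is not an available simplification. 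Your uniqueness sketch is reasonable in outline, but it depends on the existence step being set up correctly; once you do the two-sided induction, uniqueness up to symmetry falls out of thinness applied at each step (the positive correction is forced), which is cleaner than the a~posteriori diagram chase you describe. As a secondary remark, you should not need to separately argue that securedness is preserved under the transport — in the inductive construction one never transports a whole configuration, but rather extends the secured bijection one event at a time, so securedness is maintained by construction.
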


This appears as Lemma 3.23 in \cite{cg2}. The intuition is that $\sigma$ and $\tau$
play against each other, each replacing Player copy indices with one they are
prepared to play. By $\sim$-receptivity, $\tau$ must be receptive to a change in copy
indices made by Player, and reciprocally; so $y^S$ and $y^T$ may be constructed by
induction on the causal structure induced by the securedness assumption. If $\sigma :
A \stackrel{S}{\to} B$ and $\tau : B \stackrel{T}{\to} C$, and we have $x^S \in
\conf{S}$ and $x^T \in \conf{T}$ with
\[
\theta : x^S_B \sym_B x^T_B\,,
\]
we may apply the lemma above for $\sigma
\parallel C^\perp \to A^\perp \parallel B \parallel C^\perp$ and $A \parallel \tau :
A \parallel T \to A \parallel B^\perp \parallel C$. Provided some other argument
ensures the securedness assumption, then we obtain
\[
y^S \parallel y^T_C \in \conf{S \parallel C}
\qquad
\qquad
y^S_A \parallel y^T \in \conf{A \parallel T}
\]
matching on $B$; and so we have found an interaction
\[
y^T \inter y^S \in \conf{T\inter S}
\]
with $(\tau \inter \sigma)(y^T \inter y^S) = y^S_A \parallel y_B \parallel y^T_C$,
satisfying $y^S_A \in \x_A$ and $y^T_C \in \x_C$ thus providing through hiding the
desired witness for $(\x_A, \x_C) \in \coll (\tau \odot \sigma)$.

\subsection{Quantitative extension}

But the above is purely qualitative: if $\sigma : A \stackrel{S}{\to} B$ then the
collapse above lets us define which pairs $(\x_A, \x_B)$ are ``inhabited'' by
$\sigma$. This is sufficient in order to link game semantics with relational
semantics. But this is not sufficient if we want to reproduce this feat in the
presence of \emph{quantitative} information, such as probabilities or quantum
valuations.

For the purposes of this paper, let us say that we are now interested not in the
\emph{mere existence} of a witness $x^S \in \conf{S}$ such that $x^S_A \in \x_A$ and
$x^S_B \in \x_B$, but in \emph{counting} such witnesses. For reasons explained in
Section \ref{subsec:prelim_inter}, from now on we consider witnesses for $(\x_A,
\x_B)$ not merely those configurations $x^S \in \conf{S}$ such that $x^S_A \in \x_A$
and $x^S_B \in \x_B$; but those that are additionally $+$-covered, \emph{i.e.} we
have $x^S \in \pconf{S}$. 

From a strategy $\sigma : A \stackrel{S}{\to} B$, we want a
\emph{$\overline{\mathbb{N}}$}-weighted relation, \emph{i.e.} a function  
\[
\coll \sigma : \sconf{A} \times \sconf{B} \to \overline{\mathbb{N}}\,,
\]
where $\overline{\mathbb{N}} = \mathbb{N} \cup \{+\infty\}$, 
\emph{counting} the number of distinct witnesses for $(\x_A, \x_B)$. In that case,
for $\x_A \in \sconf{A}$ and $\x_B \in \sconf{B}$, write $(\coll \sigma)_{\x_A, \x_B}
\in \overline{\mathbb{N}}$ for the corresponding coefficient. 

In the spirit of weighted relations \cite{DBLP:conf/lics/LairdMMP13}, we then want to
prove that for all $\sigma : A\stackrel{S}{\to} B$ and $\tau : B \stackrel{T}{\to} C$
that do not deadlock, we have that for all $\x_A \in \sconf{A}$ and $\x_C \in
\sconf{C}$,   
\begin{eqnarray}
(\coll (\tau \odot \sigma))_{\x_A, \x_C} &=& \sum_{\x_B \in \sconf{B}} (\coll
\sigma)_{\x_A, \x_B} \times (\coll \tau)_{\x_B, \x_C}\,.\label{eq1}
\end{eqnarray}

The convergence of the sum on the right hand side is ensured by the fact
that we consider the completed natural numbers $\mathbb{N} \cup \{+\infty\}$ as in
the weighted relational model.

How might we, from $\sigma : A \stackrel{S}{\to} B$, extract the weighted
relation $\coll \sigma$? Intuitively, we need
\[
(\coll \sigma)_{\x_A, \x_B} = \card{\wit_\sigma(\x_A, \x_B)}
\]
where $\wit_\sigma(\x_A, \x_B)$ captures the \emph{witnesses} in $\sigma$ for
symmetry classes $\x_A \in \sconf{A}$ and $\x_B \in \sconf{B}$, and where $\card{X}$ 
simply computes the cardinal, taken to be $+\infty$ for $X$ infinite. Situations
where strategies carry additional weights, say probabilities or quantum valuations,
would be dealt with similarly. In any case, the first obstacle to overcome is then to
give a satisfactory definition of $\wit_\sigma(\x_A, \x_B)$. 

Of course counting all $x^S \in \pconf{S}$ such that $x^S_A \in \x_A$ and $x^S_B \in
\x_B$ makes no sense: there are almost always infinitely many of
them since \emph{e.g.} the construction $\oc N$ introduces countably many copy
indices. The definition of witnesses must take symmetry into account.

\paragraph{Symmetry classes.} The obvious candidate for witnesses, chosen in
\cite{lics18}, is:
\[
\wit_\sigma(\x_A, \x_B) = \{\x^S \in \spconf{S} \mid \forall x^S \in \x^S, x^S_A \in
\x_A ~\&~ x^S_B \in \x_B\}\,,
\]
\emph{i.e.} the symmetry classes of $+$-covered configurations mapping to
$\x_A \parallel \x_B$. This convincingly simple definition in fact hides a major
subtlety. Indeed, \eqref{eq1} hints at a bijection 
\[
\wit_{\tau \odot \sigma}(\x_A, \x_C) \iso \sum_{\x_B \in \sconf{B}}
\wit_\sigma(\x_A, \x_B) \times \wit_\tau(\x_B, \x_C)\,.
\]

This seems straightforward. 
Firstly, if $\z \in \wit_{\tau \odot \sigma}(\x_A, \x_C)$, then any choice $z
\in \z$ is $z = z^T \odot z^S \in \conf{T\odot S}$ and the symmetry classes
of its projections yield 
\[
\z^S \in \wit_\sigma(\x_A, \x_B)\,,
\qquad
\qquad
\z^T \in \wit_\tau(\x_B, \x_C)\,,
\]
for some $\x_B \in \sconf{B}$. These data are easily shown to be invariant under
the choice of $z$.

Reciprocally, if $\x^S \in \wit_\sigma(\x_A, \x_B)$ and $\x^T \in
\wit_\tau(\x_B, \x_C)$, we may take arbitrary $x^S \in \x^S, x^T \in \x^T$, and
via Lemma \ref{lem:weak_bipullback} find symmetric $y^S \in \x^S$ and $y^T \in \x^T$
agreeing on $B$ on the nose. We may then form $y^T \odot y^S \in \pconf{T\odot S}$ and
take its symmetry class in $\wit_{\tau \odot \sigma}(\x_A, \x_C)$. 

But one should not skip the details\footnote{We were guilty of that in
\cite{lics18}.}: we must show that this
construction only depends on the symmetry classes $\x^S$ and $\x^T$, not on the
specific choices $x^S \in \x^S$ and $x^T \in \x^T$ and the symmetry $\theta_B : x^S_B
\sym_B x^T_B$ used to link them. But surely, that must be true, right? 

Well, about that\dots~
It certainly was a surprise to us that the symmetry class
obtained through synchronization \emph{does} depend on the symmetry $\theta_B$.

\begin{example}\label{ex:ex1}
Consider the following games. Firstly, $A = \emptyset$ is the empty game. Secondly,
$C = (\oc \ominus)^\perp$ which has countably many Player moves written
$\done_{\grey{i}}$ for all $i \in \mathbb{N}$, all symmetric -- we adopt here a
convention followed throughout the paper: copy indices appear in grey, to distinguish
them from other indices.

Thirdly, consider the game $B = \oc_{HO}(\ominus \imc \oplus)$, where $\oc_{HO}$ is the
``HO exponential'' defined in Definition 2.24 with symmetries in Definition 2.27 in
\cite{cg2} (see also Proposition 3.3). This game has events, polarities and causal
dependency those pictured in: 
\[
\xymatrix@C=0pt@R=10pt{
&\ominus_{\grey{0}}
	\ar@{.}[dl]
	\ar@{.}[d]
	\ar@{.}[dr]
	\ar@{.}[drrr]
&&&&&&&
\ominus_{\grey{1}}
        \ar@{.}[dl]
        \ar@{.}[d]
        \ar@{.}[dr]
        \ar@{.}[drrr]
&&&&\dots&&&
\ominus_{\grey{i}}
        \ar@{.}[dl]
        \ar@{.}[d]
        \ar@{.}[dr]
        \ar@{.}[drrr]
&&&&\dots\\
\oplus_{\grey{0,0}}&
\oplus_{\grey{0,1}}&
\oplus_{\grey{0,2}}&\dots&
\oplus_{\grey{0,j}}&\dots&&
\oplus_{\grey{1,0}}&
\oplus_{\grey{1,1}}&
\oplus_{\grey{1,2}}&\dots&
\oplus_{\grey{1,j}}&\dots&&
\oplus_{\grey{i,0}}&
\oplus_{\grey{i,1}}&
\oplus_{\grey{i,2}}&\dots&
\oplus_{\grey{i,j}}&\dots
}
\]
with all finite sets consistent. Its symmetry comprises
all order-isomorphisms between configurations. Its positive symmetry comprises all
order-isomorphisms that preserve the initial (negative) move. Its negative symmetry
comprises all order-isomorphisms such that $\theta(\oplus_{\grey{i,j}}) =
\oplus_{\grey{i', j}}$ for some $i'\in \mathbb{N}$, \emph{i.e.} they preserve the $j$
component of the positive move. In practice, we will omit the first copy index for
the event in the second row, which is redundant with the immediate causal antecedent
of the event.

We now introduce two strategies $\sigma : A \stackrel{S}{\to} B$ and $\tau : B
\stackrel{T}{\to} C$ that we wish to compose, 
\begin{figure}
\begin{minipage}{.45\linewidth}
\[
\xymatrix@R=10pt@C=5pt{
&A&\stackrel{S}{\to}&B\\
&&&\ominus_{\grey{i}}
	\ar@{-|>}[dl]
	\ar@{-|>}[dr]\\
&&\oplus_{\grey{f(i)}}
	\ar@{.}@/^/[ur]
	\ar@{~}[rr]&&
\oplus_{\grey{g(i)}}
	\ar@{.}@/_/[ul]
}
\]
\caption{$\sigma : A \stackrel{S}{\to} B$}
\label{fig:sigma}
\end{minipage}
\hfill
\begin{minipage}{.45\linewidth}
\[
\xymatrix@R=10pt@C=0pt{
&B&&\stackrel{T}{\to} && C\\
\oplus_{\grey{0}}
	\ar@{-|>}[d]&&
\oplus_{\grey{h(i)}}
	\ar@{-|>}[d]&&&
\done_{\grey{k(i,j)}}\\
\ominus_{\grey{i}}
	\ar@{.}@/^/[u]
	\ar@{-|>}[urr]&&
\ominus_{\grey{j}}
	\ar@{-|>}[urrr]
	\ar@{.}@/^/[u]
}
\]
\caption{$\tau : B \stackrel{T}{\to} C$}
\label{fig:tau}
\end{minipage}
\end{figure}
represented on Figures \ref{fig:sigma} and \ref{fig:tau} where the functions $f, g,
h$, and $k$ are assumed injective, and $0$ is not in the codomain of $h$.
Note that the representation is symbolic: the diagrams must be understood by stating
that every positive move has one copy for each instantiation of the metavariables $i,
j \in \mathbb{N}$, with dependencies as indicated in the diagram. These copies are
compatible with each other. Finally, the symmetries comprise order-isomorphisms that
differ only by the value of the metavariables $i, j \in \mathbb{N}$. In particular,
the two moves in the conflicting branches of $\sigma$ are \emph{not} symmetric (that
would anyway contradict \emph{thinness}).

First, we compute the composition $\tau \odot \sigma$, and observe that it is:

\bigskip

\[
\xymatrix{
\done_{\grey{k(f(0), f(h(f(0))))}}
	\ar@{~}[r]
	\ar@{~}@/^1.5pc/[rr]
	\ar@{~}@/^2.5pc/[rrr]&
\done_{\grey{k(f(0), g(h(f(0))))}}
	\ar@{~}[r]
	\ar@{~}@/^1.5pc/[rr]&
\done_{\grey{k(g(0), f(h(g(0))))}}
	\ar@{~}[r]&
\done_{\grey{k(g(0), g(h(g(0))))}}
}
\]

There are four events, pairwise conflicting, reflecting the two non-deterministic
choices arising from the two calls to $\sigma$ -- one can read back which
non-deterministic choice gave rise to which result from the copy indices, but that is another
story. None of these events are symmetric: again, this would contradict thinness.

Now, let us define two configurations $x^S\in \conf{S}$ and $x^T \in \conf{T}$ as
\[
x^S = 
\raisebox{15pt}{$
\xymatrix@R=10pt@C=5pt{
\ominus_{\grey{0}}&
\ominus_{\grey{h(f(0))}}\\
\oplus_{\grey{f(0)}}
        \ar@{.}[u]&
\oplus_{\grey{g(h(f(0)))}}
        \ar@{.}[u]
}
$}
\qquad
\qquad
x^T = 
\raisebox{15pt}{$
\xymatrix@R=10pt@C=0pt{
\oplus_{\grey{0}}&
\oplus_{\grey{h(f(0))}}\\
\ominus_{\grey{f(0)}}
        \ar@{.}[u]&
\ominus_{\grey{g(h(f(0)))}}
        \ar@{.}[u]
}
$}
\parallel
\quad
\checkmark_{\grey{k(f(0), g(h(f(0))))}}
\]

These two configurations match on $B$ (and are causally compatible); and their
composition yields the configuration $\{\checkmark_{\grey{k(f(0), g(h(f(0))))}}\}$.
We of course obtain the same result if we synchonize them through the trivial
symmetry on their common interface:
\[
\id : x^T_B \sym_B x^T_B\,.
\]

But there is another endosymmetry on $x_B = x^S_B = x^T_B$, namely
\[
\swap : x^T_B \sym_B x^T_B\,,
\]
exhanging the two copies. Synchronizing $x^S$ and $x^T$ through $\swap$ via
Lemma \ref{lem:weak_bipullback} instead gives:
\[
\{\checkmark_{\grey{k(g(0), f(h(g(0))))}}\}
\]
which is not symmetric to $\{\checkmark_{\grey{k(f(0), g(h(f(0))))}}\}$ in $T\odot
S$. Indeed, intuitively, in $x^S$ we only have the information that there were two
calls to $\sigma$, with distinct non-deterministic resolutions. We do not know, just
by looking at $x^S$, which one is the ``first call'' and which one is the ``second
call''. The symmetry $\theta : x^S_B \sym_B x^T_B$ ``plugs'' the two calls in $x^T$
to their two non-deterministic resolutions in $x^S$. With $\id$ the first call
selected $\oplus_{\grey{f(i)}}$ and the second call $\oplus_{\grey{g(i)}}$, and the
other way around for $\swap$; leading to non-symmetric outcomes.
\end{example}

Well, this is puzzling. If the obvious candidate for a bijection between witnesses $z
\in \wit_{\tau \odot \sigma}(\x_A, \x_C)$ and pairs of witnesses $z^S \in
\wit_\sigma(\x_A, \x_B)$ and $z^T \in \wit_\tau(\x_B, \x_C)$ for some $\x_B$ does not
work, how can we hope to obtain \eqref{eq1}? This makes one wonder by what miracle
the weighted relational model works at all -- what does it really count?

\paragraph{Concrete witnesses.} To investigate this issue we introduce an
alternative, more concrete choice for witnesses. It is 
rooted in the following fact (Lemma 3.28 in \cite{cg2}):

\begin{lemma}\label{lemma:pos_symm}
Let $\sigma : S \to A$ be a pre-$\sim$-strategy on $A$, and
let $\theta : x \sym_{S} y$ such that $\sigma \theta \in {\sym_A^+}$.

Then, $x = y$ and $\theta = \id_x$.
\end{lemma}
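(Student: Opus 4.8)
The plan is to deduce Lemma~\ref{lemma:pos_symm} from the defining properties of thin concurrent games, most crucially \emph{thinness} itself, together with the decomposition of the symmetry of $S$ into positive and negative parts. Recall that thinness states, roughly, that whenever $\theta : x \sym_S y$ restricts to the identity on $x$ (i.e. $\theta$ is the identity bijection on the underlying sets, viewed as a symmetry), then $x = y$ and $\theta = \id_x$; equivalently, no nontrivial endosymmetry of a configuration fixes all events pointwise. So the real content is: if $\sigma\theta$ is a \emph{positive} symmetry of $A$, then $\theta$ must already be the identity bijection, at which point thinness finishes the job.

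First I would set up an induction on the size of $x$ (equivalently on the causal/covering structure of the configuration), using the fact that $\theta : x \sym_S y$ is an order-isomorphism, so it preserves polarities and the covering order. The base case is trivial. For the inductive step, pick a minimal event $s \in x$. It has some polarity. If $s$ is negative, then $\sigma s$ is negative in $A$, and since $\sigma\theta \in \sym_A^+$ is a positive symmetry it must fix $\sigma s$ — here I invoke the key structural property of $\sym_A^+$ for a tcg: a positive symmetry acts as the identity on negative events that are ``exposed'' in the appropriate sense (this is exactly the receptivity/thinness-style constraint built into the $+/-$ split). If $s$ is positive, then $\sigma s$ is negative in $A$ (polarities are dualised by $\sigma : S \to A$ when $\sigma$ is a strategy \emph{on} $A$ — I need to be careful about the variance convention here), so again $\sigma\theta$ fixes it. Either way $\theta$ fixes $s$, hence $\theta(s) = s$, and then I restrict $\theta$ to $x \setminus \{s\}$ and its image and apply the induction hypothesis.

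Actually, a cleaner route avoiding case analysis on individual events: use the general fact (part of the $\sim$-receptivity / thinness package of \cite{cg2}) that for a pre-$\sim$-strategy $\sigma : S \to A$, the positive symmetries of $S$ are exactly those $\theta$ such that $\sigma\theta$ is positive in $A$ \emph{and} $\theta$ is the identity on negative events of $S$; combined with the dual statement for negatives, one gets that $\sigma$ ``reflects'' the $+/-$ decomposition. If $\sigma\theta \in \sym_A^+$, then $\theta \in \sym_S^+$, so $\theta$ is the identity on negative events of $S$. But a positive symmetry that is the identity on negatives and whose image under $\sigma$ is positive is, by thinness of $S$ (which constrains positive endosymmetries to be trivial once they fix the negative ``skeleton''), forced to be the identity everywhere. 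This reduces everything to invoking the two axioms — the $+/-$ split behaviour of $\sigma$ and thinness of $S$ — in sequence.

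The main obstacle I anticipate is bookkeeping around conventions: the direction of polarity-dualisation in $\sigma : S \to A$ versus $\sigma : A \stackrel{S}{\to} B$, and pinning down exactly which clause of the (lengthy) definition of thin concurrent game in \cite{cg2} does the work — in particular whether ``thinness'' already says ``positive endosymmetries fixing negatives are trivial'' or whether one needs to combine thinness with $\sim$-receptivity to get there. I would expect the honest proof to be a short paragraph citing Lemma~3.28 of \cite{cg2} (which is precisely this statement) and recalling that it follows from the interplay of thinness and the positive-symmetry condition; since the excerpt itself attributes the lemma to \cite{cg2}, the paper will most likely just state it and refer there rather than reprove it.
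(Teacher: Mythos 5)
You correctly guess that the paper offers no proof: Lemma~\ref{lemma:pos_symm} is stated and attributed to Lemma~3.28 of \cite{cg2}, without being reproved.

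Since most of your answer is a speculative reconstruction, let me flag where the sketch slips. Your ``cleaner route'' invokes ``the positive symmetries of $S$''; a pre-$\sim$-strategy $\sigma : S \to A$ carries a single symmetry $\sym_S$, whereas the $+/-$ split lives on the \emph{game} $A$, so this route is not well-posed as stated. In the inductive sketch you say a positive symmetry of $A$ ``must fix'' $\sigma s$ for $s$ a minimal negative event; that is not a tcg axiom --- positive symmetries may permute negative events. The negative case is closed instead by a uniqueness argument, combining the extension properties of $\sym_A^+$ along negative events with the uniqueness clause of $\sim$-receptivity for $\sigma$ to force $\theta(s)=s$. You then dispatch the positive case by claiming $\sigma : S \to A$ dualises polarities; it does not --- $\sigma$ preserves polarities, the perp only appearing in the shorthand $\sigma : A \stackrel{S}{\to} B$ for $\sigma : S \to A^\perp \parallel B$, and even there $S$ matches the polarities of $A^\perp \parallel B$. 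So for a positive $s$ one cannot argue through $A$ at all: this is precisely where thinness of $\sigma$ (unique extension of a symmetry in $\sym_S$ by a positive event) is the indispensable ingredient forcing $\theta(s)=s$, and your sketch never actually uses it. The overall shape --- grow a matched prefix and show $\theta$ fixes each new event --- is the right one, but every step runs on a uniqueness axiom, not on positive symmetries ``fixing'' anything outright.
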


For this, the condition \emph{thin} plays a crucial role. Intuitively,
\emph{thinness} means that the strategy has a canonical choice of copy indices for
its moves, once Opponent fixes their choice of copy indices. Accordingly, the lemma
above may be interpreted as saying that 
provided we remain in the positive symmetry (\emph{i.e.} we do not
change Opponent's copy indices), then the choice of the \emph{concrete} configuration
$x \in \conf{S}$ is unique. This suggests that we might take $\wit_\sigma(\x_A,
\x_B)$ to range over concrete configurations of $S$ matching with the game up to
\emph{positive} symmetry -- of course, for that we need reference concrete
configurations of the game rather than symmetry classes. So let us fix a choice, for
any tcg $A$ and any symmetry class $\x_A \in \sconf{A}$, of a concrete representative
written $\rep{\x}_A \in \x_A$.

Our alternative definition of witnesses is, for $\sigma : A
\stackrel{S}{\to} B$, $\x_A \in \sconf{A}$ and $\x_B \in \sconf{B}$:
\[
\wit_\sigma^+(\x_A, \x_B) = \{x^S \in \pconf{S} \mid x^S_A \sym_A^- \rep{\x}_A ~\&~
x^S_B \sym_B^+ \rep{\x}_B\}
\]

It will turn out (see Section \ref{sec:epilogue}) that (even assuming
representability) these two notions of witnesses are \emph{not} equivalent:
the weighted relational model counts not symmetry classes, but concrete witnesses up
to positive symmetry. In the rest of this paper, we aim to prove (as mentioned above,
modulo one additional condition on games) that $\wit^+$, unlike $\wit$, does the
trick\footnote{An early sign that $\wit^+$ is better behaved is that unlike $\wit$,
it does not depend on the choice of the symmetry for $\sigma$ -- recall from Section A.1.2 in
\cite{cg2} that the symmetry is \emph{not} unique.}.
This will be quite the ride, so switch off
your phone, fasten your seat belt, as we must now embark on a journey into the
darkest corners of thin concurrent games. 

\section{Canonical configurations and representable games}
\label{sec:canon_repr}

\subsection{Canonical representatives of symmetry classes}

To motivate the development of this section, let us look at the definition just
above:
\[
\wit_\sigma^+(\x_A, \x_B) = \{x^S \in \pconf{S} \mid x^S_A \sym_A^- \rep{\x}_A ~\&~
x^S_B \sym_B^+ \rep{\x}_B\}\,.
\]

This definition depends on a choice of a representative $\rep{\x}_A$, once and for
all, for every symmetry class $\x_A$. Of course, the set of witnesses we obtain this
way depends on this choice: a different choice of representatives yields
configurations of $S$ where Opponent uses different copy indices. 
But what we really need for this definition to be of any use, is that the
\emph{cardinal} of $\wit_\sigma^+(\x_A, \x_B)$ should not depend on the representatives
$\rep{\x}_A, \rep{\x}_B$.

Bad news: it does.

\begin{example}
Remember the game $B = \oc_{HO}(\ominus \imc \oplus)$ of Example \ref{ex:ex1}.
Consider the strategy 
\[
\xymatrix@R=10pt@C=10pt{
\oplus_{\grey{0}}\ar@{-|>}[d]&
\oplus_{\grey{h(i)}}
	\ar@{-|>}[d]\\
\ominus_{\grey{i}}
	\ar@{.}@/^/[u]
	\ar@{-|>}[ur]&
\ominus_{\grey{j}}
	\ar@{.}@/^/[u]
}
\]
written $\sigma : S \to B^\perp$, which is $\tau$ of Example \ref{ex:ex1} without the
last move. 

Now, imagine that we fix as representative for a symmetry class
in $B^\perp$ the configuration:
\[
\rep{\x}_B = 
\raisebox{12pt}{$
\xymatrix@R=10pt@C=10pt{
\oplus_{\grey{1}}&
\oplus_{\grey{2}}\\
\ominus_{\grey{1}}
	\ar@{.}[u]&
\ominus_{\grey{2}}
	\ar@{.}[u]
}$}\,.
\]

Let us consider the configurations of $S$ matching $\rep{\x}_B$ up to positive
symmetry. First, a
configuration $x \in \conf{S}$ matching our requirements has four moves, and each
Player move has \emph{exactly one} successor. So it must have the following form, for
some $i, j \in \mathbb{N}$,
\begin{eqnarray}&&
\raisebox{17pt}{$
\xymatrix@R=10pt@C=10pt{
\oplus_{\grey{1}}\ar@{-|>}[d]&
\oplus_{\grey{h(i)}}
        \ar@{-|>}[d]\\
\ominus_{\grey{i}}
        \ar@{.}@/^/[u]
        \ar@{-|>}[ur]&
\ominus_{\grey{j}}
        \ar@{.}@/^/[u]
}$}\label{eq2}
\end{eqnarray}
and finding the witnesses for $\rep{\x}_B$ boils down to figuring out all possible
positive symmetries
\[
\theta : 
\raisebox{17pt}{$
\xymatrix@R=10pt@C=10pt{
\oplus_{\grey{1}}&
\oplus_{\grey{h(i)}}\\
\ominus_{\grey{i}}
        \ar@{.}[u]&
\ominus_{\grey{j}}   
        \ar@{.}[u] 
}$}
\sym_{B^\perp}^+
\raisebox{17pt}{$
\xymatrix@R=10pt@C=10pt{
\oplus_{\grey{1}}&
\oplus_{\grey{2}}\\
\ominus_{\grey{1}}
        \ar@{.}[u]&
\ominus_{\grey{2}}
        \ar@{.}[u]
}$}
\]

The positive symmetry of $B^\perp$ is the negative symmetry of $B$: it lets
us change the indices of minimal events, but the second component for positive events 
must be left unchanged. We may freely associate the minimal events either as
$\oplus_{\grey{1}} \leftrightarrow \oplus_{\grey{1}}$ and $\oplus_{\grey{h(i)}}
\leftrightarrow \oplus_{\grey{2}}$; or as $\oplus_{\grey{1}} \leftrightarrow
\oplus_{\grey{2}}$ and $\oplus_{\grey{h(i)}} \leftrightarrow \oplus_{\grey{1}}$. But
if we do the former, as the symmetry is positive it forces $i = 1$ and $j=2$.
Likewise, if we do the latter, it forces $i=2$ and $j=1$. So overall, there are
\emph{exactly two} configurations of $S$ matching $\rep{\x}_B$ up to positive
symmetry: 
\[
\xymatrix@R=10pt@C=10pt{
\oplus_{\grey{1}}\ar@{-|>}[d]&
\oplus_{\grey{h(1)}}
	\ar@{-|>}[d]\\
\ominus_{\grey{1}}
	\ar@{.}@/^/[u]
	\ar@{-|>}[ur]&
\ominus_{\grey{2}}
	\ar@{.}@/^/[u]
}
\qquad\qquad
\xymatrix@R=10pt@C=10pt{
\oplus_{\grey{1}}\ar@{-|>}[d]&
\oplus_{\grey{h(2)}}
	\ar@{-|>}[d]\\
\ominus_{\grey{2}}
	\ar@{.}@/^/[u]
	\ar@{-|>}[ur]&
\ominus_{\grey{1}}
	\ar@{.}@/^/[u]
}
\]

In particular, there are \emph{two} witnesses for $\rep{\x}_B$. This is confusing,
because these two configurations are symmetric in $S$, so we seem to be counting the
same symmetry class of $S$ twice -- and we shall see indeed that this is a
pathological example. 

In contrast, assume we pick as representative for $\x_B$ the following configuration:
\[
\rep{\x}'_B = 
\raisebox{12pt}{$
\xymatrix@R=10pt@C=10pt{
\oplus_{\grey{1}}&
\oplus_{\grey{2}}\\
\ominus_{\grey{1}}
	\ar@{.}[u]&
\ominus_{\grey{1}}
	\ar@{.}[u]
}$}\,.
\]

Now, there is only \emph{exactly one} configuration of $S$ matching
$\rep{\x}'_B$ up to positive symmetry:
\[
\xymatrix@R=10pt@C=10pt{
\oplus_{\grey{1}}\ar@{-|>}[d]&
\oplus_{\grey{h(1)}}
	\ar@{-|>}[d]\\
\ominus_{\grey{1}}
	\ar@{.}@/^/[u]
	\ar@{-|>}[ur]&
\ominus_{\grey{1}}
	\ar@{.}@/^/[u]
}
\]

Indeed, starting from \eqref{eq2}, the positive symmetry forces $i$ and $j$ to be
both $1$; and we obtain the unique configuration above. So the choice of $\rep{\x}_B$
affects the number of witnesses.
\end{example}

What is the moral of the story? This is subtle. Notice that while there is indeed
\emph{exactly one} configuration  $x \in \conf{S}$ matching $\rep{\x}'_B$ up to
positive symmetry, there are still \emph{two symmetries} $\theta : \sigma x
\sym_{B^\perp}^+ \rep{\x}'_B$, corresponding to $\{\oplus_{\grey{1}} \leftrightarrow
\oplus_{\grey{1}}, \oplus_{\grey{h(1)}} \leftrightarrow \oplus_{\grey{2}}\}$ and
$\{\oplus_{\grey{1}} \leftrightarrow \oplus_{\grey{2}}, \oplus_{\grey{h(1)}}
\leftrightarrow \oplus_{\grey{1}}\}$. So for $\rep{\x}_B$ we get \emph{two}
witnesses, and each has \emph{one} positive symmetry to $\rep{\x}_B$; while for
$\rep{\x}'_B$ we get \emph{one} witness, with \emph{two} positive symmetries. So the
mismatch between the representatives is explained if one factors in the number of
positive symmetries.

To comment further: there are two positive endo-symmetries
$\rep{\x}'_B \sym_{B^\perp}^+ \rep{\x}'_B$: the identity, and the swap between positive
events. In contrast, in $\rep{\x}_B$, swapping the positive events
\[
\raisebox{12pt}{$
\xymatrix@R=10pt@C=10pt{
\oplus_{\grey{1}}&
\oplus_{\grey{2}}\\
\ominus_{\grey{1}}
        \ar@{.}[u]&
\ominus_{\grey{2}}
        \ar@{.}[u]
}$}
\quad
\sym_{B^\perp}^+
\quad
\raisebox{12pt}{$
\xymatrix@R=10pt@C=10pt{
\oplus_{\grey{2}}&
\oplus_{\grey{1}}\\
\ominus_{\grey{1}}
        \ar@{.}[u]&
\ominus_{\grey{2}}
        \ar@{.}[u]
}$}
\]
while preserving Opponent indices cannot be achieved via an endosymmetry, this
requires changing the configuration. To
avoid such pathological cases, we must select $\rep{\x}_B$ such that the positive
symmetry whose effect is, intuitively, merely to swap (the copy indices of) two Player
events, still has $\rep{\x}_B$ as codomain. We do not have a definition capturing
exactly this, as it is not clear how to formalize this idea of the minimal symmetry
``swapping two Player events''. However, for our purposes the following definition
does the job. 

\begin{definition}\label{def:canonical}
Consider $A$ a tcg, and $x \in \conf{A}$. 

We say that $x$ is \textbf{canonical} iff any $\theta : x \sym_A x$ factors
uniquely as
\[ 
x \stackrel{\theta^-}{\sym_A^-} x \stackrel{\theta_+}{\sym_A^+} x\,,
\]
with in particular $x$ in the middle.
\end{definition}

So endo-symmetries of canonical configurations decompose as
endo-symmetries, positive and negative. Of course we already know that all
endosymmetries (like all symmetries) decompose as the composite of a positive and
a negative symmetries (see Lemma 3.19 of \cite{cg2}). But there is a priori no reason
why the decomposition should have the same configuration in the middle.
This is in fact not always the case: for instance, picking the problematic
configuration $\rep{\x}_B$ of the example above, we have the decomposition  
\[
\raisebox{12pt}{$
\xymatrix@R=10pt@C=10pt{
\oplus_{\grey{1}}\ar@{.}[d]&
\oplus_{\grey{2}}\ar@{.}[d]\\
\ominus_{\grey{1}}&
\ominus_{\grey{2}}
}$}
\qquad
\sym_{B^\perp}^-
\qquad
\raisebox{12pt}{$
\xymatrix@R=10pt@C=10pt{
\oplus_{\grey{1}}\ar@{.}[d]&
\oplus_{\grey{2}}\ar@{.}[d]\\
\ominus_{\grey{2}}&
\ominus_{\grey{1}}
}$}
\qquad
\sym_{B^\perp}^+
\qquad
\raisebox{12pt}{$
\xymatrix@R=10pt@C=10pt{
\oplus_{\grey{2}}\ar@{.}[d]&
\oplus_{\grey{1}}\ar@{.}[d]\\
\ominus_{\grey{2}}&
\ominus_{\grey{1}}
}$}
\]
where rather than drawing the symmetries, we suggest them by considering that they
preserve the position of events in the diagrams. If we wish to avoid the problem
mentioned above, we must project strategies only on canonical representatives of
symmetry classes. But for that, we need to be sure that such canonical representatives
always exist.

Of course, there is no free lunch: in the full generality of tcgs, that is not the
case\footnote{The following example is due to Marc de Visme.}.

\begin{example}
Consider the tcg $A$, with events, polarities, and causality and follows:
\[
\xymatrix@C=10pt@R=10pt{
\ominus_{\grey{1}}
	\ar@{.}[d]
	\ar@{.}[dr]&
\ominus_{\grey{2}}
	\ar@{.}[dl]
	\ar@{.}[d]\\
\oplus_{\grey{1}}&
\oplus_{\grey{2}}
}
\]

Its symmetry comprises all order-isomorphisms between configurations. The negative
symmetry has all order-isomorphisms included in one of the two maximal bijections
\[
\raisebox{15pt}{$
\xymatrix@R=10pt@C=10pt{
\ominus_{\grey{1}}
        \ar@{.}[d]
        \ar@{.}[dr]&
\ominus_{\grey{2}}
        \ar@{.}[dl]
        \ar@{.}[d]\\
\oplus_{\grey{1}}&
\oplus_{\grey{2}}
}$}
\sym^-_A
\raisebox{15pt}{$
\xymatrix@R=10pt@C=10pt{
\ominus_{\grey{1}}
        \ar@{.}[d]
        \ar@{.}[dr]&
\ominus_{\grey{2}}
        \ar@{.}[dl]
        \ar@{.}[d]\\
\oplus_{\grey{1}}&
\oplus_{\grey{2}}
}$}
\qquad
\qquad
\raisebox{15pt}{$
\xymatrix@R=10pt@C=10pt{
\ominus_{\grey{1}}
        \ar@{.}[d]
        \ar@{.}[dr]&
\ominus_{\grey{2}}
        \ar@{.}[dl]
        \ar@{.}[d]\\
\oplus_{\grey{1}}&
\oplus_{\grey{2}}
}$}
\sym^-_A
\raisebox{15pt}{$
\xymatrix@R=10pt@C=10pt{
\ominus_{\grey{2}}
        \ar@{.}[d]
        \ar@{.}[dr]&
\ominus_{\grey{1}}
        \ar@{.}[dl]
        \ar@{.}[d]\\
\oplus_{\grey{2}}&
\oplus_{\grey{1}}
}$}
\]
where again, the bijection matches those events in the corresponding position of the
diagram. Likewise, the positive symmetry has all order-isomorphisms included in one
of:

\[
\raisebox{15pt}{$
\xymatrix@R=10pt@C=10pt{
\ominus_{\grey{1}}
        \ar@{.}[d]
        \ar@{.}[dr]&
\ominus_{\grey{2}}
        \ar@{.}[dl]
        \ar@{.}[d]\\
\oplus_{\grey{1}}&
\oplus_{\grey{2}}
}$}
\sym^+_A
\raisebox{15pt}{$
\xymatrix@R=10pt@C=10pt{
\ominus_{\grey{1}}
        \ar@{.}[d]
        \ar@{.}[dr]&
\ominus_{\grey{2}}
        \ar@{.}[dl]
        \ar@{.}[d]\\
\oplus_{\grey{1}}&
\oplus_{\grey{2}}
}$}
\qquad
\qquad
\raisebox{15pt}{$
\xymatrix@R=10pt@C=10pt{
\ominus_{\grey{1}}
        \ar@{.}[d]
        \ar@{.}[dr]&
\ominus_{\grey{2}}
        \ar@{.}[dl]
        \ar@{.}[d]\\
\oplus_{\grey{1}}&
\oplus_{\grey{2}}
}$}
\sym^+_A
\raisebox{15pt}{$
\xymatrix@R=10pt@C=10pt{
\ominus_{\grey{1}}
        \ar@{.}[d]
        \ar@{.}[dr]&
\ominus_{\grey{2}}
        \ar@{.}[dl]
        \ar@{.}[d]\\
\oplus_{\grey{2}}&
\oplus_{\grey{1}}
}$}
\]
forming, altogether, a tcg. Then, the endosymmetry
\[
\raisebox{15pt}{$
\xymatrix@R=10pt@C=0pt{
\ominus_{\grey{1}}
	\ar@{.}[dr]&&
\ominus_{\grey{2}}
	\ar@{.}[dl]\\
&\oplus_{\grey{1}}
}$}
\qquad
\sym_A
\qquad
\raisebox{15pt}{$
\xymatrix@R=10pt@C=0pt{
\ominus_{\grey{2}}
        \ar@{.}[dr]&&
\ominus_{\grey{1}}
        \ar@{.}[dl]\\
&\oplus_{\grey{1}}
}
$}
\]
which is neither positive nor negative, uniquely factors as
\[
\raisebox{15pt}{$
\xymatrix@R=10pt@C=0pt{
\ominus_{\grey{1}}
        \ar@{.}[dr]&&
\ominus_{\grey{2}}
        \ar@{.}[dl]\\
&\oplus_{\grey{1}}
}$}
\qquad
\sym_A^-
\qquad
\raisebox{15pt}{$
\xymatrix@R=10pt@C=0pt{
\ominus_{\grey{2}}
        \ar@{.}[dr]&&
\ominus_{\grey{1}}
        \ar@{.}[dl]\\
&\oplus_{\grey{2}}
}
$}
\qquad
\sym_A^+
\qquad
\raisebox{15pt}{$
\xymatrix@R=10pt@C=0pt{
\ominus_{\grey{2}}
        \ar@{.}[dr]&&
\ominus_{\grey{1}}
        \ar@{.}[dl]\\
&\oplus_{\grey{1}}
}
$}
\]
which is not formed of endosymmetries. So this configuration is not canonical, but
its only symmetric $\{\ominus_{\grey{1}}, \ominus_{\grey{2}}, \oplus_{\grey{2}}\}$ is
not canonical either, for  the same reason.
\end{example}

Fortunately, no such pathological example arises in the games that (to our
knowledge) have found a use in semantics of logics and programming languages. Next we shall
propose the existence of a canonical representative as a new axiom for tcgs, and show
that it is preserved by all useful constructions on games.

\subsection{Representable games}

The axiom of 
\emph{representability} simply requires the existence of canonical representatives. 

\begin{definition}
Consider $A$ a tcg. 

We say that $A$ is \textbf{representable} iff for all $\x \in \sconf{A}$, there is
$\rep{\x} \in \x$ canonical.
\end{definition}

If $A$ is representable we may consider fixed in advance a choice, for every symmetry
class $\x \in \sconf{A}$, of a canonical representative $\rep{\x} \in \conf{A}$. 
For this to be a reasonable condition on tcgs, we must check that all the common
game constructions preserve representability.

\paragraph{Basic constructions.} First, we review the common game constructions that
have few interactions with the symmetry. Clearly, the empty game is representable. We
have:

\begin{lemma}
Consider $A, B$ representable tcgs. Then,
\[
\begin{array}{ll}
\text{\emph{(1)}} & \text{$A^\perp$ is representable,}\\
\text{\emph{(2)}} & \text{$A\parallel B$ is representable.}
\end{array}
\]
\end{lemma}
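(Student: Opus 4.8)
The two claims are about preservation of representability under dualization and parallel composition. In both cases the strategy is to reduce the existence of a canonical representative for a symmetry class of the compound game to the (assumed) existence of canonical representatives in the components, using the fact that the symmetry structure of $A^\perp$ and $A \parallel B$ is built in a transparent way from that of $A$ (resp. $A$ and $B$), together with the corresponding split into positive and negative sub-symmetries.

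For \emph{(1)}, the key observation is that $A^\perp$ has the same events and the same configurations as $A$, with reversed polarities, and crucially $\sym_{A^\perp} = \sym_A$ while the positive and negative sub-symmetries are swapped: $\sym_{A^\perp}^+ = \sym_A^-$ and $\sym_{A^\perp}^- = \sym_A^+$. Hence $\sconf{A^\perp} = \sconf{A}$, and for a class $\x$ I would simply take $\rep{\x}$ (canonical in $A$) and check it is canonical in $A^\perp$. Canonicity of $x$ in $A$ says any $\theta : x \sym_A x$ factors uniquely as $x \sym_A^- x \sym_A^+ x$; reading this in $A^\perp$, where $\sym_{A^\perp}^\pm$ are the $\sym_A^\mp$, the very same equation says $\theta$ factors uniquely as $x \sym_{A^\perp}^+ x \sym_{A^\perp}^- x$. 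But Definition~\ref{def:canonical} (via Lemma 3.19 of \cite{cg2}, which also guarantees a decomposition in that order) wants the factorization in the order negative-then-positive; so the one remaining point is to argue that a positive-then-negative endo-factorization through $x$ yields a negative-then-positive one through $x$. This follows because $\sym_A^+$ and $\sym_A^-$ are subgroupoids with $\sym_A^+ \cap \sym_A^- = \{\id\}$ and every symmetry factors \emph{both} ways (Lemma 3.19 of \cite{cg2} in each game), and a short diagram chase using uniqueness pins the middle object down to $x$ again. This symmetry-of-the-argument step, reconciling the order of the decomposition, is the only subtlety in \emph{(1)}.

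For \emph{(2)}, I use that $\conf{A \parallel B} \iso \conf{A} \times \conf{B}$, that $\sym_{A\parallel B}$, $\sym_{A\parallel B}^+$, $\sym_{A\parallel B}^-$ are all computed componentwise (a bijection $x_A \parallel x_B \simeq y_A \parallel y_B$ lies in the symmetry, resp. positive/negative symmetry, of $A\parallel B$ iff it restricts to bijections in the corresponding symmetry of $A$ and of $B$ — this is part of the definition of the parallel composition of tcgs in \cite{cg2}). Consequently $\sconf{A\parallel B} \iso \sconf{A} \times \sconf{B}$. Given $\z \in \sconf{A\parallel B}$, write it as $(\x, \y)$ and set $\rep{\z} = \rep{\x} \parallel \rep{\y}$ using the canonical representatives in $A$ and $B$. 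I then verify this is canonical: any $\theta : \rep{\x}\parallel\rep{\y} \sym_{A\parallel B} \rep{\x}\parallel\rep{\y}$ decomposes as $\theta_A \parallel \theta_B$ with $\theta_A : \rep{\x} \sym_A \rep{\x}$ and $\theta_B : \rep{\y} \sym_B \rep{\y}$ endosymmetries; apply canonicity of $\rep{\x}$ and of $\rep{\y}$ to factor each as a negative followed by a positive endosymmetry through the same configuration, then recombine componentwise to get the required factorization of $\theta$ through $\rep{\x}\parallel\rep{\y}$. Uniqueness of the factorization transfers from uniqueness in each component, again because positive and negative symmetries of $A\parallel B$ are determined componentwise.

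In both parts, the only real work — and the main obstacle — is bookkeeping around the \emph{order} and \emph{uniqueness} of the positive/negative decomposition: Definition~\ref{def:canonical} fixes the order negative-then-positive and demands uniqueness with the \emph{same} middle configuration, so one must be careful that the componentwise (resp. dual) manipulations do not silently permute the two factors or introduce a different middle object. Once one is comfortable that $\sym^\pm$ behaves functorially under $(-)^\perp$ (swapping) and under $\parallel$ (componentwise), both lemmas are essentially immediate; I expect the paper's proof to be a few lines each, which is consistent with the "straightforward" tone of the surrounding lemmas.
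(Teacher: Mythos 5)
Your proposal is correct and follows essentially the same route as the paper: for $A^\perp$ reuse $\rep{\x}$ and observe that the positive/negative sub-symmetries swap, and for $A\parallel B$ take componentwise canonical representatives and factor endosymmetries componentwise. The paper's proof of \emph{(1)} just asserts "the definition of canonical is symmetric"; your instinct that this needs a small argument is right, though the cleanest way to see it is by passing to inverses ($\theta\mapsto\theta^{-1}$ swaps the order of the two factors while preserving their polarities and the middle object $x$), which avoids the Lemma~3.19 diagram chase you sketch.
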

\begin{proof}
\emph{(1)} the dual exchanges $\sym_A^+$ and $\sym_A^-$ and the
definition of canonical is symmetric.

\emph{(2)} If $\x_A \parallel \x_B \in \sconf{A\parallel B}$, we simply set 
$\rep{\x_A \parallel \x_B} = \rep{\x}_A \parallel \rep{\x}_B$. Canonicity follows
directly from that of $\rep{\x}_A$ and $\rep{\x}_B$, exploiting the fact that any
endosymmetry  
\[
\theta \quad : \quad \rep{\x}_A \parallel \rep{\x}_B \quad \sym_{A\parallel B} \quad
\rep{\x}_A \parallel \rep{\x}_B 
\]
must have the form $\theta = \theta_A \parallel \theta_B$ for endosymmetries
$\theta_A : \rep{\x}_A \sym_A \rep{\x}_A$ and $\theta_B : \rep{\x}_B \sym_B
\rep{\x}_B$.
\end{proof}

The above are the game constructions used in the compact closed structure of thin
concurrent games. With similarly direct proofs, we cover all the frequent
constructions on tcgs that are essentially independent of symmetry: the shifts
$\uparrow A$  (resp. $\downarrow A$) which prefix the game $A$ with a new negative
(resp. positive) move (see \emph{e.g.} \cite{popl20}), the \emph{sum} $\sum_{i\in I}
A_i$ having all $A_i$ in pairwise conflict (see \emph{e.g.} \cite{popl20}), the
linear arrow $M \lin N$ of negative $M, N$ -- for all those, preservation of
representability is direct. What requires more care is the fact that the
constructions that introduce symmetry do indeed preserve representability. 

\paragraph{HO exponential.} We start with the Hyland-Ong style exponential. Recall
that it takes an \emph{arena} in the usual Hyland-Ong sense, \emph{i.e.} a forestial
partial order, without symmetry. We refer to \cite{cg2} for the definition of $\oc_{HO}
A$ for $A$ an arena and the associated notations.

We have the proposition:

\begin{proposition}
For $A$ any arena, $\oc_{HO} A$ is a representable thin concurrent game.
\end{proposition}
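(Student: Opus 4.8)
The plan is to exhibit, for every symmetry class $\x \in \sconf{\oc_{HO} A}$, an explicit canonical representative $\rep{\x}$ obtained by a recursive normalisation of copy indices, and then verify canonicity by unfolding the positive/negative decomposition of endosymmetries concretely.

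First I would recall the concrete description of $\oc_{HO} A$: a configuration $x \in \conf{\oc_{HO} A}$ is the same data as a finite forest whose nodes are decorated by moves of $A$ (compatibly with the forestial order of $A$) together with \emph{copy indices}, subject only to the condition that two siblings carrying the same move have distinct copy indices; and $x \sym_{\oc_{HO} A} y$ exactly when the two move-decorated forests are isomorphic (so $\sconf{\oc_{HO} A}$ is the set of such forests up to isomorphism). I would also record the shape of the sub-symmetries, as in Example \ref{ex:ex1}: $\sym^+$ consists of the reindexings fixing the copy indices of negative nodes, and $\sym^-$ of those fixing the copy indices of positive nodes. From this, endosymmetries of $x$ correspond bijectively to move-preserving automorphisms $\pi$ of the underlying forest of $x$ (such a $\pi$ need \emph{not} fix copy indices), and the $\sym^-$-then-$\sym^+$ factorisation of $\cite{cg2}$ (Lemma 3.19) of the endosymmetry induced by $\pi$ has a canonical middle configuration $z_\pi$, obtained by keeping, along each branch, the copy indices of positive nodes from $x$ while transporting those of negative nodes along $\pi$. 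Thus canonicity of $x$ reduces to the purely combinatorial assertion that $z_\pi = x$ for every move-preserving forest automorphism $\pi$.

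Next I would define $\rep{\x}$ by recursion on the forest underlying $\x$: recursively normalise the subtree rooted at each node, then for every node and every move $a$ of $A$ sort its $a$-labelled children by the normal form of their subtrees and give them the copy indices $0,1,2,\dots$ in that order. This is well defined — isomorphic subtrees receive identical normal forms, so the order in which ties are broken is irrelevant — and depends only on $\x$. The key property of this normal form, which I would isolate as a lemma, is that among the children of a fixed node carrying a fixed move, the copy index of a child determines its subtree \emph{up to equality} (not merely isomorphism): a forest automorphism $\pi$ preserves the multiset of subtree-isomorphism-types of these children, and the normal form then imposes the same sorted list of subtrees on the domain and codomain side.

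Finally, to check $z_\pi = \rep{\x}$ I would construct a map $\rho$ on nodes sending $n$ to the node whose branch of copy indices is exactly the branch computed for $n$ in $z_\pi$; existence of $\rho(n)$ follows by induction down the branch using the lemma above (at negative ancestors one lands on $\pi$-transported indices, at positive ancestors on the original ones, and in both cases a child with the required label and copy index exists and carries a subtree identical to the expected one), while injectivity of $\rho$ — hence bijectivity, the forest being finite — follows from distinctness of copy indices among same-move siblings together with the fact that $A$ is forestial, so the label along each branch is forced. Then $z_\pi = \{\,\rep{\x}\text{-event of }\rho(n)\mid n\,\} = \rep{\x}$, so every endosymmetry of $\rep{\x}$ factors through $\rep{\x}$; uniqueness of this factorisation is inherited from $\cite{cg2}$, or checked directly from the observation that a symmetry of $\oc_{HO} A$ which is simultaneously positive and negative is an identity. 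I expect the combinatorial heart of the previous paragraph to be the main obstacle: one must get the normalisation exactly right so that positive copy indices buried deep in the tree remain coherent when $\pi$ permutes negative nodes higher up — the naive choice of numbering siblings $0,1,\dots$ by an arbitrary local order fails precisely because cousin positive nodes may then be assigned different indices, which is the source of the pathologies exhibited above.
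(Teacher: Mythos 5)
Your proof is correct and takes essentially the same approach as the paper: the canonical form you construct is the same in substance, namely a configuration in which any siblings whose subtrees are isomorphic in fact carry \emph{identical} subtrees, so that the middle configuration of the positive/negative factorisation of any endosymmetry is $\rep{\x}$ itself. The only deviation is cosmetic: you also reassign copy indices of same-move siblings to $0,1,2,\dots$ after sorting, whereas the paper keeps the original indices $i^k_1,\dots,i^k_{p_k}$ and only replaces each $x^k_l$ by the common representative $x^k_1$; and your sketch of the canonicity check fills in roughly the detail the paper defers to its (formally identical) AJM argument.
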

\begin{proof}
Within this proof (and only), by $\oc A$ we mean $\oc_{HO} A$.
Those configurations $x\in \conf{\oc_{HO} A}$ with exactly
one initial move are entirely determined by:
\[
\begin{array}{ll}
\text{\emph{(1)}} & \text{their \emph{label} $\lbl(\min(x))$,}\\
\text{\emph{(2)}} & \text{their copy index $\ind(\min(x))$,}\\
\text{\emph{(3)}} & \text{for each $\min(x) \imc a$, the sub-configuration starting with $a$.}
\end{array}
\]

Any $x \in \conf{\oc A}$ with index $i$, label $a$ and
sub-configurations $x_1, \dots, x_n$ may be written
\[
x = i \cdot (\{x_1, \dots, x_n\} \lin a) \in \conf{\oc A}
\]
where each $x_i$ is written similarly, with a notation inspired from intersection
types. But then, using that similarly any $x_j$ is written $i_j \cdot (X_j \lin
a_j)$, we may rewrite $x$ as 
\[
i \cdot ((i_1\cdot x'_1, \dots, i_n \cdot x'_n) \lin a)
\]
where each $x'_j = X_j \lin a_j$. Going one step further, write
\[
x = i \cdot ((i^1_1 \cdot x^1_1, \dots, i^1_{p_1} \cdot x^1_{p_1}) \lin \dots \lin
(i^m_1 \cdot x^m_1, \dots, i^m_{p_m} \cdot x^m_{p_m}) \lin a)\,,
\]
regrouping sub-trees by symmetry classes. If $x$ is to be canonical, then for any
$1\leq k \leq m$, any $i^k_l$ and $i^k_{l'}$ should be swapped by an endosymmetry;
implying $x^k_l = x^k_{l'}$. So we set
\[
x' = i \cdot ((i^1_1 \cdot x^1_1, \dots, i^1_{p_1} \cdot x^1_1) \lin \dots \lin
(i^m_1 \cdot x^m_1, \dots, i^m_{p_m} \cdot x^m_1) \lin a)\,,
\]
which is symmetric to $x$ by construction; moreover if for all $1\leq k \leq m$,
$x^k_1$ is assumed canonical by induction hypothesis, then one may verify that $x'$
is canonical.
\end{proof}

We omit the details on that last verification, as it is the exact same reasoning as
for the AJM exponential, which we give more formally below.  Unsuprisingly, the proof
for AJM bears much in common with the one above. 
We started with HO as we believe that the more concrete nature of games obtained
through the HO exponential makes the
reasoning slightly more transparent: 
we wish to construct a configuration where any two moves with swappable
copy indices have the \emph{exact same} sub-trees below, so that the two copy indices
may be simply swapped leaving the remainder of the configuration unchanged.

\paragraph{AJM exponential.} The AJM exponential is our main 
source of non-trivial symmetries.

\begin{lemma}
Consider $N$ a representable \emph{negative} thin concurrent game, \emph{i.e.} all
its minimal events are negative.
Then, the thin concurrent game $\oc N$ is representable.
\end{lemma}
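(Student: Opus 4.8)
The plan is to mimic the proof for the HO exponential, but working with the more combinatorial description of configurations of $\oc N$ in terms of families indexed by copy indices. Recall from \cite{cg2} that a configuration $x \in \conf{\oc N}$ is essentially a family $(x_i)_{i \in I}$ for some finite $I \subseteq \mathbb{N}$, where each $x_i \in \conf{N}$; a symmetry in $\sym_{\oc N}$ consists of a bijection $\pi$ on the index set together with a family of symmetries $\theta_i : x_i \sym_N x_{\pi(i)}$; the positive sub-symmetry $\sym_{\oc N}^+$ forces $\pi = \id$ (the copy indices of the initial negative moves are fixed) with each $\theta_i \in \sym_N^+$; and the negative sub-symmetry $\sym_{\oc N}^-$ allows an arbitrary $\pi$ but requires each $\theta_i \in \sym_N^-$. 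Given $\x \in \sconf{\oc N}$, I will construct a canonical representative as follows. Pick any $x = (x_i)_{i \in I} \in \x$. Group the indices by the symmetry class of the $x_i$ in $\sconf{N}$: this partitions $I$ into blocks $I^1, \dots, I^m$, where $i, i'$ lie in the same block iff $x_i \sym_N x_{i'}$. For each block $I^k$, let $\rep{\y}^k \in \sconf{N}$ denote the common symmetry class and use representability of $N$ to pick the canonical representative $\rep{\rep{\y}^k} \in \rep{\y}^k$. Then set
\[
\rep{\x} = (x'_i)_{i \in I}, \qquad \text{where } x'_i = \rep{\rep{\y}^k} \text{ if } i \in I^k.
\]
Since each $x'_i \sym_N x_i$, we have $\rep{\x} \sym_{\oc N} x$, so $\rep{\x} \in \x$ as required.

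Next I must check that $\rep{\x}$ is canonical, \emph{i.e.} that any endosymmetry $\theta : \rep{\x} \sym_{\oc N} \rep{\x}$ factors uniquely as $\rep{\x} \sym_{\oc N}^- \rep{\x} \sym_{\oc N}^+ \rep{\x}$ with $\rep{\x}$ in the middle. Such a $\theta$ is given by a permutation $\pi$ of $I$ together with symmetries $\theta_i : x'_i \sym_N x'_{\pi(i)}$. The key observation is that $\pi$ must preserve the blocks: if $i \in I^k$ then $x'_i = \rep{\rep{\y}^k}$ is symmetric in $N$ to $x'_{\pi(i)}$, hence $x'_{\pi(i)}$ is in the class $\rep{\y}^k$, so $\pi(i) \in I^k$. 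Therefore within each block $\theta_i : \rep{\rep{\y}^k} \sym_N \rep{\rep{\y}^k}$ is an \emph{endo}symmetry, and by canonicity of $\rep{\rep{\y}^k}$ in $N$ it factors uniquely as $\rep{\rep{\y}^k} \sym_N^- \rep{\rep{\y}^k} \sym_N^+ \rep{\rep{\y}^k}$, say $\theta_i = \theta_i^+ \circ \theta_i^-$. Now define the factorization of $\theta$ at the level of $\oc N$ by taking the negative part to be $(\pi, (\theta_i^-)_{i})$ — legitimate since $\pi$ is an arbitrary permutation and each $\theta_i^- \in \sym_N^-$ — landing on the intermediate configuration $(x'_{\pi(i)})_i$, which equals $\rep{\x}$ because $\pi$ preserves blocks and $x'$ is constant on each block; and then the positive part to be $(\id, (\theta_i^+)_i)$, which is in $\sym_{\oc N}^+$ and has $\rep{\x}$ on both sides.

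The main obstacles are the uniqueness of this factorization and making sure the bookkeeping of index permutations is airtight. For uniqueness, suppose $\theta = \psi^+ \circ \psi^-$ is another factorization through $\rep{\x}$; writing $\psi^- = (\rho, (\psi_i^-)_i)$ and $\psi^+ = (\id, (\psi_i^+)_i)$, composing gives $\pi = \rho$ and $\theta_i = \psi_{\pi(i)}^+ \circ \psi_i^-$, and then uniqueness of the factorization in $N$ at each block forces $\psi_i^- = \theta_i^-$ and $\psi_i^+ = \theta_i^+$, as desired. One subtlety to handle carefully is that the intermediate configuration of $\psi^-$ is required to be \emph{on the nose} $\rep{\x}$, not merely symmetric to it; this is exactly where the fact that $x'$ is constant on blocks (so that reindexing within a block is the identity on configurations) is used, and it is the analogue of the ``same sub-trees below swappable copy indices'' phenomenon highlighted in the HO case. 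I would also remark that the argument is genuinely parallel to the HO proof, differing only in that the grouping is over the copy indices of the top-level $\oc$ rather than at every level of the arena — the nested structure of $\oc N$ being already absorbed into the inductive hypothesis that $N$ is representable.
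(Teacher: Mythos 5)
Your proof is correct and follows essentially the same route as the paper: partition the index set by the $\sym_N$-class of the components, replace each component with the canonical representative of its class so that the family is constant on blocks, and then split any endosymmetry blockwise using canonicity in $N$. You are in fact slightly more explicit than the paper in two places — spelling out the uniqueness of the factorization, and isolating the key point that the intermediate configuration of the negative part is $\rep{\x}$ on the nose \emph{because} the family is constant on blocks — both of which the paper leaves implicit.
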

\begin{proof}
Let $x \in \conf{\oc N}$, of the form $x =\,\parallel_{i\in I}
x_i$, where $x_i \in \conf{N}$. Let us partition $I$ as
\[
I = \biguplus_{k\in K} I_k
\]
such that for all $i, j \in I$, $x_i \sym_N x_j$ iff there is some $k\in
\mathbb{N}$ such that $i, j \in K$. For each $i\in I$, write $f(i) \in K$ for the
corresponding component. For each $k \in K$, fix some $g(k) \in I_k$. 

Now, fix $k \in K$. Since $N$ is representable, there is $x_{g(k)} \sym_N
\canon(x_{g(k)})$ with $\canon(x_{g(k)})$ canonical. Then for each $j \in I_k$
we replace $x_j$ with $\canon(x_{g(k)})$; or more formally we set
\[
x' =\,\parallel_{i\in I} \canon(x_{g(f(i))}) \in \conf{\oc N}\,.
\]

We clearly have $x \sym_{\oc N} x'$; indeed, for each $i\in I$, we have $x_i \sym_N
x_{g(f(i))} \sym_N \canon(x_{g(f(i))})$. Furthermore, $x'$ is canonical. Indeed,
writing $x'_i = \canon(x_{g(f(i))})$, consider now any symetry
\[
\theta ~~ : ~~ \parallel_{i\in I} x'_i ~~ \sym_{\oc N} ~~ \parallel_{i\in I} x'_i\,.
\]

By definition, there is $\pi : I \to I$ a permutation, and for all $i\in I$ a
symmetry $\theta_i : x'_i \sym_N x'_{\pi(i)}$. But by construction, this means that
we had $x_i \sym_N x_{\pi(i)}$ as well, so $i, \pi(i)$ belong to the same component
of the partition and $g(f(i)) = g(f(\pi(i)))$. Therefore, by construction, $x'_i =
x'_{\pi(i)}$. But $x'_i$ is canonical, so $\theta_i$ decomposes as 
\[
x'_i ~~ \stackrel{\theta_i^-}{\sym_N^-} ~~ x'_i ~~ \stackrel{\theta_i^+}{\sym_N^+} ~~
x'_i\,.
\]

Setting $\theta^-(i,e) = (\pi(i), \theta_i^-(e))$ and $\theta^+(i,e) = (i,
\theta^+_i(e))$, we have the required decomposition of $\theta$, showing that $x'$ is
canonical, as required.
\end{proof}

In the rest of this paper, we aim to make it explicit whenever this condition is
required.

\section{Quantitative collapse}
\label{sec:quant_coll_final}

By now we have added a new condition on games which eliminates some pathological
examples, and we have proved that this condition is preserved by all sensible
constructions on games. It remains to be seen whether this condition does solve the
problem at hand. 

\subsection{Actions of negative symmetries on strategies}
\label{subsec:neg_act}

Before we start, recall that for $A, B$ tcgs (which from now on will always be
assumed to be representable), $\sigma : A \stackrel{S}{\to} B$ a strategy and $\x_A
\in \sconf{A}$, $\x_B \in \sconf{B}$, we have set
\[
\wit_\sigma^+(\x_A, \x_B) = \{x^S \in \pconf{S} \mid x^S_A \sym_A^- \rep{\x}_A
~\&~x^S_B \sym_B^+ \rep{\x}_B\}\,,
\]
where $\rep{\x}_A$ and $\rep{\x}_B$ are the canonical representatives given by
representability of $A$ and $B$.

Our next step will be to investigate how negative symmetries act on witnesses. Our
starting point for that is the following lemma, Lemma B.4 in \cite{cg2}.

\begin{lemma}\label{lem:b4}
Consider $\sigma : S \to A$ a pre-$\sim$-strategy, $x^S \in \conf{S}$ and $\theta_- :
x^S_A \sym_A^- y_A$.
Then, there is a \emph{unique} $\varphi : x^S \sym_S y^S$ s.t. $\sigma \varphi =
\theta_+ \circ \theta_- : x^S_A \sym_A y^S_A$ for some $\theta_+ : y_A \sym_{A}^+
y^S_A$.
\end{lemma}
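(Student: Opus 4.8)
The plan is to separate uniqueness (a one-line consequence of Lemma~\ref{lemma:pos_symm}) from existence (a careful induction). For \emph{uniqueness}, suppose $\varphi : x^S \sym_S y^S$ and $\psi : x^S \sym_S z^S$ both satisfy the conclusion, with $\sigma\varphi = \theta_+\circ\theta_-$ and $\sigma\psi = \vartheta_+\circ\theta_-$ for positive $\theta_+,\vartheta_+$. Then $\psi\circ\varphi^{-1} : y^S \sym_S z^S$ is a symmetry of $S$ with $\sigma(\psi\circ\varphi^{-1}) = \vartheta_+\circ\theta_-\circ\theta_-^{-1}\circ\theta_+^{-1} = \vartheta_+\circ\theta_+^{-1} \in \sym_A^+$, since $\sym_A^+$ is closed under composition and inverse; so Lemma~\ref{lemma:pos_symm} forces $\psi\circ\varphi^{-1} = \id$, whence $y^S = z^S$, $\varphi = \psi$, and then also $\theta_+ = \vartheta_+$. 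Thus $\varphi$, and with it $y^S$ and $\theta_+$, is unique as soon as it exists.

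For \emph{existence} I would induct on $\card{x^S}$ (the case $x^S = \emptyset$ being trivial), peeling off a maximal event $s \in x^S$; set $a = \sigma(s)$ and $x_0^S = x^S\setminus\{s\} \in \conf{S}$. Local injectivity of $\sigma$ gives that $\sigma x_0^S = \sigma x^S \setminus \{a\}$ is a configuration of $A$, and restricting the order-isomorphism $\theta_-$ yields $\theta_0 : \sigma x_0^S \sym_A^- \theta_-(\sigma x_0^S)$ in $\sym_A^-$; the induction hypothesis then provides $y_0^S \in \conf{S}$, $\varphi_0 : x_0^S \sym_S y_0^S$ and positive $\theta_+^0 : \theta_-(\sigma x_0^S) \sym_A^+ \sigma y_0^S$ with $\sigma\varphi_0 = \theta_+^0\circ\theta_0$. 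It remains to extend $\varphi_0$ by one event matching $s$. If $a$ is negative (equivalently $s$ is, as $\sigma$ preserves polarity), then so is $\theta_-(a)$, and $(a,\theta_-(a))$ is a negative one-event extension of the symmetry $\sigma\varphi_0$ of $A$; by $\sim$-receptivity of $\sigma$ it lifts uniquely to an extension $\varphi : x^S \sym_S y^S$ of $\varphi_0$, with $y^S = y_0^S \cup \{t\}$ and $\sigma(t) = \theta_-(a)$. Taking $\theta_+$ to be $\theta_+^0$ extended by the identity pair on the negative event $\theta_-(a)$ (which remains positive, as positive symmetries fix negative events) gives $\sigma\varphi = \theta_+\circ\theta_-$ at once.

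The delicate case is $a$ positive, and this is exactly where thinness — via Lemma~\ref{lemma:pos_symm} — is genuinely used. Now $\sim$-receptivity gives nothing, and the copy index $\sigma$ will assign to the new event of $y^S$ is not under our control. The plan is to extend $\varphi_0$ to a symmetry $\varphi : x^S \sym_S y^S$, $y^S = y_0^S \cup \{t\}$, by path-lifting within $\sym_S$, and then to observe that the available such extensions all have $\sigma$-images agreeing up to a positive symmetry (this is Lemma~\ref{lemma:pos_symm}), so that one of them can be chosen for which $\theta_+ := (\sigma\varphi)\circ\theta_-^{-1}$ — a priori merely an element of $\sym_A$ — actually lies in $\sym_A^+$; concretely, one matches the positive move the strategy is willing to play over $y_0^S$ against a suitable positive extension of $\theta_+^0$, using thinness to make these two choices coherent. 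I expect this to be the main obstacle: ensuring the extension of $\varphi_0$ can be selected so that the configuration sitting in the middle of the decomposition of $\sigma\varphi$ is \emph{exactly} $y_A$, and not just some configuration in the same $\sym_A^-$-related position — precisely the rigidification that thinness, and nothing more formal, is there to provide.
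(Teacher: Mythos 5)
Your uniqueness argument is clean and correct: the reduction to Lemma~\ref{lemma:pos_symm} via $\psi \circ \varphi^{-1}$ is exactly right. The inductive skeleton for existence (peel off a maximal event, apply the induction hypothesis, extend by one event) is also the right shape. However, both cases of the extension step have problems.

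In the \emph{negative} case you set $\sigma(t) = \theta_-(a)$ and justify this with the parenthetical ``positive symmetries fix negative events.'' This is false in general for thin concurrent games. Consider an arena with shape $\ominus \imc \oplus \imc \ominus$: the positive symmetry of $\oc_{HO}$ of this arena permutes the copy indices of the $\oplus$-layer, which drags the nested negative events along with it. So $\theta_+^0$ need not be the identity on negatives, and the event to be added on the $y$-side is not $\theta_-(a)$. The correct move is to use the fact that positive symmetries extend uniquely along a negative extension of their domain (this is one of the tcg axioms): extend $\theta_+^0$ by the negative event $\theta_-(a)$ to get some $a'$, so that $\sigma\varphi_0$ extends in $\sym_A$ by the pair $(a,a')$, and then invoke $\sim$-receptivity of $\sigma$ on \emph{that} extension, giving $t$ with $\sigma(t)=a'$. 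Your $\theta_+$ is then the extension of $\theta_+^0$ by $(\theta_-(a),a')$, not by an identity pair.

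In the \emph{positive} case you identify the wrong obstacle and leave the argument unfinished. You worry about ``ensuring the configuration sitting in the middle of the decomposition of $\sigma\varphi$ is exactly $y_A$'' --- but if you define $\theta_+ := (\sigma\varphi)\circ\theta_-^{-1}$, its domain is $y_A$ by construction; there is nothing to rigidify. The actual content is elsewhere and is not hard: use the extension property of the isomorphism family $\sym_S$ to extend $\varphi_0$ by $s$ to some $\varphi : x^S \sym_S y^S$ (unique by Lemma~\ref{lemma:pos_symm} exactly as in your uniqueness argument, so no choice arises), and observe that $\theta_+ = (\sigma\varphi)\circ\theta_-^{-1}$ extends the positive symmetry $\theta_+^0$ by the pair $(\theta_-(a), \sigma\varphi(s))$, which consists of two positive events. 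The tcg axiom that $\sym_A^+$ is closed under extension by positive-only pairs then gives $\theta_+ \in \sym_A^+$ at once. So, contrary to your expectation, it is the \emph{negative} case that calls on the subtler receptivity machinery; the positive case is essentially a one-liner given the axioms. You need to fill in both of these before the existence half is complete.
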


This is our main tool to have negative symmetries act on strategies. 
If $x^S \in \conf{S}$ and $\theta_- : x^S_A \sym_A^- y_A$ presents
a change in Opponent's copy indices, we can make $\theta_-$ ``act on''
$x^S$: Player adapts to the change of Opponent copy indices and presents some
$\varphi : x^S \sym_S y^S$. 

It is tempting to invoke some group theory here. For any $x \in
\conf{A}$, we have three groups: the group $\ssym{x}$ of endosymmetries $\theta : x
\sym_A x$, the group $\psym{x}$ of \emph{positive} endosymmetries, and the group
$\nsym{x}$ of \emph{negative} endosymmetries. When applied to symmetry classes, as in
$\ssym{\x}$ for $\x \in \sconf{A}$, these operations mean $\ssym{\rep{\x}}$. Of
course, if $x \sym_A y$ then any $\theta : x \sym_A y$ provides an iso
between $\ssym{x}$ and $\ssym{y}$ by conjugation. Warning: if $x \sym_A y$ we
\emph{do not} necessarily have $\nsym{x}$ and $\nsym{y}$ isomorphic (and of course,
likewise for $\psym{-}$), so the notation $\nsym{\x}$ is borderline -- we insist that
it means $\nsym{\rep{\x}}$ and depends on the chosen representative. This shall
hopefully cause no confusion.

Now, for $\sigma : S \to A$ and $x_A \in \conf{A}$, it is
tempting to make $\nsym{x_A}$ act on the set
\[
X = \{x^S \in \conf{S} \mid \sigma x^S = x_A\}\,,
\]
but for $\theta_- \in \nsym{x_A}$ and $x^S \in X$, there is no
reason why the $\varphi : x^S \sym_S y^S$ obtained via Lemma \ref{lem:b4} would
satisfy $\sigma y^S = x_A$ and hence remain in $X$.

So we add a bit of wiggling room. For $\x_A \in \sconf{A}$, we define the set
\[
\pswit^+(\x_A) = \{(x^S, \theta_+) \mid x^S \in \pconf{S},~\theta_+ : x^S_A \sym_A^+ \rep{\x}_A \}
\]
of witnesses for $\x_A$ along with a specific choice of positive symmetry. Then we
indeed have:

\begin{proposition}\label{prop:act1}
Consider $A$ a tcg and $\x_A \in \sconf{A}$. There is a group action
\[
(\_ \acts \_) : \nsym{\x_A} \times \pswit^+(\x_A) \to \pswit^+(\x_A)\,,
\]
such that for all $(y^S, \psi_+) = \varphi_- \acts (x^S, \theta_+)$, there is $\phi^S :
x^S \sym_S y^S$ making the diagram
\[
\xymatrix{
x^S_A   \ar[rr]^{\theta_+}
        \ar[d]_{\phi^S_A}&&
\rep{\x}_A
        \ar[d]^{\varphi_-}\\
y^S_A   \ar[rr]_{\psi_+}&&
\rep{\x}_A
}
\]
commute.
\end{proposition}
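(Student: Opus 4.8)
The plan is to construct the action directly via Lemma~\ref{lem:b4}, then verify the group action axioms. Given $\varphi_- \in \nsym{\x_A}$, i.e.\ $\varphi_- : \rep{\x}_A \sym_A^- \rep{\x}_A$, and a pair $(x^S, \theta_+) \in \pswit^+(\x_A)$, I would first form the negative symmetry $\theta_+^{-1} \circ \varphi_- \circ \theta_+ : x^S_A \sym_A^- x^S_A$ — wait, this is not quite negative; instead, more carefully: the composite $\varphi_- \circ \theta_+ : x^S_A \sym_A \rep{\x}_A$ need not be negative. The right move is to transport $\varphi_-$ along $\theta_+$. Since $\theta_+ : x^S_A \sym_A^+ \rep{\x}_A$ is positive, its inverse is positive, and we can consider the negative symmetry $\theta_-' = \theta_+^{-1} \circ \varphi_- \circ \theta_+$ only if conjugation by a positive symmetry preserves negativity, which it does \emph{not} in general. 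So instead I apply Lemma~\ref{lem:b4} with the negative symmetry $\varphi_-^{-1}$ acting at $\rep{\x}_A$: but the starting point must be a configuration of $S$ whose image is $\rep{\x}_A$, which we do not have. The correct approach: use $\sim$-receptivity/Lemma~\ref{lem:b4} applied to $x^S$ and a negative symmetry out of $x^S_A$. Concretely, I would take the negative part of $\theta_+^{-1}$ composed appropriately; but $\theta_+^{-1}$ is positive. The cleanest route is: first change Opponent's indices \emph{at $x^S$}. We have $\theta_+ : x^S_A \sym^+_A \rep{\x}_A$ and $\varphi_- : \rep{\x}_A \sym^-_A \rep{\x}_A$. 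Decompose $\theta_+^{-1} \circ \varphi_-^{-1} \circ \theta_+ : x^S_A \sym_A x^S_A$ — no. I'll instead directly invoke Lemma~\ref{lem:b4} in the form: there is a negative symmetry out of $x^S_A$ obtained by pulling $\varphi_-$ back, but because pulling back a negative symmetry along a positive one need not stay negative, I genuinely need the weak-bipullback-style machinery. The resolution actually used in the paper is surely: apply Lemma~\ref{lem:b4} to $\sigma$, the configuration $x^S$, and a carefully chosen negative symmetry, then set $\psi_+$ to be the resulting positive completion composed with $\theta_+ \circ \varphi_-^{-1}$ or similar, with the diagram forcing the definition.

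So the refined plan: given $\varphi_- \in \nsym{\x_A}$ and $(x^S,\theta_+)$, consider the symmetry $\chi = \varphi_- \circ \theta_+ : x^S_A \sym_A \rep{\x}_A$ (not pure-polarity). By the canonical decomposition (Lemma~3.19 of \cite{cg2}), write $\chi = \chi_+ \circ \chi_-$ with $\chi_- : x^S_A \sym_A^- w$ and $\chi_+ : w \sym_A^+ \rep{\x}_A$ for some $w \in \conf{A}$. Apply Lemma~\ref{lem:b4} to $\sigma$, $x^S$ and $\chi_- : x^S_A \sym_A^- w$: this yields a unique $\phi^S : x^S \sym_S y^S$ with $\sigma\phi^S = \eta_+ \circ \chi_-$ for some $\eta_+ : w \sym_A^+ y^S_A$. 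Now set $\psi_+ = \chi_+ \circ \eta_+^{-1} : y^S_A \sym_A^+ \rep{\x}_A$ — this is a composite of positive symmetries hence positive — and define $\varphi_- \acts (x^S, \theta_+) = (y^S, \psi_+)$. One checks $y^S \in \pconf{S}$ because $\phi^S$ is a symmetry of $S$ and $+$-coveredness is symmetry-invariant. The commuting diagram holds by construction: $\psi_+ \circ \phi^S_A = \chi_+ \circ \eta_+^{-1} \circ (\eta_+ \circ \chi_-) = \chi_+ \circ \chi_- = \chi = \varphi_- \circ \theta_+$, which is exactly the required square.

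Next I would verify this is a well-defined group action. Well-definedness: the decomposition $\chi = \chi_+ \circ \chi_-$ and the choice of $w$ are not unique, but Lemma~\ref{lem:b4} guarantees $\phi^S$ (hence $y^S$) is \emph{unique} with the stated property relative to a fixed $\chi_-$; to get independence of the choice of $w$ I would show that any two decompositions give the \emph{same} $(y^S,\psi_+)$ by using uniqueness in Lemma~\ref{lem:b4} together with the fact that if $\chi = \chi_+\chi_- = \chi_+'\chi_-'$ then there is a positive $\rho$ with $\chi_-' = \rho\chi_-$ and $\chi_+ = \chi_+'\rho$, and applying Lemma~\ref{lem:b4}'s uniqueness absorbs $\rho$. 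Identity axiom: $\id_{\rep{\x}_A} \acts (x^S,\theta_+)$ takes $\chi = \theta_+$, which is already positive, so $\chi_- = \id_{x^S_A}$, $w = x^S_A$, Lemma~\ref{lem:b4} gives $\phi^S = \id_{x^S}$, $\eta_+ = \id$, $\psi_+ = \theta_+$; so we recover $(x^S,\theta_+)$. Compatibility with composition: for $\varphi_-' \circ \varphi_- \in \nsym{\x_A}$, I would chase the two constructions and use uniqueness in Lemma~\ref{lem:b4} to identify $(\varphi_-'\varphi_-)\acts(x^S,\theta_+)$ with $\varphi_-' \acts (\varphi_- \acts (x^S,\theta_+))$, pasting the two commuting squares vertically and invoking that the composite of negative symmetries is negative.

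The main obstacle I anticipate is precisely \textbf{well-definedness}: showing the output $(y^S,\psi_+)$ does not depend on the chosen polarity-decomposition $\chi = \chi_+\circ\chi_-$ of $\varphi_-\circ\theta_+$, nor on the intermediate configuration $w$. This requires a careful interplay between the non-uniqueness of the decomposition in Lemma~3.19 of \cite{cg2} and the uniqueness clause of Lemma~\ref{lem:b4}; the key lemma to isolate is that two negative symmetries $\chi_-, \chi_-'$ out of $x^S_A$ that differ by post-composition with a positive symmetry induce, via Lemma~\ref{lem:b4}, the \emph{same} Player reaction $\phi^S$. Once well-definedness is secured, the action axioms are routine diagram pastings using only that composites and inverses of positive (resp.\ negative) symmetries stay positive (resp.\ negative), plus the uniqueness in Lemma~\ref{lem:b4}.
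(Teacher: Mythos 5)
Your eventual construction coincides with the paper's: decompose $\varphi_-\circ\theta_+$ by Lemma~3.19 of \cite{cg2} into a negative $\Xi_-$ followed by a positive $\Xi_+$, feed $\Xi_-$ into Lemma~\ref{lem:b4} to get $\phi^S$ and the positive completion $\Omega_+$, and set $\psi_+ = \Xi_+\circ\Omega_+^{-1}$. So the substance is right. The one point to correct is your worry that the decomposition ``$\chi = \chi_+\circ\chi_-$ and the choice of $w$ are not unique'': Lemma~3.19 gives that factorisation \emph{uniquely}, so the well-definedness concern you spend most of your effort on is in fact vacuous. The paper sidesteps it anyway by proving the stronger and more useful statement that the output pair $(\phi^S,\psi_+)$ is the \emph{unique} one making the square commute: if $\varphi_1,\varphi_2$ both work, then $\sigma(\varphi_2\circ\varphi_1^{-1})$ is positive, hence by thinness (Lemma~3.28 of \cite{cg2}) $\varphi_1 = \varphi_2$. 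That uniqueness is what you should isolate, because once you have it, the identity and associativity axioms for the action follow immediately by pasting squares and appealing to uniqueness, rather than by the ad hoc tracking of $\chi_-$'s that you sketch.
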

\begin{proof}
Consider $(x^S, \theta_+) \in \pswit^+(\x_A)$ and $\varphi_- \in \nsym{\x_A}$. We show
that there is unique $\phi^S : x^S \sym_S y^S$ and $\psi_+ : y^S_A \sym_A^+
\rep{\x}_A$ making the following diagram commute:
\[
\xymatrix{
x^S_A	\ar[rr]^{\theta_+}
	\ar[d]_{\phi^S_A}&&
\rep{\x}_A
	\ar[d]^{\varphi_-}\\
y^S_A	\ar[rr]_{\psi_+}&&
\rep{\x}_A
}
\]

For existence, by Lemma 3.19 of \cite{cg2}, $\varphi_- \circ \theta_+ : x^S_A \sym_A
\rep{\x}_A$ factors uniquely as 
\[
\Xi_+ \circ \Xi_- : x^S_A \sym_A \rep{\x}_A\,.
\]

Next, by Lemma \ref{lem:b4}, there is $\phi^S : x^S \sym_S y^S$ such that we have
\[
\phi^S_A = \Omega_+ \circ \Xi_- : x^S_A \sym_A y^S_A
\]
for some $\Omega_+ : y_A \sym_A^+ y^S_A$. We then form $\psi_+ = \Xi_+ \circ
\Omega_+^{-1}$ to conclude.

For uniqueness, if we have $\varphi_1 : x^S \sym_S y^S$ and $\varphi_2 : x^S \sym_S
z^S$ satisfying the requirements, 
\[
\xymatrix@C=40pt{
y^S_A	\ar[r]^{(\sigma \varphi_1)^{-1}}
	\ar[d]_+&
x^S_A	\ar[d]_+
	\ar[r]^{\sigma\varphi_2}&
z^S_A	\ar[d]^+\\
\rep{\x}_A
	\ar[r]_{\varphi_-^{-1}}&
\rep{\x}_A
	\ar[r]_{\varphi_-}&
\rep{\x}_A
}
\]
commutes, so $(\sigma \varphi_2) \circ (\sigma \varphi_1)^{-1} = \sigma (\varphi_2
\circ \varphi_1^{-1})$ is positive, so by Lemma 3.28 of \cite{cg2} we have $\varphi_2
\circ \varphi_1^{-1} = \id$, so $\varphi_1 = \varphi_2$.
\end{proof}

Note that in the proof, we have actually not used the representability assumption.
However, it will come in to deduce a property useful for elaborate forms of the
collapse (namely, in the quantum case). For that, we need the following intermediate
lemma.

\begin{lemma}\label{lem:dec_canonical}
Consider $A$ a representable tcg, $\x_A \in \sconf{A}$, and $x \in \conf{A}$ s.t.
$x \sym_A^+ \rep{\x}_A$.

Then, any $\theta : x \sym_A \rep{\x}_A$ factors uniquely as $\theta_- \circ
\theta_+$, where $\theta_+ : x \sym_A^+ \rep{\x}_A$ and $\theta_- \in
\nsym{\rep{\x}_A}$. 
\end{lemma}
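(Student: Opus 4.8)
The plan is to reduce everything to the general positive/negative factorisation of symmetries (Lemma 3.19 of \cite{cg2}), and then use the hypothesis $x \sym_A^+ \rep{\x}_A$ together with the canonicity of $\rep{\x}_A$ to pin down the intermediate configuration of that factorisation. Concretely, Lemma 3.19 of \cite{cg2} says that every symmetry decomposes uniquely as a negative symmetry followed by a positive one. Applying this to $\theta^{-1} : \rep{\x}_A \sym_A x$ yields a \emph{unique} decomposition $\theta^{-1} = \gamma_+ \circ \gamma_-$ with $\gamma_- : \rep{\x}_A \sym_A^- v$ and $\gamma_+ : v \sym_A^+ x$ for some $v \in \conf{A}$. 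Inverting, $\theta = \gamma_-^{-1} \circ \gamma_+^{-1}$ with $\gamma_+^{-1} : x \sym_A^+ v$ and $\gamma_-^{-1} : v \sym_A^- \rep{\x}_A$; so the whole point is to show $v = \rep{\x}_A$, after which we set $\theta_+ = \gamma_+^{-1} : x \sym_A^+ \rep{\x}_A$ and $\theta_- = \gamma_-^{-1} \in \nsym{\rep{\x}_A}$.

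To prove $v = \rep{\x}_A$ I would exploit the hypothesis as follows. Since $v \sym_A^+ x$ and $x \sym_A^+ \rep{\x}_A$, we have $v \sym_A^+ \rep{\x}_A$, so fix some witnessing positive symmetry $\delta : v \sym_A^+ \rep{\x}_A$. Then $\delta \circ \gamma_- : \rep{\x}_A \sym_A \rep{\x}_A$ is an \emph{endo}symmetry of $\rep{\x}_A$, exhibited as the negative symmetry $\gamma_-$ (landing in $v$) followed by the positive symmetry $\delta$. By Lemma 3.19 of \cite{cg2} this negative-then-positive decomposition is the unique one, and by canonicity of $\rep{\x}_A$ it must have $\rep{\x}_A$ in the middle; hence $v = \rep{\x}_A$, which finishes the existence part.

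For uniqueness, suppose $\theta = \theta_- \circ \theta_+ = \theta_-' \circ \theta_+'$ with $\theta_+, \theta_+' : x \sym_A^+ \rep{\x}_A$ and $\theta_-, \theta_-' \in \nsym{\rep{\x}_A}$. Then $(\theta_-')^{-1} \circ \theta_- = \theta_+' \circ \theta_+^{-1}$ is an endosymmetry of $\rep{\x}_A$ that is simultaneously negative and positive; but such a symmetry is necessarily an identity. This last fact itself follows from the uniqueness in Lemma 3.19 of \cite{cg2}: a symmetry $\psi : a \sym_A a$ that is both negative and positive admits the two negative-then-positive factorisations $\id_a \circ \psi$ and $\psi \circ \id_a$, both with middle $a$, so $\psi = \id_a$. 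Hence $\theta_- = \theta_-'$ and $\theta_+ = \theta_+'$.

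The only genuinely delicate step is the identification $v = \rep{\x}_A$: the trick is to turn the ``half'' symmetry $\gamma_-$ into an actual endosymmetry of $\rep{\x}_A$ by post-composing with a positive symmetry $\delta$ supplied by the hypothesis $x \sym_A^+ \rep{\x}_A$, so that canonicity of $\rep{\x}_A$ can be applied. Everything else is routine bookkeeping using that $\sym_A^+$ and $\sym_A^-$ are sub-symmetries (closed under composition and inverse); and note that representability enters only through the canonicity of $\rep{\x}_A$.
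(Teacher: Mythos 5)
Your proof is correct and takes essentially the same route as the paper's: factor $\theta$ positive-then-negative via Lemma 3.19 of \cite{cg2}, turn the negative half into an endosymmetry of $\rep{\x}_A$ by post-composing with a positive symmetry supplied by the hypothesis $x \sym_A^+ \rep{\x}_A$, and then invoke canonicity to force the intermediate configuration to equal $\rep{\x}_A$. The only cosmetic differences are that the paper fixes one specific positive symmetry $\varphi : x \sym_A^+ \rep{\x}_A$ and works with $\varphi \circ \theta^{-1}$ (whereas you allow an arbitrary $\delta : v \sym_A^+ \rep{\x}_A$), and the paper leaves the uniqueness half implicit in the uniqueness clause of Lemma 3.19, which you instead spell out by reducing to the standard fact that a symmetry which is simultaneously positive and negative must be an identity.
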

\begin{proof}
Fix some $\varphi : x \sym_A^+ \rep{\x}_A$. Now, take $\theta : x \sym_A \rep{\x}_A$.
By Lemma 3.19 of \cite{cg2}, $\theta$ factors uniquely as $\theta_- \circ \theta_+$,
where $\theta_+ : x \sym_A^+ z$ and $\theta_- : z \sym_A^- \rep{\x}_A$ for some $z
\in \conf{A}$. But then, 
\[
\varphi \circ \theta^{-1} : \rep{\x}_A \sym_A \rep{\x}_A
\]
factors via $(\varphi \circ \theta_+^{-1}) : z \sym_A^+ \rep{\x}_A$ and
$\theta_-^{-1}  : \rep{\x}_A \sym_A^- z$, so $\rep{\x}_A = z$ follows since
$\rep{\x}_A$ is canonical. 
\end{proof}

For $A$ a tcg and $\x_A \in \sconf{A}$, we have previously defined
\[
\pswit^+(\x_A) = \{(x^S, \theta_+) \mid x^S \in \pconf{S},~\theta_+ : x^S_A \sym_A^+ \rep{\x}_A \}
\]
the set of witnesses for $\rep{\x}_A$ up to positive symmetry, \emph{along with} a
specific choice of positive symmetry $\theta_+ : x^S_A \sym_A^+ \rep{\x}_A$. We shall
now also consider the variation
\[
\swit^+(\x_A) = \{(x^S, \theta) \mid x^S_A \sym_A^+ \rep{\x}_A ~\&~ \theta : x^S_A
\sym_A \rep{\x}_A \}
\]
where we know that $x^S_A \sym_A^+ \rep{\x}_A$, but $\theta : x^S_A
\sym_A \rep{\x}_A$ may not be positive.

\begin{corollary}
Consider $A$ a representable tcg and $\x_A \in \sconf{A}$. Then, the function
\[
\begin{array}{rcrcl}
F&:&\swit^+(\x_A) &\to& \pswit^+(\x_A)\\
&&(x^S, \theta_- \circ \theta_+) &\mapsto& \theta_- \acts (x^S, \theta_+)
\end{array}
\]
is such that any $X \in \pswit^+(\x_A)$ has exactly $\card{\nsym{\rep{\x}_A}}$
antecedents.
\end{corollary}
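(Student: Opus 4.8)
The plan is to show that $F$ is well-defined, and then exhibit an explicit bijection between the fiber $F^{-1}(X)$ over a given $X = (y^S, \psi_+) \in \pswit^+(\x_A)$ and the group $\nsym{\rep{\x}_A}$.

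First I would check well-definedness of $F$. Given $(x^S, \theta) \in \swit^+(\x_A)$, we know $x^S_A \sym_A^+ \rep{\x}_A$, so by Lemma \ref{lem:dec_canonical} the symmetry $\theta : x^S_A \sym_A \rep{\x}_A$ factors \emph{uniquely} as $\theta = \theta_- \circ \theta_+$ with $\theta_+ : x^S_A \sym_A^+ \rep{\x}_A$ and $\theta_- \in \nsym{\rep{\x}_A}$. This is exactly the data needed to form $\theta_- \acts (x^S, \theta_+)$ via the action of Proposition \ref{prop:act1}, and uniqueness of the factorization means $F$ is a genuine function.

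Next, fix $X = (y^S, \psi_+) \in \pswit^+(\x_A)$ and consider the map $\nsym{\rep{\x}_A} \to \swit^+(\x_A)$ sending $\varphi_-$ to $(x^S, \psi_+' \circ \phi^S_A)$, where $(x^S, \psi_+') $ and $\phi^S : x^S \sym_S y^S$ are... wait, I need to go the right direction. Actually, for each $\varphi_- \in \nsym{\rep{\x}_A}$, I want a preimage of $X$ under $F$. The idea: $F(x^S, \theta) = X$ means $\theta_- \acts (x^S, \theta_+) = (y^S, \psi_+)$ where $\theta = \theta_- \circ \theta_+$. By Proposition \ref{prop:act1}, the action is such that $(y^S, \psi_+) = \theta_- \acts (x^S, \theta_+)$ is uniquely determined by $(x^S, \theta_+)$ and $\theta_- \in \nsym{\rep{\x}_A}$; moreover, since the action is by a group, $(x^S, \theta_+)$ is recovered as $\theta_-^{-1} \acts (y^S, \psi_+)$. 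So the map $\varphi_- \mapsto (x^S_{\varphi_-}, \theta_+^{\varphi_-})$, where $(x^S_{\varphi_-}, \theta_+^{\varphi_-}) = \varphi_-^{-1} \acts (y^S, \psi_+)$, together with $\theta := \varphi_- \circ \theta_+^{\varphi_-}$, yields an element of $\swit^+(\x_A)$; and one checks $F$ of it is $X$ by definition of $F$ and the group-action laws. Conversely, any antecedent $(x^S, \theta)$ of $X$ has a unique factorization $\theta = \theta_- \circ \theta_+$, and then $\theta_-^{-1} \acts (y^S, \psi_+) = (x^S, \theta_+)$ (using that $F(x^S,\theta) = \theta_- \acts (x^S, \theta_+) = X$ and the action is free enough — here one invokes the uniqueness clause established in the proof of Proposition \ref{prop:act1}), so the antecedent is the one produced from $\varphi_- = \theta_-$. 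This gives a bijection $\nsym{\rep{\x}_A} \simeq F^{-1}(X)$, hence $\card{F^{-1}(X)} = \card{\nsym{\rep{\x}_A}}$.

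The main obstacle I anticipate is the bookkeeping around injectivity: showing that distinct $\varphi_-, \varphi_-' \in \nsym{\rep{\x}_A}$ produce distinct antecedents, and dually that an antecedent determines its $\theta_-$ uniquely. Both hinge on the uniqueness statement inside the proof of Proposition \ref{prop:act1} (which rests on Lemma \ref{lemma:pos_symm}/thinness): namely that $\phi^S$ and $\psi_+$ in the action diagram are uniquely determined, so that the assignment $(\varphi_-, (x^S,\theta_+)) \mapsto \varphi_- \acts (x^S, \theta_+)$ together with the recorded positive symmetry lets one read back $\varphi_-$. Concretely, if $F(x^S, \theta_-\circ\theta_+) = F(x^S, \theta_-'\circ\theta_+') = (y^S,\psi_+)$ with the \emph{same} underlying $x^S$ forced (as it must be, since $x^S = \pi_1(\theta_-^{-1} \acts X) = \pi_1(\theta_-'^{-1}\acts X)$ only after we know $\theta_- = \theta_-'$...), there is a small circularity to untangle. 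I would resolve it by noting the two $x^S$'s need not a priori be equal; rather I argue directly that $(x^S,\theta) \mapsto \theta_-$ (the negative part of the unique factorization) composed with $\varphi_- \mapsto$ (antecedent) is the identity on $\nsym{\rep{\x}_A}$, and that $F$ is injective on each "slice" $\{(x^S,\theta) : \theta_- \text{ fixed}\}$ trivially since such a slice maps to a single point — so the fiber over $X$ is in bijection with the set of possible negative parts, which is all of $\nsym{\rep{\x}_A}$ by the surjectivity argument above. Packaging this cleanly, rather than the individual commuting diagrams, is where the care is needed.
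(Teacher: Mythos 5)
Your proof is correct and takes essentially the same route as the paper's (which is much terser): well-definedness of $F$ from the unique factorization in Lemma \ref{lem:dec_canonical}, and the fiber count as a direct consequence of the group action of Proposition \ref{prop:act1}, with the negative part of the factorization serving as the inverse map $F^{-1}(X) \to \nsym{\rep{\x}_A}$. The ``small circularity'' you worry about is not actually there: the map $(x^S,\theta) \mapsto \theta_-$ is defined on \emph{all} of $\swit^+(\x_A)$ by Lemma \ref{lem:dec_canonical} without any reference to $X$, so it and $\varphi_- \mapsto \bigl(\pi_1(\varphi_-^{-1}\acts X),\, \varphi_- \circ \pi_2(\varphi_-^{-1}\acts X)\bigr)$ are seen to be mutually inverse directly from the two uniqueness clauses, with no bootstrapping needed.
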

\begin{proof}
The definition of $F$ makes use of the decomposition of all symmetries $\theta :
x^S_A \sym_A \rep{\x}_A$ offered by Lemma \ref{lem:dec_canonical}, using canonicity
of $\rep{\x}_A$. The statement on the number of antecedents is an immediate
consequence of the group action of Proposition \ref{prop:act1}.
\end{proof}

The reader might not immediately see the point; in fact we will not use this to
establish \eqref{eq1}, but it fits in this paper as it is required for more elaborate
versions of this construction, in particular in the presence of quantum valuations
\cite{popl20}.

\subsection{Quantitative synchronization up to symmetry}

Let us fix for this section two strategies $\sigma : A \stackrel{S}{\to}
B$ and $\tau : B \stackrel{T}{\to} C$.

\paragraph{Witnessing strategies and interactions.} 
We write elements of $\pswit_\sigma^+(\x_A, \x_B)$ as triples $(\theta^A_-, x^S,
\theta^B_+)$; as an alias for $(x^S, \theta^A_- \parallel \theta^B_+) \in
\pswit^+_\sigma(\x_A \parallel \x_B)$. Two witnesses
\[
(\theta^A_-, x^S, \theta^B_+) \in \pswit_\sigma^+(\x_A, \x_B)\,,
\qquad
(\Omega^B_-, x^T, \Omega^C_+) \in \pswit_\tau^+(\x_B, \x_C)\,,
\]
are causally compatible iff the composite bijection (see Definition
\ref{def:nodeadlock}) is secured. We write 
\[
\pswit_\sigma^+(\x_A, \x_B) \bullet \pswit_\tau^+(\x_B, \x_C)
\]
for the set of causally compatible pairs $(\w_\sigma, \w_\tau) \in \pswit_\sigma^+(\x_A,
\x_B) \times \pswit_\tau^+(\x_B, \x_C)$.

To accompany our notions of witnesses for strategies we
shall need to provide witnesses for interactions. If $\x_A \in \sconf{A}$, $\x_B \in
\sconf{B}$ and $\x_C \in \sconf{C}$, we write
\[
\wint^+_{\tau \inter \sigma}(\x_A, \x_B, \x_C) = 
\{x^T \inter x^S \in \pconf{T\inter S} \mid x^S_A \sym_A^- \rep{\x}_A,~x^S_B = x^T_B
\sym_B \rep{\x}_B,~\&~x^T_C \sym_C^+ \rep{\x}_C\}\,.
\]

Like for strategies, we also write $\pswint^+_{\tau \inter \sigma}(\x_A, \x_B, \x_C)$
for the set
\[
\{(\theta_-^A, x^T\inter x^S, \theta_+^C) \mid \theta_-^A : x^S_A \sym_A^-
\rep{\x}_A,~x^T \inter x^S \in \wint^+_{\tau \inter \sigma}(\x_A, \x_B,
\x_C),~\&~\theta_+^C : x^T_C \sym_C^+ \rep{\x}_C\}\,.
\]
interaction witnesses along with specific symmetries to the game. Finally, we write:
\[
\wint^+_{\tau\inter \sigma}(\x_A, \x_C) = 
\{x^T \inter x^S \in \pconf{T\inter S} \mid x^S_A \sym_A^- \rep{\x}_A,~\&~x^T_C \sym_C^+ \rep{\x}_C\}
\]
for the variant of $\wint^+_{\tau \inter \sigma}(\x_A, \x_B, \x_C)$ with no
constraint in $B$. Clearly, we have:

\begin{lemma}\label{lem:part_int}
Consider $\sigma : A \stackrel{S}{\to} B$ and $\tau : B \stackrel{T}{\to} C$, $\x_A
\in \sconf{A}$ and $\x_C \in \sconf{C}$. Then:
\[
\wint^+_{\tau \inter \sigma}(\x_A, \x_C) = 
\biguplus_{\x_B \in \sconf{B}} \wint^+_{\tau \inter \sigma}(\x_A, \x_B, \x_C)
\]
where the notation $\uplus$ means the plain set-theoretic union when it is
disjoint.
\end{lemma}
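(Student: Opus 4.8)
The plan is to unfold the definitions and observe that the only difference between $\wint^+_{\tau \inter \sigma}(\x_A, \x_C)$ and $\wint^+_{\tau \inter \sigma}(\x_A, \x_B, \x_C)$ is the extra requirement $x^S_B = x^T_B \sym_B \rep{\x}_B$ on the common projection to $B$. Recall, however, that for any $x^T \inter x^S \in \conf{T\inter S}$ the equality $x^S_B = x^T_B$ holds automatically: it is part of the very description of $\conf{T\inter S}$ as matching pairs recalled in Section~\ref{subsec:prelim_inter}. Writing $x_B$ for this common configuration, the genuine extra constraint imposed by $\wint^+_{\tau \inter \sigma}(\x_A, \x_B, \x_C)$ over $\wint^+_{\tau \inter \sigma}(\x_A, \x_C)$ is thus simply $x_B \sym_B \rep{\x}_B$, \emph{i.e.}\ that $\x_B$ be the symmetry class of $x_B$.

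For the inclusion $\supseteq$, I would note that for any $\x_B \in \sconf{B}$, every element of $\wint^+_{\tau \inter \sigma}(\x_A, \x_B, \x_C)$ already satisfies $x^T \inter x^S \in \pconf{T\inter S}$, $x^S_A \sym_A^- \rep{\x}_A$ and $x^T_C \sym_C^+ \rep{\x}_C$, hence lies in $\wint^+_{\tau \inter \sigma}(\x_A, \x_C)$; so the right-hand side is contained in the left-hand side, irrespective of disjointness.

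For the inclusion $\subseteq$ together with disjointness, I would take $x^T \inter x^S \in \wint^+_{\tau \inter \sigma}(\x_A, \x_C)$, set $x_B = x^S_B = x^T_B$, and let $\x_B = \sclass{x_B} \in \sconf{B}$ be its symmetry class. Since $\rep{\x}_B \in \x_B$ we have $x_B \sym_B \rep{\x}_B$, so $x^T \inter x^S \in \wint^+_{\tau \inter \sigma}(\x_A, \x_B, \x_C)$, giving $\subseteq$. Conversely, if $x^T \inter x^S \in \wint^+_{\tau \inter \sigma}(\x_A, \x_B', \x_C)$ for some $\x_B' \in \sconf{B}$, then $x_B \sym_B \rep{\x}_B'$, so $\rep{\x}_B'$ lies in the symmetry class of $x_B$, whence $\x_B' = \sclass{x_B} = \x_B$. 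Therefore each element of the left-hand side belongs to exactly one member of the family $(\wint^+_{\tau \inter \sigma}(\x_A, \x_B, \x_C))_{\x_B \in \sconf{B}}$, which is precisely the statement that this family is pairwise disjoint and that its union is $\wint^+_{\tau \inter \sigma}(\x_A, \x_C)$.

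There is no real obstacle here — as the paper signals with ``Clearly'', this is a routine bookkeeping fact; the only point worth making explicit is that the $B$-component of a configuration of the interaction is single-valued, so that the refined sets partition the coarse one according to the symmetry class realised on $B$. Note also that representability intervenes only trivially, to name a representative $\rep{\x}_B$; canonicity of $\rep{\x}_B$ plays no role.
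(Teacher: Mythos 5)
Your proof is correct and is just a careful unfolding of what the paper dismisses in one line (``partition interactions according to the symmetry class reached in $B$''); the observation that the matching condition $x^S_B = x^T_B$ is built into $\conf{T\inter S}$, and that the extra constraint in the refined set is exactly a choice of symmetry class for $x_B$, is the whole content. Same approach as the paper, only made explicit.
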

\begin{proof}
Simply partition interactions according to the symmetry class reached in $B$.
\end{proof}

\paragraph{Interactions up to symmetry.}
We start with a more explicit variant of Lemma \ref{lem:weak_bipullback}.

\begin{lemma}\label{lem:qbip1}
For any pair of causally compatible witnesses
\[
(\theta_-^A, x^S, \theta_+^B) \in \pswit_\sigma^+(\x_A, \x_B)\,,
\qquad
(\Omega^B_-, x^T, \Omega^C_+) \in \pswit_\tau^+(\x_B, \x_C)\,,
\]
there are unique symmetries $\omega^S : x^S \sym_S y^S, \nu^T : x^T \sym_T y^T$, 
$\Theta_B : \rep{\x}_B \sym_B y_B$ and witness
\[
(\psi_-^A, y^T \inter y^S, \psi_+^C) \in \pswint_{\tau \inter \sigma}^+(\x_A,
\x_B, \x_C)
\]
with $y^S_B = y^T_B = y_B$, such that the following diagrams commute:
\[
\xymatrix@R=10pt{
&x^S_A	\ar[dl]_{\theta^A_-}
	\ar[dd]^{\omega^S_A}&
x^S_B	\ar[r]^{\theta^B_+}
	\ar[dd]_{\omega^S_B}&
\rep{\x}_B	
	\ar[dd]^{\Theta_B}&
x^T_B	\ar[dd]^{\nu^T_B}
	\ar[l]_{\Omega^B_-}&
x^T_C	\ar[dr]^{\Omega^C_+}
	\ar[dd]_{\nu^T_C}\\
\rep{\x}_A&&&&&&\rep{\x}_C\\
&y^S_A	\ar[ul]^{\psi^A_-}&
y^S_B 	\ar@{=}[r]&
y_B&
y^T_B 	\ar@{=}[l]&
y^T_C 	\ar[ur]_{\psi^C_+}
}
\]
\end{lemma}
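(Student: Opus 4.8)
The plan is to build the data by applying the weak bipullback property (Lemma~\ref{lem:weak_bipullback}) to the interaction of $\sigma$ and $\tau$, and then to organize the resulting symmetries so that they can be repackaged as the commuting diagram in the statement. First I would form the two external strategies $\sigma \parallel C^\perp$ and $A \parallel \tau$ as in Section~\ref{sec:quant_coll}. Given the causally compatible witnesses $(\theta_-^A, x^S, \theta_+^B)$ and $(\Omega^B_-, x^T, \Omega^C_+)$, the composite bijection
\[
x^S_B \quad \stackrel{(\theta^B_+)}{\sym_B^+} \quad \rep{\x}_B \quad \stackrel{(\Omega^B_-)^{-1}}{\sym_B^-} \quad x^T_B
\]
is a symmetry $\theta : x^S_B \sym_B x^T_B$. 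I would use the causal compatibility assumption precisely to guarantee that the induced bijection
\[
x^S \parallel x^T_C \quad \simeq \quad x^S_A \parallel x^S_B \parallel x^T_C \quad \simeq \quad x^S_A \parallel x^T_B \parallel x^T_C \quad \simeq \quad x^S_A \parallel x^T
\]
is secured, which is exactly what Lemma~\ref{lem:weak_bipullback} requires; this yields $\omega^S : x^S \sym_S y^S$ and $\nu^T : x^T \sym_T y^T$ with $y^S, y^T$ causally compatible, matching on $B$ on the nose (so we may set $y_B = y^S_B = y^T_B$), and with $\tau \nu^T \circ \sigma \omega^S = \theta$ restricted appropriately on $B$. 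Forming $y^T \inter y^S \in \conf{T \inter S}$ gives the underlying interaction; $+$-coveredness of $y^T \inter y^S$ follows from that of $x^T \inter x^S$, which in turn follows from the $+$-coveredness of $x^S$ and $x^T$ together with the fact that any maximal event in $B$ would be negative for one side, contradicting $+$-coveredness — this is the same reasoning as in Lemma~\ref{lem:main_pcov}.

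Next I would define the boundary symmetries. Set $\psi^A_- : y^S_A \sym_A^- \rep{\x}_A$ to be $\theta^A_- \circ (\omega^S_A)^{-1}$; I must check this is negative, which holds because $\omega^S_A$ is the $A$-component of the symmetry produced by weak bipullback applied to $\sigma \parallel C^\perp$, and on the $A$-side that strategy plays no role in the reindexing — more carefully, the securedness-driven induction in the proof of Lemma~\ref{lem:weak_bipullback} leaves $\sigma$ to adapt Player's indices on $B$, so the component on $A$ is forced by $\sim$-receptivity to match $\theta^A_-$ up to the positive part, and since $\theta^A_-$ is already negative the composite is negative. Symmetrically, $\psi^C_+ = \Omega^C_+ \circ (\nu^T_C)^{-1}$ is positive. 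For the middle, $\Theta_B : \rep{\x}_B \sym_B y_B$ is defined as $\omega^S_B \circ (\theta^B_+)^{-1}$, and the commutation of the central squares is then a direct bookkeeping check: the left square commutes by definition of $\psi^A_-$; the square involving $\theta^B_+$ and $\Theta_B$ commutes by definition of $\Theta_B$; the square involving $\Omega^B_-$ and $\nu^T_B$ commutes because $\tau \nu^T \circ \sigma \omega^S = \theta = (\Omega^B_-)^{-1} \circ \theta^B_+$ on $B$, which rearranges to $\nu^T_B \circ \Omega^B_- = \Theta_B^{-1}$ — wait, one must track directions carefully here, but it is the content of the weak bipullback equation $\tau \theta^T \circ \sigma \theta^S = \theta$; and the right square commutes by definition of $\psi^C_+$. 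That $(\psi^A_-, y^T \inter y^S, \psi^C_+)$ lies in $\pswint^+_{\tau \inter \sigma}(\x_A, \x_B, \x_C)$ is then immediate from the definitions: $y^S_A \sym_A^- \rep{\x}_A$ via $\psi^A_-$, $y^T_C \sym_C^+ \rep{\x}_C$ via $\psi^C_+$, $y^S_B = y^T_B$, and $y^S_B \sym_B \rep{\x}_B$ via $\Theta_B^{-1}$.

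Finally, uniqueness. Here I would invoke the uniqueness clause of Lemma~\ref{lem:weak_bipullback} — $y^S, y^T$ are unique up to symmetry — together with thinness in the form of Lemma~\ref{lemma:pos_symm}. Concretely, suppose $(\omega^S{}', \nu^T{}', \Theta_B', (\psi^A_-{}', \ldots))$ is another solution. The two interactions $y^T \inter y^S$ and $y^T{}' \inter y^S{}'$ both arise from synchronizing $x^S, x^T$ through the \emph{same} $\theta : x^S_B \sym_B x^T_B$ (since $\theta$ is determined by $\theta^B_+$ and $\Omega^B_-$), so by Lemma~\ref{lem:weak_bipullback} there is a symmetry between them; the commuting diagram forces this symmetry to be positive on both the $A$-boundary (because $\psi^A_-$ and $\psi^A_-{}'$ are both negative, so the comparison symmetry on $A$ composed with them must be positive) and the $C$-boundary, hence positive on all of $T \inter S$ — at which point Lemma~\ref{lemma:pos_symm} forces it to be the identity, giving $y^S = y^S{}'$, $y^T = y^T{}'$, and then the individual symmetries coincide. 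The main obstacle I anticipate is not any single deep step but the careful direction-chasing in assembling the central commuting diagram and, especially, nailing down exactly why the boundary components $\psi^A_-$ and $\psi^C_+$ have the claimed polarities — this requires looking inside the inductive construction of weak bipullback rather than using it as a black box, or alternatively re-deriving it via Lemma~\ref{lem:b4} (which already packages the ``negative in, Player adapts, positive out'' pattern) applied twice.
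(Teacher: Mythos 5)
Your plan matches the paper's own proof of this lemma, which is extremely terse: form $\theta := (\Omega^B_-)^{-1} \circ \theta^B_+$ on $B$, invoke Lemma~\ref{lem:weak_bipullback} with securedness supplied by causal compatibility, and set $\Theta_B$ as either path around the central square. You correctly identify $\theta$, correctly use causal compatibility for securedness, and correctly recover $\Theta_B = \omega^S_B \circ (\theta^B_+)^{-1}$. You also, rightly, flag that deriving the polarities of $\psi^A_-$ and $\psi^C_+$ requires looking inside the bipullback construction or using Lemma~\ref{lem:b4}; that instinct is sound.

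Where you go wrong is the uniqueness argument. You argue that the comparison symmetry between two candidate solutions is positive on the $A$-boundary (viewed in $A^\perp$) and on the $C$-boundary, and then conclude ``hence positive on all of $T \inter S$'' before invoking Lemma~\ref{lemma:pos_symm}. This step does not follow. The comparison on $B$ is $\Theta'_B \circ \Theta_B^{-1}$, and nothing in the commuting-diagram constraints pins down its polarity: the requirement is only that $\Theta_B : \rep{\x}_B \sym_B y_B$ be a symmetry, not a positive or negative one. So neither $\sigma \varphi^S$ nor $\tau \varphi^T$ is visibly in $\sym^+$, and Lemma~\ref{lemma:pos_symm} cannot be applied directly to either strategy. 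Moreover, $T \inter S$ is not a strategy on a polarized game, so Lemma~\ref{lemma:pos_symm} does not literally apply to it either. The statement you want is nonetheless true, but its proof requires an induction along the secured order of the interaction, interleaving thinness of $\sigma$ (for $\sigma$-positive events in $B$) with thinness of $\tau$ (for $\tau$-positive events in $B$), bootstrapped from the identity comparisons on $A$ and $C$ that you established — essentially redoing, or appealing to, the uniqueness analysis in the proof of the weak bipullback itself. Alternatively, the paper's Lemma~\ref{lem:b4} packages the relevant uniqueness in a form that can be applied once per strategy without re-opening the induction, which is how the companion Lemma~\ref{lem:qbip2} is handled.

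One smaller point: you write that $+$-coveredness of $y^T \inter y^S$ ``follows from that of $x^T \inter x^S$''; but $x^S$ and $x^T$ generically do not match on the nose, so $x^T \inter x^S$ need not exist. The correct route, which you also gesture at, is that $y^S \sym_S x^S$ and $y^T \sym_T x^T$ preserve $+$-coveredness, and then the argument of Lemma~\ref{lem:main_pcov} applies to $y^T \inter y^S$.
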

\begin{proof}
By Lemma \ref{lem:weak_bipullback}, there are unique symmetries $\omega^S : x^S
\sym_S y^S, \nu^T : x^T \sym_T y^T$, and
\[
(\psi_-^A, y^T \inter y^S, \psi_+^C) \in \pswint_{\tau \inter \sigma}^+(\x_A, \x_B,
\x_C)
\]
with $y^S_B = y^T_B = y_B$, such that the following diagrams commute:
\[
\xymatrix@R=10pt{
&x^S_A  \ar[dl]_{\theta^A_-}
        \ar[dd]^{\omega^S_A}&
x^S_B   \ar[r]^{\theta^B_+}
        \ar[dd]_{\omega^S_B}&
\rep{\x}_B&
x^T_B   \ar[dd]^{\nu^T_B}
        \ar@{<-}[l]_{(\Omega^B_-)^{-1}}&
x^T_C   \ar[dr]^{\Omega^C_+}
        \ar[dd]_{\nu^T_C}\\
\rep{\x}_A&&&&&&\rep{\x}_C\\
&y^S_A  \ar[ul]^{\psi^A_-}&
y^S_B   \ar@{=}[r]&
y_B	\ar@{=}[r]&
y^T_B &
y^T_C   \ar[ur]_{\psi^C_+}
}
\]

We simply set $\Theta_B : \rep{\x}_B \to y_B$ as either path around the center
diagram.
\end{proof}

Thanks to the previous section we may reverse this operation, as shown below.

\begin{lemma}\label{lem:qbip2}
For any symmetry $\Theta_B : \rep{\x}_B \sym_B y_B$ and any witness
\[
(\psi_-^A, y^T \inter y^S, \psi_+^C) \in \pswint_{\tau \inter \sigma}^+(\x_A, \x_B,
\x_C) 
\]
with $y^S_B = y^T_B = y_B$, there are unique symmetries $\omega^S : x^S \sym_S y^S$,
$\nu^T : x^T \sym_T y^T$ and 
\[
(\theta_-^A, x^S, \theta_+^B) \in \pswit_\sigma^+(\x_A, \x_B)\,,
\qquad
(\Omega^B_-, x^T, \Omega^C_+) \in \pswit_\tau^+(\x_B, \x_C)\,,
\]
a pair of causally compatible witnesses, such that the following diagrams commute:
\[
\xymatrix@R=10pt{
&x^S_A  \ar[dl]_{\theta^A_-}
        \ar[dd]^{\omega^S_A}&
x^S_B   \ar[r]^{\theta^B_+}
        \ar[dd]_{\omega^S_B}&
\rep{\x}_B      
        \ar[dd]^{\Theta_B}&
x^T_B   \ar[dd]^{\nu^T_B}
        \ar[l]_{\Omega^B_-}&
x^T_C   \ar[dr]^{\Omega^C_+}
        \ar[dd]_{\nu^T_C}\\
\rep{\x}_A&&&&&&\rep{\x}_C\\
&y^S_A  \ar[ul]^{\psi^A_-}&
y^S_B   \ar@{=}[r]&
y_B&
y^T_B   \ar@{=}[l]&
y^T_C   \ar[ur]_{\psi^C_+}
}
\]
\end{lemma}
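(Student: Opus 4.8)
The plan is to produce the $S$-side data $\omega^S, \theta^A_-, \theta^B_+, x^S$ and the $T$-side data $\nu^T, \Omega^B_-, \Omega^C_+, x^T$ separately and symmetrically; morally this runs the construction of Lemma \ref{lem:qbip1} in reverse, the two-sided weak bipullback of that proof being replaced by the one-sided canonical action of negative symmetries (Lemma \ref{lem:b4}) applied to $\sigma$ and to $\tau$ in turn. Concentrate on the $S$-side, writing $D = A^\perp \parallel B$, so that $\sigma : S \to D$ and $\sym^+_D = \sym^-_A \parallel \sym^+_B$. Reading off the two commutations on the $S$-side of the diagram, what we must produce is a configuration $x^S \in \pconf{S}$, a symmetry $\omega^S : x^S \sym_S y^S$, and a positive symmetry $\chi : \sigma x^S \sym^+_D \rep{\x}_A \parallel \rep{\x}_B$ such that $\sigma \omega^S = \gamma \circ \chi$, where $\gamma := (\psi^A_-)^{-1} \parallel \Theta_B : \rep{\x}_A \parallel \rep{\x}_B \sym_D \sigma y^S$ is fixed by the data. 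Indeed, splitting $\chi$ under $\sym^+_D = \sym^-_A \parallel \sym^+_B$ yields exactly a $\theta^A_- : x^S_A \sym^-_A \rep{\x}_A$ and a $\theta^B_+ : x^S_B \sym^+_B \rep{\x}_B$, and the equation $\sigma \omega^S = \gamma \circ \chi$ is precisely the conjunction of the two required commutations $\psi^A_- \circ \omega^S_A = \theta^A_-$ and $\omega^S_B = \Theta_B \circ \theta^B_+$.

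To build these, factor $\gamma^{-1} : \sigma y^S \sym_D \rep{\x}_A \parallel \rep{\x}_B$ by Lemma 3.19 of \cite{cg2} as $\beta_+ \circ \beta_-$ with $\beta_- : \sigma y^S \sym^-_D v$ and $\beta_+ : v \sym^+_D \rep{\x}_A \parallel \rep{\x}_B$, and apply Lemma \ref{lem:b4} to $\sigma$, $y^S$, $\beta_-$: it yields a \emph{unique} $\varphi : y^S \sym_S x^S$ with $\sigma \varphi = \delta_+ \circ \beta_-$ for some $\delta_+ : v \sym^+_D \sigma x^S$. Set $\omega^S := \varphi^{-1}$ and $\chi := \beta_+ \circ \delta_+^{-1}$, a positive symmetry $\sigma x^S \sym^+_D \rep{\x}_A \parallel \rep{\x}_B$; then $\gamma \circ \chi = \beta_-^{-1} \circ \delta_+^{-1} = \sigma \omega^S$, using $\gamma \circ \beta_+ = \beta_-^{-1}$, which is immediate from $\gamma^{-1} = \beta_+ \circ \beta_-$. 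One checks $y^S \in \pconf{S}$ from $y^T \inter y^S \in \pconf{T \inter S}$ (coveredness of an interaction passes to its components), hence $x^S \in \pconf{S}$ since $+$-coveredness is stable under symmetry. The $T$-side is handled identically, with $E = B^\perp \parallel C$, $\sym^+_E = \sym^-_B \parallel \sym^+_C$, and $\delta := \Theta_B \parallel (\psi^C_+)^{-1} : \rep{\x}_B \parallel \rep{\x}_C \sym_E \tau y^T$, producing $\nu^T, x^T, \Omega^B_-, \Omega^C_+$; the identities $y^S_B = y^T_B = y_B$ pass through unchanged.

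It remains to verify that the pair $(\theta^A_-, x^S, \theta^B_+)$, $(\Omega^B_-, x^T, \Omega^C_+)$ is causally compatible — the other squares of the diagram hold by construction. Chasing the commuting squares, the linking symmetry $(\Omega^B_-)^{-1} \circ \theta^B_+ : x^S_B \sym_B x^T_B$ reduces to $(\nu^T_B)^{-1} \circ \omega^S_B$, so that $\Theta_B$ cancels out, and the composite bijection of Definition \ref{def:nodeadlock} for $(x^S, x^T)$ is then obtained from the corresponding one for the interaction $y^T \inter y^S$ — secured since $y^T \inter y^S \in \conf{T \inter S}$ — by pre- and post-composition with the symmetries $\omega^S \parallel \nu^T_C$ and $\omega^S_A \parallel \nu^T$. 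Since securedness is invariant under the action of symmetries, the pair is causally compatible.

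For uniqueness, suppose $\tilde\omega^S : \tilde x^S \sym_S y^S$, $\tilde\theta^A_-$, $\tilde\theta^B_+$ also satisfy the two $S$-side commutations. Assembling $\tilde\chi := \tilde\theta^A_- \parallel \tilde\theta^B_+ \in \sym^+_D$, the same reformulation gives $\sigma \tilde\omega^S = \gamma \circ \tilde\chi$, whence $\sigma((\tilde\omega^S)^{-1} \circ \omega^S) = \tilde\chi^{-1} \circ \chi \in \sym^+_D$, so Lemma \ref{lemma:pos_symm} forces $x^S = \tilde x^S$ and $\omega^S = \tilde\omega^S$; then $\theta^A_- = \tilde\theta^A_-$ and $\theta^B_+ = \tilde\theta^B_+$ follow since $\gamma$ is invertible, and the $T$-side is symmetric. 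I expect the two genuinely delicate points to be the polarity bookkeeping across the de Morgan duals — in particular, observing that $\theta^A_- \parallel \theta^B_+$ lands in the \emph{positive} symmetry of $A^\perp \parallel B$, which is what makes Lemma \ref{lemma:pos_symm} applicable for the uniqueness argument — and the invariance of securedness under symmetry used in the causal-compatibility step; both are routine but should be spelled out with care.
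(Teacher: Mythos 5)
Your proof is correct and takes essentially the same route as the paper's: factor the relevant symmetry on $A^\perp\parallel B$ (resp. $B^\perp\parallel C$) via Lemma 3.19, apply Lemma~\ref{lem:b4} to produce $\omega^S$ (resp.\ $\nu^T$), and derive uniqueness from thinness. The only differences are organizational and to your credit: packaging the two $S$-side commutations into the single equation $\sigma\omega^S=\gamma\circ\chi$ makes the one-sided character of the argument cleaner; you invoke Lemma~\ref{lemma:pos_symm} directly for uniqueness rather than routing through the uniqueness clause of Lemma~\ref{lem:b4} (equivalent in content); and you explicitly verify that the produced pair is causally compatible by transporting securedness along $\omega^S\parallel\nu^T_C$ and $\omega^S_A\parallel\nu^T$, a step the paper's proof leaves implicit even though it is part of the statement.
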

\begin{proof}
The first step is to factor $\Theta_B^{-1}$ in two ways, as in the diagram
\[
\xymatrix@R=10pt{
&&&z_B^1
	\ar[r]^{\Phi_+^B}&
\rep{\x}_B&
z_B^2	\ar[l]_{\Psi^B_-}\\
\rep{\x}_A&&&&&&&&\rep{\x}_C\\
&y^S_A	\ar[ul]^{\psi_-^A}&
y^S_B	\ar@{=}[rr]&&
y_B	\ar[uul]^{\Phi_-^B}
	\ar[uu]_{\Theta_B^{-1}}
	\ar[uur]_{\Psi_+^B}
	\ar@{=}[rr]&&
y^T_B&
y^T_C	\ar[ur]_{\psi_+^C}
}
\]
following Lemma 3.19 of \cite{cg2}. By Lemma \ref{lem:b4} we can make $\Phi^B_-$ act
on $\sigma$. This yields
\[
\lambda_-^A : x^S_A \sym_A^- y^S_A\,,
\qquad
\omega^S : x^S \sym_S y^S\,,
\qquad
\Delta^B_+ : x^S_B \sym_B^+ z^1_B\,,
\]
unique such that the following diagram commutes:
\[
\xymatrix@R=10pt{
&&x^S_A	\ar[dl]_{\lambda^A_-}
	\ar[dd]^{\omega^S_A}&
x^S_B	\ar[dd]^{\omega^S_B}
	\ar[r]^{\Delta^B_+}&
z^1_B	\ar@[grey][r]^{\grey{\Phi_+^B}}&
\grey{\rep{\x}_B}&
\grey{z^2_B}
	\ar@[grey][l]_{\grey{\Psi_-^B}}\\
\rep{\x}_A&
y^S_A	\ar[l]^{\psi^A_-}
	\ar@{=}[dr]&&&&&&&&\grey{\rep{\x}_C}\\
&&y^S_A&y^S_B
	\ar@{=}[rr]&&
y_B	\ar[uul]^{\Phi_-^B}
	\ar@[grey][uu]_{\grey{\Theta_B^{-1}}}
	\ar@[grey][uur]_{\grey{\Psi_+^B}}
	\ar@[grey]@{=}[rr]&&
\grey{y^T_B}&
\grey{y^T_C}
	\ar@[grey][ur]_{\grey{\psi^C_+}}
}
\]
leaving in grey the irrelevant parts of the full diagram for context. Setting
$\theta^A_- = \psi^A_- \circ \lambda^A_-$ and $\theta^B_+ = \Phi^B_+ \circ
\Delta^B_+$, we have found data making the following diagram commute:
\[
\xymatrix@R=10pt@C=15pt{
&x^S_A	\ar[dl]_{\theta_-^A}
	\ar[dd]^{\omega^S_A}&
x^S_B	\ar[dd]^{\omega^S_B}
	\ar[rr]^{\theta^B_+}&&
\rep{\x}_B
	\ar[dd]_{\Theta_B}&
\grey{z^2_B}
	\ar@[grey][l]_{\grey{\Psi^B_-}}\\
\rep{\x}_A&&&&&&&&\grey{\rep{\x}_C}\\
&y^S_A	\ar[ul]^{\psi_-^A}&
y^S_B	\ar@{=}[rr]&&
y_B	\ar@[grey]@{=}[rr]
	\ar@[grey][uur]_{\grey{\Psi^B_+}}&&
\grey{y^T_C}&
\grey{y^T_C}
	\ar@[grey][ur]_{\grey{\psi^C_+}}
}
\]

We shall now prove uniqueness of this data. Assume that we have other symmetries 
\[
\gamma_-^A : u^S_A \sym_A^- \rep{\x}_A\,,
\qquad
\varpi^S : u^S \sym_S y^S\,,
\qquad
\gamma_+^B : u^S_B \sym_B^+ \rep{\x}_B\,,
\]
making the following diagram commute:
\[
\xymatrix@R=10pt@C=15pt{
&u^S_A  \ar[dl]_{\gamma_-^A}
        \ar[dd]^{\varpi^S_A}&
u^S_B   \ar[dd]^{\varpi^S_B}
        \ar[rr]^{\gamma^B_+}&&
\rep{\x}_B
        \ar[dd]_{\Theta_B}&
\grey{z^2_B}
        \ar@[grey][l]_{\grey{\Psi^B_-}}\\
\rep{\x}_A&&&&&&&&\grey{\rep{\x}_C}\\
&y^S_A  \ar[ul]^{\psi_-^A}&
y^S_B   \ar@{=}[rr]&&
y^B     \ar@[grey]@{=}[rr]
        \ar@[grey][uur]_{\grey{\Psi^B_+}}&&
\grey{y^T_C}&
\grey{y^T_C}
        \ar@[grey][ur]_{\grey{\psi^C_+}}
}
\]

Then, it follows that the following diagram also commutes:
\[
\xymatrix@R=10pt{
&&u^S_A \ar[dl]_{(\psi_-^A)^{-1} \circ \gamma^A_-}
        \ar[dd]^{\varpi^S_A}&
u^S_B   \ar[dd]^{\varpi^S_B}
        \ar[rr]^{(\Phi^B_+)^{-1} \circ \gamma^B_+}&&
z^1_B   \ar@[grey][r]^{\grey{\Phi_+^B}}&
\grey{\rep{\x}_B}&
\grey{z^2_B}
        \ar@[grey][l]_{\grey{\Psi_-^B}}\\
\rep{\x}_A&
y^S_A   \ar[l]^{\psi^A_-}
        \ar@{=}[dr]&&&&&&&&&\grey{\rep{\x}_C}\\
&&y^S_A&y^S_B
        \ar@{=}[rrr]&&&
y_B     \ar[uul]^{\Phi_-^B}
        \ar@[grey][uu]_{\grey{\Theta_B^{-1}}}
        \ar@[grey][uur]_{\grey{\Psi_+^B}}
        \ar@[grey]@{=}[rr]&&
\grey{y^T_B}&
\grey{y^T_C}
        \ar@[grey][ur]_{\grey{\psi^C_+}}
}
\]

By uniqueness for Lemma \ref{lem:b4}, it follows that $u^S = x^S$, $\omega^S =
\varpi^S$,  $\lambda^A_- = (\psi^A_-)^{-1} \circ \gamma_-^A$ so $\gamma_-^A =
\theta_-^A$, and $(\Phi^B_+)^{-1} \circ \gamma^B_+ = \Delta^B_+$ so $\gamma^B_+ =
\theta^B_+$. Altogether, we have proved that there are
\[
\theta_-^A : x^S_A \sym_S \rep{\x}_A\,,
\qquad
\omega^S : x^S \sym_S y^S\,,
\qquad
\theta_+^B : x^S_B \sym_B^+ \rep{\x}_B\,,
\]
unique making the following diagram commutes:
\[
\xymatrix@R=10pt@C=15pt{
&x^S_A  \ar[dl]_{\theta_-^A}
        \ar[dd]^{\omega^S_A}&
x^S_B   \ar[dd]^{\omega^S_B}
        \ar[rr]^{\theta^B_+}&&
\rep{\x}_B
        \ar[dd]_{\Theta_B}&
\grey{z^2_B}
        \ar@[grey][l]_{\grey{\Psi^B_-}}\\
\rep{\x}_A&&&&&&&&\grey{\rep{\x}_C}\\
&y^S_A  \ar[ul]^{\psi_-^A}&
y^S_B   \ar@{=}[rr]&&
y^B     \ar@[grey]@{=}[rr]
        \ar@[grey][uur]_{\grey{\Psi^B_+}}&&
\grey{y^T_C}&
\grey{y^T_C}
        \ar@[grey][ur]_{\grey{\psi^C_+}}
}
\]

The lemma follows by performing the exact same reasoning on the right hand side.
\end{proof}

\subsection{Witnesses of interaction}

With Lemmas \ref{lem:qbip1} and \ref{lem:qbip2} we have done the hardest part of the
job; but to collect the fruits of that work we need to introduce some additional
notation.
We write $\ssym{\x_B}$ for the set of endosymmetries on
$\rep{\x}_B$. Let us fix a choice, for every $x \in \x_B$, of some
\[
\kappa_x : x \sym_B \rep{\x}_B\,.
\]

Transporting through $(\kappa_x)_{x \in \x_B}$ gives a bijection, for any
two $x, y \in \x_B$, between the set of symmetries $x \sym_B y$ and the set
$\ssym{\x_B}$. If $\theta \in \ssym{\x_B}$ and $x, y \in \x_B$, let us write
\[
\theta[x, y] : x \sym_B y
\]
the transported symmetry obtained as $\kappa_y^{-1} \circ \theta \circ \kappa_x$. 

\begin{corollary}
There is a bijection
\[
\Upsilon \quad:\quad \pswit_\sigma^+(\x_A, \x_B) \bullet \pswit_\tau^+(\x_B, \x_C)
\quad\to \quad
\pswint_{\tau \inter \sigma}^+(\x_A, \x_B,\x_C) \times \ssym{\x_B}
\]
such that for every pair of causally compatible witnesses
\[
(\w_1, \w_2) = 
((\theta_-^A, x^S, \theta_+^B),~
(\Omega^B_-, x^T, \Omega^C_+))
\in
\pswit_\sigma^+(\x_A, \x_B) \bullet \pswit_\tau^+(\x_B, \x_C)\,,
\]
writing $((\psi^A_-, y^T\inter y^S, \psi^C_+), \varphi) = \Upsilon(\w_1, \w_2)$, 
$y_B = y^S_B = y^T_B$, $\Theta_B = \varphi[\rep{\x}_B, y_B]$, there are 
$\omega^S : x^S \sym_S y^T$ and $\nu^T : x^T \sym_T y^T$ such that the
following diagrams commute:
\[
\xymatrix@R=10pt{
&x^S_A  \ar[dl]_{\theta^A_-}
        \ar[dd]^{\omega^S_A}&
x^S_B   \ar[r]^{\theta^B_+}
        \ar[dd]_{\omega^S_B}&
\rep{\x}_B      
        \ar[dd]^{\Theta_B}&
x^T_B   \ar[dd]^{\nu^T_B}
        \ar[l]_{\Omega^B_-}&
x^T_C   \ar[dr]^{\Omega^C_+}
        \ar[dd]_{\nu^T_C}\\
\rep{\x}_A&&&&&&\rep{\x}_C\\
&y^S_A  \ar[ul]^{\psi^A_-}&
y^S_B   \ar@{=}[r]&
y_B&
y^T_B   \ar@{=}[l]&
y^T_C   \ar[ur]_{\psi^C_+}
}
\]
\end{corollary}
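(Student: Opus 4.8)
The plan is to read $\Upsilon$ off directly from Lemmas~\ref{lem:qbip1} and~\ref{lem:qbip2}: Lemma~\ref{lem:qbip1} provides the forward direction (from a causally compatible pair to an interaction witness together with a symmetry $\Theta_B : \rep{\x}_B \sym_B y_B$), Lemma~\ref{lem:qbip2} provides the backward direction (from an interaction witness together with a symmetry $\Theta_B : \rep{\x}_B \sym_B y_B$ back to a causally compatible pair), and the uniqueness clauses of these two lemmas will ensure that the two directions are mutually inverse. The only extra ingredient is the transport bijection $\theta \mapsto \theta[x,y]$, used to convert the ``loose'' symmetry $\Theta_B : \rep{\x}_B \sym_B y_B$ into an honest endosymmetry $\varphi \in \ssym{\x_B}$ and back.

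Concretely, first I would take a causally compatible pair $(\w_1,\w_2)$, apply Lemma~\ref{lem:qbip1} to obtain the unique $\omega^S : x^S \sym_S y^S$, $\nu^T : x^T \sym_T y^T$, $\Theta_B : \rep{\x}_B \sym_B y_B$ and $(\psi_-^A, y^T\inter y^S, \psi_+^C) \in \pswint_{\tau\inter\sigma}^+(\x_A,\x_B,\x_C)$ with $y_B = y^S_B = y^T_B$, and then set
\[
\Upsilon(\w_1,\w_2) = \big((\psi_-^A, y^T\inter y^S, \psi_+^C),\ \varphi\big)\,,
\qquad
\varphi = \kappa_{y_B} \circ \Theta_B \circ \kappa_{\rep{\x}_B}^{-1} \in \ssym{\x_B}\,,
\]
$\varphi$ being the unique endosymmetry with $\varphi[\rep{\x}_B, y_B] = \Theta_B$; the commuting diagrams demanded in the statement are exactly the ones Lemma~\ref{lem:qbip1} produces. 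For the inverse I would take $((\psi_-^A, y^T\inter y^S, \psi_+^C), \varphi)$, recover $y_B = y^S_B = y^T_B$ from the interaction witness, form $\Theta_B = \varphi[\rep{\x}_B, y_B]$, and feed $\Theta_B$ and the interaction witness to Lemma~\ref{lem:qbip2} to obtain the unique causally compatible pair (together with its $\omega^S, \nu^T$).

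It then remains to check that the two maps compose to the identity in both orders, and I expect this to be immediate: Lemmas~\ref{lem:qbip1} and~\ref{lem:qbip2} are organised around the very same family of commuting squares, so the tuple one of them manufactures is, by its own uniqueness clause, the tuple the other returns when run backwards on the matching data. The transport layer adds nothing, since $\theta \mapsto \theta[\rep{\x}_B, y_B]$ is a bijection between symmetries $\rep{\x}_B \sym_B y_B$ and $\ssym{\x_B}$, so $\Theta_B$ and $\varphi$ determine each other once $y_B$ is fixed, and $y_B$ is always read back from the interaction witness component.

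The genuine difficulty of the statement was already absorbed into Lemmas~\ref{lem:qbip1} and~\ref{lem:qbip2}; what is left here is bookkeeping. The one place where care is needed is well-typedness: $\Theta_B$ is a symmetry onto $y_B$, not an endosymmetry, so to land in $\ssym{\x_B}$ one must pass through the fixed family $(\kappa_x)_{x\in\x_B}$, and, symmetrically, when rebuilding $\Theta_B$ from $\varphi$ one must use exactly the configuration $y_B$ carried by the interaction witness rather than some other member of $\x_B$. Once this is pinned down, both round trips reduce to the uniqueness statements and there is nothing else to prove.
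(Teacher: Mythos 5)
Your proposal is correct and takes essentially the same route as the paper, which simply asserts that the corollary is straightforward from Lemmas~\ref{lem:qbip1} and~\ref{lem:qbip2}, noting that $\omega^S$ and $\nu^T$ are pinned down by local injectivity of $\sigma,\tau$. You have just unpacked the two-directional construction and the uniqueness-based round-trip argument that the paper leaves implicit.
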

\begin{proof}
Straightforward from Lemmas \ref{lem:qbip1} and \ref{lem:qbip2}. Note that $\omega^S$
and $\nu^T$ are unique; the requirements of the diagrams constrain them entirely due
to local injectivity of $\sigma, \tau$.
\end{proof}

The commutation of this diagram is required for situations where one would exploit
this in the presence of valuations on configurations that are \emph{typed} and
transported coherently through symmetry, such as for quantum valuations \cite{popl20}.
However, if one is merely interested in \emph{counting} the witnesses, then the take
home message is:

\begin{corollary}\label{cor:cor1}
For any $\x_A \in \sconf{A}, \x_B \in \sconf{B}$ and $\x_C \in \sconf{C}$, we have
\begin{eqnarray}
\card{\pswit_\sigma^+(\x_A, \x_B) \bullet \pswit_\tau^+(\x_B, \x_C)}
&=&
\card{\pswint_{\tau \inter \sigma}^+(\x_A, \x_B,\x_C)} \times
\card{\ssym{\x_B}}\,.\label{eq3}
\end{eqnarray}
\end{corollary}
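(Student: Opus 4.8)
The plan is to read the identity \eqref{eq3} directly off the bijection $\Upsilon$ obtained in the preceding corollary. First I would recall that $\Upsilon$ is a bijection
\[
\pswit_\sigma^+(\x_A, \x_B) \bullet \pswit_\tau^+(\x_B, \x_C)
\;\simeq\;
\pswint_{\tau \inter \sigma}^+(\x_A, \x_B,\x_C) \times \ssym{\x_B}\,,
\]
so the two sides have equal cardinal in $\overline{\mathbb{N}}$; it then suffices to note that the cardinal of a cartesian product is the product of the cardinals.

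The only point deserving a word is the arithmetic of possibly infinite cardinals. Here $\ssym{\x_B}$ is the group of endosymmetries of $\rep{\x}_B$, i.e. of order-isomorphisms $\rep{\x}_B \simeq \rep{\x}_B$; since $\rep{\x}_B$ is a finite configuration this group is finite, and in particular nonempty as it contains $\id_{\rep{\x}_B}$. Hence the right-hand product of \eqref{eq3} is unambiguous in $\overline{\mathbb{N}}$: if $\pswint_{\tau \inter \sigma}^+(\x_A, \x_B,\x_C)$ is finite it is the usual product, and if it is infinite then so is $\pswit_\sigma^+(\x_A, \x_B) \bullet \pswit_\tau^+(\x_B, \x_C)$ through $\Upsilon$, matching $(+\infty) \times \card{\ssym{\x_B}} = +\infty$.

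I do not anticipate any real difficulty: all the substantive work is already contained in Lemmas \ref{lem:qbip1} and \ref{lem:qbip2}, which combine into $\Upsilon$; what remains is only the routine passage to cardinals together with the trivial finiteness of $\ssym{\x_B}$ recalled above. If anything is a candidate for the ``main obstacle'', it is merely making sure the bookkeeping of the infinite case is stated cleanly, rather than any conceptual step.
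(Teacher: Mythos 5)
Your argument is exactly the paper's (implicit) one: Corollary~\ref{cor:cor1} is stated with no separate proof precisely because it is the cardinal-level reading of the bijection $\Upsilon$ from the preceding corollary, itself assembled from Lemmas~\ref{lem:qbip1} and~\ref{lem:qbip2}. Your extra care about $\ssym{\x_B}$ being finite and nonempty (it consists of permutations of the finite configuration $\rep{\x}_B$ and contains the identity) correctly handles the $\overline{\mathbb{N}}$ arithmetic, so there is nothing missing.
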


If we know that the strategies to be composed do not deadlock, then this can be
simplified further.

\begin{corollary}\label{cor:main}
Assume $\sigma : A \stackrel{S}{\to} B$ and $\tau : B \stackrel{T}{\to} C$ do not
deadlock. Then,
\[
\card{\pswit_\sigma^+(\x_A, \x_B)} \times \card{\pswit_\tau^+(\x_B, \x_C)}
=
\card{\pswint_{\tau \inter \sigma}^+(\x_A, \x_B,\x_C)} \times \card{\ssym{\x_B}}\,,
\]
\end{corollary}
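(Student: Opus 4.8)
The plan is to deduce this directly from Corollary~\ref{cor:cor1}: its conclusion \eqref{eq3} already exhibits the right-hand side we want, so the only thing to add is that, when $\sigma$ and $\tau$ do not deadlock, the set $\pswit_\sigma^+(\x_A,\x_B)\bullet\pswit_\tau^+(\x_B,\x_C)$ of causally compatible pairs coincides with the full Cartesian product $\pswit_\sigma^+(\x_A,\x_B)\times\pswit_\tau^+(\x_B,\x_C)$. Since the $\bullet$-set is by definition a subset of that product, it suffices to show that under the no-deadlock hypothesis \emph{every} pair of witnesses is causally compatible; then taking $\card{-}$ turns the Cartesian product into the product in $\overline{\mathbb{N}}$ and the statement follows from \eqref{eq3}.

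So the one substantive step is: given $(\theta_-^A, x^S,\theta_+^B)\in\pswit_\sigma^+(\x_A,\x_B)$ and $(\Omega_-^B, x^T,\Omega_+^C)\in\pswit_\tau^+(\x_B,\x_C)$, I would form the interface symmetry $\theta_B = (\Omega_-^B)^{-1}\circ\theta_+^B : x^S_B\sym_B x^T_B$ and observe that ``causally compatible'' for this pair means precisely that the composite bijection built from $x^S$, $x^T$ and $\theta_B$ as in Definition~\ref{def:nodeadlock} is secured. But Definition~\ref{def:nodeadlock} asserts exactly this securedness, for all $x^S\in\conf{S}$, all $x^T\in\conf{T}$, and all $\theta_B : x^S_B\sym_B x^T_B$ --- in particular for ours. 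Hence the pair is causally compatible, as required, and $\pswit_\sigma^+(\x_A,\x_B)\bullet\pswit_\tau^+(\x_B,\x_C)=\pswit_\sigma^+(\x_A,\x_B)\times\pswit_\tau^+(\x_B,\x_C)$.

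I do not expect a genuine obstacle here: the real combinatorial content is already packaged in Lemmas~\ref{lem:qbip1} and~\ref{lem:qbip2} and in Corollary~\ref{cor:cor1}. The only points deserving a word of care are (i) matching the concrete $\theta_B$ produced by the witnesses' $B$-symmetries with the universally quantified one of Definition~\ref{def:nodeadlock}, which is immediate; and (ii) the behaviour of cardinal arithmetic at $0$ and $+\infty$: if either $\pswit_\sigma^+(\x_A,\x_B)$ or $\pswit_\tau^+(\x_B,\x_C)$ is empty then so is the domain of $\Upsilon$, hence so is $\pswint_{\tau\inter\sigma}^+(\x_A,\x_B,\x_C)\times\ssym{\x_B}$, and since $\ssym{\x_B}$ always contains the identity this forces $\pswint_{\tau\inter\sigma}^+(\x_A,\x_B,\x_C)=\emptyset$, so both sides of the claimed equality vanish; otherwise $\card{X\times Y}=\card{X}\times\card{Y}$ in $\overline{\mathbb{N}}$ with the usual convention and we conclude via \eqref{eq3}.
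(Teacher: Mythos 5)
Your proof is correct and follows essentially the same route as the paper: invoke the no-deadlock hypothesis to identify the $\bullet$-set with the full Cartesian product, then apply Corollary~\ref{cor:cor1}. The paper's own proof is terser (it just asserts that causal compatibility is always satisfied), whereas you spell out the relevant interface symmetry $\theta_B=(\Omega_-^B)^{-1}\circ\theta_+^B$ and add a (slightly over-cautious but harmless) sanity check on cardinal arithmetic at $0$ and $+\infty$; these are elaborations of, not departures from, the paper's argument.
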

\begin{proof}
By hypothesis, causal compatibility is always satisfied. Therefore,
\[
\pswit_\sigma^+(\x_A, \x_B) \bullet \pswit_\tau^+(\x_B, \x_C) 
= 
\pswit_\sigma^+(\x_A, \x_B) \times \pswit_\tau^+(\x_B, \x_C)
\]
and the result follows from Corollary \ref{cor:cor1}.
\end{proof}

This takes us close to Equation \ref{eq1}. One may wonder what is left to conclude; a
hint is the fact that for now, in this section, we have \emph{not used canonicity of
representatives}. 

\subsection{Witnesses and canonicity}

The moral of Equation \ref{eq3} seems clear: on the left hand side
witnesses have the liberty to pick any positive symmetry on respectively $B$ and
$B^\perp$ to interact, whereas on the right hand side they must match on the nose.
Adding $\card{\ssym{\x_B}}$ on the right balances this out.

Let us look deeper into this. From now on, we will rely heavily on canonicity of
representatives. A first consequence of that is the following:

\begin{lemma}\label{lem:canonical}
If $B$ is representable, then for all $\x_B \in \sconf{B}$, we have
\[
\card{\ssym{\x_B}} = \card{\nsym{\x_B}} \times \card{\psym{\x_B}}\,.
\]
\end{lemma}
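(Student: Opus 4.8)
The plan is to prove the claim directly from the definition of \emph{canonical}. Recall that since $B$ is representable, the chosen representative $\rep{\x}_B$ is canonical, so by Definition \ref{def:canonical} every endosymmetry $\theta : \rep{\x}_B \sym_B \rep{\x}_B$ factors \emph{uniquely} as $\theta = \theta_+ \circ \theta_-$ with $\theta_- : \rep{\x}_B \sym_B^- \rep{\x}_B$ and $\theta_+ : \rep{\x}_B \sym_B^+ \rep{\x}_B$, i.e.\ both factors are themselves endosymmetries of $\rep{\x}_B$. In other words, $\theta_- \in \nsym{\x_B}$ and $\theta_+ \in \psym{\x_B}$.

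The key step is to exhibit an explicit bijection
\[
\nsym{\x_B} \times \psym{\x_B} \quad\cong\quad \ssym{\x_B}
\]
given by $(\theta_-, \theta_+) \mapsto \theta_+ \circ \theta_-$. This map is well-defined since a composite of endosymmetries of $\rep{\x}_B$ is again one. It is surjective: given $\theta \in \ssym{\x_B}$, the factorization from canonicity provides a preimage. It is injective: if $\theta_+ \circ \theta_- = \theta'_+ \circ \theta'_-$ with all four factors endosymmetries of $\rep{\x}_B$, then both are negative-then-positive factorizations (with $\rep{\x}_B$ in the middle) of the same endosymmetry, so by the \emph{uniqueness} clause of Definition \ref{def:canonical} we get $\theta_- = \theta'_-$ and $\theta_+ = \theta'_+$. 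Taking cardinalities on both sides yields exactly $\card{\ssym{\x_B}} = \card{\nsym{\x_B}} \times \card{\psym{\x_B}}$.

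I do not anticipate a serious obstacle here: the statement is essentially a restatement of canonicity at the level of cardinalities, and the only subtlety is making sure the map lands in $\ssym{\x_B}$ (which is immediate, as endosymmetries compose) and that one invokes the \emph{uniqueness} half of the canonicity condition for injectivity — which is precisely why the definition was set up to demand uniqueness of the factorization rather than mere existence. One should also note that the general decomposition lemma (Lemma 3.19 of \cite{cg2}) alone would not suffice, since it does not guarantee that the middle configuration is again $\rep{\x}_B$; it is exactly the extra strength of canonicity that makes the two factors genuine endosymmetries, and hence elements of $\nsym{\x_B}$ and $\psym{\x_B}$ respectively.
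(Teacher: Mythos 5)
Your proof is correct and takes the same route the paper intends: the paper simply states this as an ``obvious consequence of the definition of canonicity,'' and you have spelled out exactly the bijection $(\theta_-, \theta_+) \mapsto \theta_+ \circ \theta_-$ with surjectivity from the existence half and injectivity from the uniqueness half of Definition~\ref{def:canonical}. Your closing observation — that Lemma~3.19 of \cite{cg2} alone would not put $\rep{\x}_B$ back in the middle, which is the entire point of canonicity — is precisely the right thing to flag.
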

\begin{proof}
Obvious consequence of the definition of canonicity.
\end{proof}

Indeed this is almost the definition of canonicity, which states that every
endosymmetry on $\rep{\x}_B$ factors uniquely as the composition of a positive and a
negative endosymmetries of $\rep{\x}_B$. Almost as obvious is the following fact:

\begin{lemma}\label{lem:elim_sym}
For any $\sigma : A \stackrel{S}{\to} B, \tau : B \stackrel{T}{\to} C$, 
$\x_A \in \sconf{A}$, $\x_B \in \sconf{B}$, and $\x_C \in \sconf{C}$, 
\begin{eqnarray*}
\card{\pswit_\sigma^+(\x_A, \x_B)} &=& \card{\nsym{\x_A}} \times
\card{\wit_\sigma^+(\x_A, \x_B)} \times \card{\psym{\x_B}}\\
\card{\pswint^+_{\tau \inter \sigma}(\x_A, \x_B, \x_C)} &=& 
\card{\nsym{\x_A}} \times \card{\wint^+_{\tau \inter \sigma}(\x_A, \x_B, \x_C)}
\times \card{\psym{\x_C}}
\end{eqnarray*}
\end{lemma}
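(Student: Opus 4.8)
## Proof plan

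The plan is to establish both identities by exhibiting explicit bijections, using the definitions of $\pswit^+$ and $\pswint^+$ as sets of witnesses tagged with chosen symmetries on the game. The key observation is that a tagged witness in $\pswit^+_\sigma(\x_A, \x_B)$ is a triple $(\theta^A_-, x^S, \theta^B_+)$ where $x^S \in \wit^+_\sigma(\x_A, \x_B)$ already records $x^S_A \sym_A^- \rep{\x}_A$ and $x^S_B \sym_B^+ \rep{\x}_B$, so the extra data amounts to a \emph{specific choice} of a negative symmetry $\theta^A_- : x^S_A \sym_A^- \rep{\x}_A$ and a specific choice of a positive symmetry $\theta^B_+ : x^S_B \sym_B^+ \rep{\x}_B$.

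First I would fix, for every $x \in \x_A$ with $x \sym_A^- \rep{\x}_A$, one reference negative symmetry $n_x : x \sym_A^- \rep{\x}_A$, and similarly for every $y \in \x_B$ with $y \sym_B^+ \rep{\x}_B$ one reference positive symmetry $p_y : y \sym_B^+ \rep{\x}_B$. Given such a reference, composing with $n_x^{-1}$ (resp. $p_y^{-1}$) transports an arbitrary negative symmetry $\theta^A_- : x \sym_A^- \rep{\x}_A$ to the endosymmetry $n_x^{-1} \circ \theta^A_-$ — but wait, this lands in $\nsym{x}$, not $\nsym{\rep{\x}_A}$. The correct transport is $\theta^A_- \mapsto \theta^A_- \circ n_x^{-1} \in \nsym{\rep{\x}_A}$, which is a bijection from $\{\theta^A_- \mid \theta^A_- : x \sym_A^- \rep{\x}_A\}$ onto $\nsym{\x_A}$, with inverse $\varphi_- \mapsto \varphi_- \circ n_x$. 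This is well-typed precisely because $\nsym{-}$ is closed under composition and inversion as a group. Hence for each fixed $x^S \in \wit^+_\sigma(\x_A, \x_B)$, the set of valid pairs $(\theta^A_-, \theta^B_+)$ is in bijection with $\nsym{\x_A} \times \psym{\x_B}$, where the second factor comes from the analogous transport $\theta^B_+ \mapsto p_{x^S_B}^{-1} \circ \theta^B_+ \in \psym{\rep{\x}_B}$ (note the order: here the codomain is fixed at $\rep{\x}_B$, so we precompose with the inverse reference; either way it is a group bijection). Summing — or rather, taking the product, since the choice of $x^S$ is independent of the choice of tags — yields the first identity. The second identity is proved the same way, replacing $x^S$ by an interaction witness $x^T \inter x^S \in \wint^+_{\tau \inter \sigma}(\x_A, \x_B, \x_C)$ and noting that the tagging data is again a negative symmetry on $A$ and a positive symmetry on $C$, each contributing $\nsym{\x_A}$ and $\psym{\x_C}$ respectively, while the $B$-component of an element of $\wint^+$ is not tagged and plays no role.

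The main obstacle I anticipate is purely bookkeeping: getting the direction of composition right so that transported symmetries genuinely land in the \emph{endosymmetry group of the canonical representative} $\nsym{\rep{\x}_A}$ (which is what the notation $\nsym{\x_A}$ abbreviates), rather than in $\nsym{x^S_A}$ which could be a different group — recall the warning in the text that $x \sym_A y$ does not imply $\nsym{x} \cong \nsym{y}$. Since $\rep{\x}_A$ is fixed as the codomain of all the negative symmetries involved, precomposing with a fixed reference $n_{x^S_A}^{-1}$ (going $\rep{\x}_A \to x^S_A$) followed by the variable $\theta^A_-$ (going $x^S_A \to \rep{\x}_A$) gives an honest element of $\nsym{\rep{\x}_A}$; this does not need canonicity, only that $\sym_A^-$ is a sub-symmetry closed under composition. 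It is worth remarking — as the text hints with "in this section we have not used canonicity" — that this lemma, unlike Lemma \ref{lem:canonical}, does \emph{not} rely on representativity of the games beyond the bare existence of the fixed representatives $\rep{\x}$; canonicity enters only when one wants to \emph{recombine} $\nsym{\x_B} \times \psym{\x_B}$ back into $\ssym{\x_B}$.
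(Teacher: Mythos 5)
Your proof plan is essentially the paper's: fix a reference symmetry of the correct polarity from each admissible configuration to the canonical representative, and transport the tag $(\theta^A_-, \theta^B_+)$ along these references to a pair of endosymmetries of $\rep{\x}_A$ and $\rep{\x}_B$; you also correctly observe that canonicity of the representatives plays no role in this particular lemma. One slip worth flagging: for the $B$-component you write $\theta^B_+ \mapsto p_{x^S_B}^{-1}\circ\theta^B_+$, but with $p_{x^S_B}\colon x^S_B \sym_B^+ \rep{\x}_B$ and $\theta^B_+\colon x^S_B \sym_B^+ \rep{\x}_B$ that composite is an endosymmetry of $x^S_B$, not of $\rep{\x}_B$; what you actually describe in words (``precompose with the inverse reference'') is $\theta^B_+ \circ p_{x^S_B}^{-1}\in\psym{\rep{\x}_B}$, which is simply the $A$-side formula transposed (the two sides have the same shape: varying source, fixed codomain $\rep{\x}_{(-)}$). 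The slip happens to be harmless for the cardinality count, since $\psym{x^S_B}\cong\psym{\rep{\x}_B}$ by conjugation along the \emph{positive} symmetry $p_{x^S_B}$, but it is precisely the kind of mix-up the paper's warning about $\nsym{-}$ and $\psym{-}$ depending on the concrete representative is meant to guard against, so it is worth getting exactly right.
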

\begin{proof}
We only detail the first equality, the reasoning for the other is identical.
Let us choose, for every $x \in \x_B$ such that $x \sym_B^+ \rep{\x}_B$, some
positive symmetry $\kappa^B_x : \rep{\x}_B \sym_B^+ x$. Likewise we choose, for
each $y \in \x_A$ such that $y \sym_A^- \rep{\x}_A$, some $\kappa^A_y : \rep{\x}_A
\sym_A^- y$.

Now, we form the function:
\[
\begin{array}{rcrcl}
G &:& \pswit_\sigma^+(\x_A, \x_B) &\to & \nsym{\x_A} \times \wit_\sigma^+(\x_A, \x_B)
\times \psym{\x_B}\\
&&(\theta_-^A, x^S, \theta_+^B) &\mapsto& 
(\theta_-^A \circ \kappa^A_{x^S_A}, ~x_S,~\theta_+^B \circ \kappa^B_{x^S_B})
\end{array}
\]
which is clearly a bijection as positive and negative symmetries are invertible.
\end{proof}

\subsection{Wrapping up}

Finally, we are now in position to prove:

\begin{theorem}
Consider $\sigma : A \stackrel{S}{\to} B$ and $\tau : B \stackrel{T}{\to} C$ that do
not deadlock, and assume that $B$ is representable. 
Then, for all $\x_A \in \sconf{A}, \x_C \in \sconf{C}$, we have
\[
(\coll (\tau \odot \sigma))_{\x_A, \x_C} = 
\sum_{\x_B \in \sconf{B}} (\coll \sigma)_{\x_A, \x_B} \cdot (\coll \tau)_{\x_B,
\x_C}\,.
\]
\end{theorem}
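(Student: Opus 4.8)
The plan is to reduce the statement to the combinatorial identity already packaged in Corollary~\ref{cor:main}, and then to do the bookkeeping with the auxiliary group cardinalities. Unfolding the definition of the collapse, $(\coll(\tau\odot\sigma))_{\x_A,\x_C} = \card{\wit^+_{\tau\odot\sigma}(\x_A,\x_C)}$, $(\coll\sigma)_{\x_A,\x_B} = \card{\wit^+_\sigma(\x_A,\x_B)}$ and $(\coll\tau)_{\x_B,\x_C} = \card{\wit^+_\tau(\x_B,\x_C)}$, so the goal is the equality $\card{\wit^+_{\tau\odot\sigma}(\x_A,\x_C)} = \sum_{\x_B\in\sconf{B}}\card{\wit^+_\sigma(\x_A,\x_B)}\cdot\card{\wit^+_\tau(\x_B,\x_C)}$ in $\overline{\mathbb{N}}$. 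First I would transport the left-hand count to the interaction: since the bijection $\phi$ of Lemma~\ref{lem:main_pcov} preserves the projections to $A$ and to $C$, it restricts to a bijection $\wint^+_{\tau\inter\sigma}(\x_A,\x_C)\simeq\wit^+_{\tau\odot\sigma}(\x_A,\x_C)$, whence $\card{\wit^+_{\tau\odot\sigma}(\x_A,\x_C)} = \card{\wint^+_{\tau\inter\sigma}(\x_A,\x_C)}$; then Lemma~\ref{lem:part_int} rewrites this count as $\sum_{\x_B}\card{\wint^+_{\tau\inter\sigma}(\x_A,\x_B,\x_C)}$.

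It then remains to prove, for each fixed $\x_B$, the fibrewise identity $\card{\wint^+_{\tau\inter\sigma}(\x_A,\x_B,\x_C)} = \card{\wit^+_\sigma(\x_A,\x_B)}\cdot\card{\wit^+_\tau(\x_B,\x_C)}$. I would obtain this from Corollary~\ref{cor:main} — the unique place the no-deadlock hypothesis is consumed, via Lemmas~\ref{lem:qbip1} and~\ref{lem:qbip2} — which gives $\card{\pswit^+_\sigma(\x_A,\x_B)}\cdot\card{\pswit^+_\tau(\x_B,\x_C)} = \card{\pswint^+_{\tau\inter\sigma}(\x_A,\x_B,\x_C)}\cdot\card{\ssym{\x_B}}$. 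Into this I substitute the identities of Lemma~\ref{lem:elim_sym} for $\card{\pswit^+_\sigma(\x_A,\x_B)}$, for $\card{\pswit^+_\tau(\x_B,\x_C)}$ (the same identity applied to $\tau$), and for $\card{\pswint^+_{\tau\inter\sigma}(\x_A,\x_B,\x_C)}$, together with $\card{\ssym{\x_B}} = \card{\nsym{\x_B}}\cdot\card{\psym{\x_B}}$ from Lemma~\ref{lem:canonical} — the single point where representability of $B$ is used. After substitution both sides carry the common factors $\card{\nsym{\x_A}}$, $\card{\nsym{\x_B}}$, $\card{\psym{\x_B}}$ and $\card{\psym{\x_C}}$; cancelling them yields the fibrewise identity, and then summing over $\x_B$ and chaining back through Lemmas~\ref{lem:part_int} and~\ref{lem:main_pcov} delivers the theorem.

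The only step requiring genuine care — and the place I would be most careful — is the cancellation, since in $\overline{\mathbb{N}}$ one cannot cancel arbitrary common factors. It is legitimate here because each factor being cancelled is the order of a group of endo-symmetries of a \emph{finite} configuration, hence a finite, strictly positive integer; and multiplication by a finite positive integer $a$ is injective on $\overline{\mathbb{N}}$: from $aX = aY$ one reads off that $X$ and $Y$ are simultaneously finite or both $+\infty$, and in the finite case ordinary cancellation applies. Beyond that, there is no real obstacle: essentially all the substance lies in the machinery already established — Lemmas~\ref{lem:qbip1}, \ref{lem:qbip2} and Corollary~\ref{cor:main} for the synchronization count up to symmetry, and Lemma~\ref{lem:elim_sym} for stripping away the positive and negative reindexing freedom — so this final argument is a short, if slightly delicate, computation over $\overline{\mathbb{N}}$.
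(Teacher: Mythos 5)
Your proof is correct and follows essentially the same route as the paper's: unfold the collapse to witness counts, pass to the interaction via Lemma~\ref{lem:main_pcov}, partition over $\x_B$ via Lemma~\ref{lem:part_int}, and then combine Corollary~\ref{cor:main}, Lemma~\ref{lem:elim_sym}, and Lemma~\ref{lem:canonical} to close the fibrewise count. Your explicit justification of the cancellation step in $\overline{\mathbb{N}}$ (finiteness and positivity of the endosymmetry-group orders) is a welcome bit of extra care that the paper elides by writing the intermediate quantities as fractions.
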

\begin{proof}
We calculate
\begin{eqnarray}
(\coll(\tau \odot \sigma))_{\x_A, \x_C} &=&
\card{\wit_{\tau \odot \sigma}^+(\x_A, \x_C)}
\label{eq4}\\
&=& \card{\wint^+_{\tau \inter \sigma}(\x_A, \x_C)}
\label{eq5}\\
&=& \sum_{\x_B \in \sconf{B}} \card{\wint^+_{\tau \inter \sigma}(\x_A, \x_B, \x_C)}
\label{eq6}\\
&=& \sum_{\x_B \in \sconf{B}} \frac{\card{\pswint^+_{\tau
\inter \sigma}(\x_A, \x_B, \x_C)}}{\card{\nsym{A}}\cdot \card{\psym{C}}}
\label{eq7}\\
&=& \sum_{\x_B \in \sconf{B}} \frac{\card{\pswit^+_{\sigma}(\x_A, \x_B)} \cdot
\card{\pswit^+_\tau(\x_B, \x_C)}}{\card{\nsym{A}}\cdot \card{\ssym{B}} \cdot
\card{\psym{C}}}
\label{eq8}\\
&=& \sum_{\x_B \in \sconf{B}} \frac{\card{\psym{\x_B}}\cdot
\card{\nsym{\x_B}}}{\card{\ssym{\x_B}}} \cdot \card{\wit^+_{\sigma}(\x_A, \x_B)}
\cdot \card{\wit^+_\tau(\x_B, \x_C)}
\label{eq9}\\
&=& \sum_{\x_B \in \sconf{B}}
\card{\wit^+_{\sigma}(\x_A, \x_B)} \cdot \card{\wit^+_\tau(\x_B, \x_C)}
\label{eq10}\\
&=& \sum_{\x_B \in \sconf{B}} (\coll \sigma)_{\x_A, \x_B} \times (\coll \tau)_{\x_B,
\x_C}\label{eq11}
\end{eqnarray}
where \eqref{eq4} is by definition, \eqref{eq5} is by Lemma \ref{lem:main_pcov},
\eqref{eq6} is by Lemma \ref{lem:part_int}, \eqref{eq7} is by Lemma
\ref{lem:elim_sym}, \eqref{eq8} is by Corollary \ref{cor:main}, \eqref{eq9} is by
Lemma \ref{lem:elim_sym} again, \eqref{eq10} is by Lemma \ref{lem:canonical}
exploiting that $B$ is representable; and \eqref{eq11} is by definition.
\end{proof}

This concludes the proof of \eqref{eq1}. 

\section{Epilogue: back to symmetry classes}
\label{sec:epilogue}

To conclude, we show that our original notion of witness based on symmetry
classes rather than canonical representatives was, in fact, wrong. We give two
counter-examples: the first is geared towards simplicity, while the second aims to
bring the counter-example as close as possible to usual models of programming
languages. The two examples are, however, powered by the same phenomenon.

\begin{example}
Consider the games $A^\perp, C = \done$ formed of only one positive move. The game $B =
\ominus_1 \ominus_2 \oplus$ has three moves, with $\ominus_1$ and $\ominus_2$
symmetric. We consider two strategies:
\[
\xymatrix@C=5pt@R=2pt{
\sigma &:& A & \stackrel{S}{\to} & B\\
&&&&\ominus_1
	\ar@{-|>}[ddll]
	\ar@{-|>}[dd]&
\ominus_2
	\ar@{-|>}[ddlll]
	\ar@{-|>}[dd]\\\\
&&\done&&\oplus
	\ar@{~}[r]&\oplus
}
\qquad
\qquad
\xymatrix@C=0pt@R=2pt{
\tau &:& A & \stackrel{T}{\to} & C\\
&\oplus_1 & \oplus_2\\
&&\ominus\ar@{-|>}[drr]\\
&&&&\done
}
\]

These are indeed valid strategies in the sense of \cite{cg2}. Their composition is:
\[
\xymatrix@C=-2pt@R=0pt{
\tau \odot \sigma &:& A &\stackrel{T\odot S}{\to}&& C\\
&&\done &&\done
	\ar@{~}[rr]&&
\done
}
\]

In particular, the non-deterministic choice on the right hand side originates from
the choice by $\sigma$: to which $\ominus_i$ should it react? In particular,
\[
\card{\wit_{\tau\odot \sigma}(\{\done\}, \{\done\})} = 2\,,
\]
as the two occurrences of $\done$ on the right are not symmetric (this boils down to
the fact that $\oplus_1$ and $\oplus_2$ cannot be symmetric in $\tau$, by thinness). 
On the other hand, the only symmetry class of $B$ on which these two may interact is
$\{\ominus_1, \ominus_2, \oplus\}$. And we have:
\[
\card{\wit_{\sigma}(\{\done\}, \{\ominus_1, \ominus_2, \oplus\})} = 1
\qquad
\qquad
\card{\wit_{\tau}(\{\oplus_1, \oplus_2, \ominus\}, \{\done\})} = 1
\]

In particular, the two configurations of $\sigma$ responsible for the
non-deterministic choice
\[
\xymatrix@C=5pt@R=2pt{
\sigma &:& A & \stackrel{S}{\to} & B\\
&&&&\ominus_1
        \ar@{-|>}[ddll]
        \ar@{-|>}[dd]&
\ominus_2
        \ar@{-|>}[ddlll]\\\\
&&\done&&\oplus
}
\qquad
\qquad
\xymatrix@C=5pt@R=2pt{
\sigma &:& A & \stackrel{S}{\to} & B\\
&&&&\ominus_1
        \ar@{-|>}[ddll]&
\ominus_2
        \ar@{-|>}[ddlll]
        \ar@{-|>}[dd]\\\\
&&\done&&&\oplus
}
\]
are symmetric, so they form only one symmetry class and are counted only once in
$\wit_\sigma(\{\done\}, \{\ominus_1, \ominus_2, \oplus\})$ -- whereas they are two
distinct elements of $\wit_\sigma^+(\{\done\}, \{\ominus_1, \ominus_2, \oplus\})$.

This also shows that it is \emph{not} the case that configurations in
$\wit_\sigma^+(x)$ are canonical representatives of symmetry classes -- they are
better than that, as they get it right where symmetry classes get it wrong.
\end{example}

We now show essentially the same example in a more ``programming language'' style.

\begin{example}\label{ex:counter_ex2}
Consider a basic game $o$, with a unique move $\qu^-$.
Consider strategies:
\[
\xymatrix@R=5pt@C=-2pt{
\sigma &:& \oc o &\stackrel{S}{\to}& (\oc o & \lin& o) &\lin & \oc o & \lin & o\\
&&&&&&&&&&\qu^-
	\ar@{-|>}[dllll]\\
&&&&&&\qu^+
	\ar@{-|>}[dll]
	\ar@{.}@/^/[urrrr]\\
&&&&\qu^-_{\grey{i}}
	\ar@{.}@/^/[urr]
	\ar@{-|>}[dll]
	\ar@{-|>}[drrrr]\\
&&\qu^+_{\grey{i}}&&&&&&
\qu^+_{\grey{i}}
	\ar@{.}@/^/[uuurr]
}
\qquad
\xymatrix@R=5pt@C=0pt{
\tau &:& ((\oc o& \lin & o) &\lin &\oc o& \lin & o) & \stackrel{T}{\to}& \oc o & \lin
& o\\
&&&&&&&&&&&&\qu^-
	\ar@{-|>}[dllll]\\
&&&&&&&&\qu^+
	\ar@{-|>}[dllll]
	\ar@{-|>}[dll]\\
&&&&\qu^-
	\ar@{.}@/^/[urrrr]
	\ar@{-|>}[dlll]
	\ar@{-|>}[dll]&&
\qu^-_{\grey{i}}
	\ar@{.}@/^/[urr]
	\ar@{-|>}[drrrr]\\
&\qu^+_{\grey{0}}
	\ar@{.}@/^/[urrr]&
\qu^+_{\grey{1}}
	\ar@{.}@/^/[urr]&&&&&&&&
\qu^+_{\grey{i}}
	\ar@{.}@/^/[uuurr]
}
\]

Note that the moves on the left are only there to ensure the $+$-covered hypothesis.
Their composition is:
\[
\xymatrix@R=5pt@C=2pt{
\tau \odot \sigma &:& \oc o & \stackrel{T\odot S}{\to} & \oc o & \lin & o\\
&&&&&&\qu^-
	\ar@{-|>}[dlllll]
	\ar@{-|>}[dllll]
	\ar@{-|>}[dll]
	\ar@{-|>}[dl]\\
&\qu^+_{\grey{0}}&
\qu^+_{\grey{1}}&&
\qu^+_{\grey{0}}
	\ar@{.}@/^/[urr]&
\qu^+_{\grey{1}}
	\ar@{.}@/^/[ur]
}
\]

Now, as for the previous example, we observe:
\[
\card{\wit_{\tau\odot \sigma}([\qu, \qu], [\qu] \lin \qu)} = 2
\]
using an intersection type like notation for symmetry classes on the game, which
hopefully is clear. The reader may check that there is a unique
symmetry class on $(\oc o \lin o ) \lin \oc o \lin o$ on which the strategies may
match to produce this via $+$-covered configurations, namely
\[
([\qu, \qu] \lin \qu) \lin [\qu] \lin \qu
\]
in the same intersection-type like notation. And we have
\[
\card{\wit_\sigma([\qu, \qu], ([\qu, \qu] \lin \qu) \lin [\qu] \lin \qu)} = 1
\qquad
\card{\wit_\tau(([\qu, \qu] \lin \qu) \lin [\qu] \lin \qu), ([\qu], \qu))} = 1
\]
for the same reason as in the previous example.
\end{example}

These strategies are not quite terms but they are very well-behaved, in particular
visible and parallel innocent. This counter-example does not quite contradict the
claims of \cite{lics18} because there strategies are more constrained (in particular
they are well-bracketed) and the positions of interest (matching the
points of the web) are complete. It is plausible that this makes this pathology
disappear -- in particular Example \ref{ex:counter_ex2} exploits non-well bracketed
behaviour, but this is pure speculation. In any case concrete witnesses as developped
here are definitely better behaved, and are recommended in all situations.

\paragraph{Acknowledgments.} This work is supported by ANR project DyVerSe
(ANR-19-CE48-0010-01) and Labex MiLyon (ANR-10-LABX-0070) of Universit\'e de Lyon,
within the program ``Investissements d'Avenir'' (ANR-11-IDEX-0007), operated by the
French National Research Agency (ANR). 

\bibliographystyle{plain}
\bibliography{main}

\end{document}